\documentclass[11pt,lettersize]{article}
\usepackage[utf8]{inputenc}
\usepackage{cite}
\usepackage{fullpage}
\usepackage{pifont}
\usepackage[leftcaption]{sidecap}
\usepackage{makecell}
\usepackage{bm}

\usepackage{standalone}
\usepackage{tikz}
\usetikzlibrary{positioning, arrows.meta}
\usepackage{cite}
\usepackage[table]{xcolor}

\usepackage{amssymb}
\usepackage{amsmath}
\usepackage{amsthm}
\usepackage{graphicx}
\usepackage{multicol}
\usepackage{hyperref}
\hypersetup{colorlinks,allcolors=blue}
\usepackage{thm-restate}

\usepackage[ruled,lined,linesnumbered,noend,noresetcount]{algorithm2e}
\usepackage{mathtools}

\newtheorem{theorem}{Theorem}

\newtheorem{lemma}[theorem]{Lemma}
\newtheorem{proposition}[theorem]{Proposition}

\newcommand{\cmark}{\ding{51}} 
\newcommand{\xmark}{\ding{55}} 

\newcommand{\set}[1]{\left\{#1 \right\}}
\DeclarePairedDelimiter{\parens}{(}{)}
\DeclarePairedDelimiter{\tup}{\langle}{\rangle}

\newcommand{\emptycell}{\mathsf{EMPTY}}
\newcommand{\tombstone}{\mathsf{TOMBSTONE}}
\newcommand{\flaginsert}{\mathsf{tentative}}
\newcommand{\flagmember}{\mathsf{final}}
\newcommand{\flagrestart}{\mathsf{revalidate}}
\newcommand{\flagcell}{\mathsf{marked}}
\newcommand{\deleted}{\mathsf{DELETED}}
\newcommand{\collided}{\mathsf{COLLIDED}}
\newcommand{\abort}{\mathsf{ABORT}}
\newcommand{\CAS}{\mathsf{CAS}}

\newcommand{\LL}{\mathsf{LL}}
\newcommand{\SC}{\mathsf{SC}}
\newcommand{\Read}{\mathsf{Read}}
\newcommand{\Modify}{\mathsf{Modify}}
\newcommand{\insr}{\mathsf{insert}}
\newcommand{\del}{\mathsf{delete}}
\newcommand{\lookup}{\mathsf{lookup}}
\newcommand{\var}[1]{\mathit{#1}}
\newcommand{\func}[1]{\mathsf{#1}}
\newcommand{\ds}[1]{\mathrm{#1}}
\newcommand{\delrel}{~{\leadsto}_{\mathsf{del}}~}
\newcommand{\delrelstar}{~{\leadsto}^{*}_{\mathsf{del}}~}
\newcommand{\mop}{\mathsf{mop}}
\newcommand{\nmop}{\mathsf{nop}}
\newcommand{\pt}{\mathit{pt}}
\newcommand{\lin}{\mathit{lin}}

\newcommand{\tssymbol}{\blacksquare}

\newcommand{\hide}[1]{}

\SetKwProg{Function}{function}{}{end}

\makeatletter
\newcommand{\nosemic}{\renewcommand{\@endalgocfline}{\relax}}
\newcommand{\dosemic}{\renewcommand{\@endalgocfline}{\algocf@endline}}
\let\oldnl\nl
\newcommand{\nonl}{\renewcommand{\nl}{\let\nl\oldnl}}
\makeatother

\title{Space-Efficient Lock-Free Linear-Probing Hash Table}

\author{
  Hagit Attiya\\
  Technion, Israel\\
  \texttt{hagit@cs.technion.ac.il}
  \and
  Rotem Oshman\\
  Tel-Aviv University, Israel\\
  \texttt{roshman@tau.ac.il}
  \and
  Noa Schiller\\
  Tel-Aviv University, Israel\\
  \texttt{noaschiller@mail.tau.ac.il}
}
\date{}

\begin{document}

\maketitle

\begin{abstract}
Linear probing is one of the simplest and most space-efficient approaches to
hash table design, and is widely used in sequential settings due to its compact memory layout.
However, designing a \emph{concurrent} linear-probing hash table with strong liveness guarantees has proved difficult, and only a handful of such algorithms have been proposed, 
all of which either restrict concurrency or rely
on large per-entry metadata, thereby compromising space efficiency.

We present a lock-free linear-probing hash table with wait-free lookups that retains the core advantages of sequential linear probing while handling contention gracefully.
Our design uses only a small amount of metadata per table entry:
a constant number of additional bits when using $\LL/\SC$,
or a logarithmic number of bits when using $\CAS$.
The algorithm is linearizable and lock-free, supports insert, delete, and wait-free lookup operations, and is able to safely reclaim space used by deleted elements without rebuilding the table.

We analyze the amortized step complexity of our hash table assuming no concurrent insertions of the same key, and show that each operation has expected amortized step complexity matching that of sequential linear probing, up to the point contention per key.
\end{abstract}

{
\hypersetup{linkcolor=black}
\small \tableofcontents
}
\clearpage

\section{Introduction}
\label{sec:intro}

\emph{Linear probing} is an approach to hash table design
where the table consists of a contiguous array of memory cells,
and each key $v$ is stored either in its hash location $h(v)$
or, if this location is taken,
in the first free cell following $h(v)$.
To search for key $v$, we start at location $h(v)$
and scan forward, until we either find $v$
or we reach an empty cell.
This simple design avoids pointer indirection and benefits from contiguous memory layout.
When the parameters are tuned properly, a typical operation may access only a single cache line.

While there is a vast literature on \emph{sequential} linear probing hash tables, only a few \emph{concurrent} hash tables are based on linear probing~\cite{PurcellHarris05,GaoGH05,HIHashTableSTOC25,Click2007,MaierSD19},
or indeed, on any form of \emph{open addressing},
where the hash table is one contiguous array, and any key can be stored in any memory location, so that the hash table is ``full'' only when the entire array is full.
This may be due to the fact that implementing a lock-free linear-probing hash table in the concurrent setting is much more challenging than in the sequential setting.
To our knowledge, virtually all concurrent hash tables proposed so far either use pointers and indirection (e.g.,~\cite{HerlihyShavitBook2nd,ShalevShavit06,Michael02,LockFreeCuckoo}),
rely on locks to coordinate concurrent relocation of keys (e.g.,~\cite{HerlihyShavitTzafrir08}),
limit concurrency (e.g.,~\cite{ShunBlellochSPAA14}),
require frequent rebuilding (e.g.,~\cite{GaoGH05,MaierSD19}),
and/or use heavyweight concurrency control mechanisms that make them space-inefficient (e.g.,~\cite{PurcellHarris05}).

Much of the appeal of linear probing comes from its compact, contiguous memory layout:
accessing the hash table requires no indirection and exploits the structure of modern memory hierarchies.
Storing large descriptors, timestamps, or auxiliary pointers in each cell undermines these advantages by increasing space usage and memory traffic.
In this paper we present the first lock-free linear-probing hash table that
uses \emph{bounded per-cell metadata},
and safely reuses memory previously used by deleted elements.
Our hash table enjoys nearly all the benefits of sequential linear probing while handling contention gracefully,
and we prove that its expected amortized step complexity nearly matches that of \emph{sequential} linear probing.

We give two versions of our hash table:
one using the $\LL/\SC$ synchronization primitive
and constant metadata per hash table cell,
and one using $\CAS$ and logarithmic metadata per cell.
Let $U$ be the size of the key domain,
$n$ be the number of processes,
and $m$ be the size of the hash table (i.e., the maximum number of keys that may be in the dictionary simultaneously); 
our main result is the following:


\begin{restatable}{theorem}{MainThm}
\label{thm:main}
There is a lock-free linearizable concurrent linear-probing hash table
with wait-free $\lookup$s,
where each memory cell stores
$\lceil \log U \rceil + O(1)$ bits when using the $\LL/\SC$ primitive; or
$\lceil \log U \rceil + \min\{\lceil \log m \rceil, \lceil \log n \rceil\} + O(1)$ bits when using the $\CAS$ primitive.
\end{restatable}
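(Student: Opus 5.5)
The proof is constructive: I would give the algorithm explicitly, then prove space bound, linearizability, lock-freedom, and wait-freedom of lookups in that order.

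\textbf{The algorithm and the space bound.} Each cell holds either a key (using $\lceil \log U \rceil$ bits), $\emptycell$, or $\tombstone$. It also holds a constant number of status bits, using the flags the paper names: $\flaginsert$, $\flagmember$, $\flagrestart$, $\flagcell$.

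An $\insr(v)$ proceeds as follows:
\begin{enumerate}
\item It scans from $h(v)$ and claims the first free cell, writing $v$ with the $\flaginsert$ flag.
\item It rescans the probe interval from $h(v)$ to check that no earlier copy of $v$ exists (in either state).
\item It then promotes the cell to $\flagmember$, or retracts it, using a deterministic tie-breaking rule. For example, the copy at the earliest position wins.
\end{enumerate}
A concurrent modification inside the interval sets $\flagrestart$, which forces the inserter to rescan.

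A $\del(v)$ marks the cell ($\flagcell$) and then physically compacts the run in the style of sequential backward-shift deletion. Each move of a later key into a vacated slot is itself a marked, two-phase step, so that a concurrent lookup never misses a key that is being moved. This lets tombstones be reclaimed without rebuilding the table.

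With $\LL/\SC$ no ABA protection is needed, because $\SC$ fails on any intervening write. The cell therefore needs only $\lceil\log U\rceil+O(1)$ bits. With $\CAS$ we add a per-cell tag to defeat ABA, and there are two ways to bound its size:
\begin{itemize}
\item Store the id of the last writing process, giving $\lceil\log n\rceil$ bits.
\item Store a bounded position/offset counter, giving $\lceil\log m\rceil$ bits. This works because a displaced key is at most $m$ positions from its hash location.
\end{itemize}
Choosing whichever is smaller gives the $\min$ term.

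\textbf{Linearizability.} I would fix linearization points:
\begin{itemize}
\item a successful insert at the $\flaginsert\to\flagmember$ transition;
\item a delete at its marking step;
\item a failed insert, or a lookup, at the read that witnessed the decisive state.
\end{itemize}
The central invariant is a concurrent analogue of the linear-probing invariant: every $\flagmember$, unmarked key $v$ lies in a cell $i$ such that no cell in the cyclic interval $[h(v),i)$ is $\emptycell$, and there is at most one such copy of $v$. I would prove by induction over steps that each successful $\SC$/$\CAS$ preserves this invariant. The delicate case is the shift step of a delete, where one must show that the moved key stays reachable from its hash location throughout.

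\textbf{Lookups and lock-freedom.} A lookup scans at most $m$ cells and never retries. On reaching a marked cell it helps only by reading, for example by reading the key being moved in its new location, so each lookup takes $O(m)$ of its own steps and is wait-free.

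For lock-freedom I would argue that every failed $\SC$/$\CAS$ by an update is caused by a successful one by another operation. I would then show that no infinite sequence of successful modifications can occur without some operation completing: each helping step advances a finite protocol (tentative, final or retracted; marked, shifted, cleared), and each restart is charged to a completed step.

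\textbf{Main obstacle.} The hardest part is the interaction between delete-shifting and concurrent inserts of the same key. A shifted key could create a transient duplicate or cause a scan to miss a key. I would address this first by always copying a key forward before clearing its old cell, and by having inserters treat marked copies as present. This bounded handshake has to be carried out with only $O(1)$ flag bits.
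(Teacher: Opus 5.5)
Your outline (give the algorithm, then prove the space bound, linearizability and progress) has the right overall shape, but the proposal does not establish the theorem. It sketches a different algorithm, and the steps that carry all of the difficulty are either left unspecified or would fail.

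\textbf{Backward-shift deletion.} This is the central gap. Moving a key $w$ from cell $j$ back into a vacated cell $i$ changes two cells, and this cannot be made atomic with properly-paired $\LL/\SC$ or single-word $\CAS$. Whatever marking you use, there is some moment after which $w$ is recorded only in cell $i$. A $\lookup(w)$ that makes a single forward pass and never retries (as yours does) can read cell $i$ before $w$ arrives there and cell $j$ after that moment. It then returns \textit{false} for a key that was never deleted.

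Moreover, the shift must sometimes move a \emph{tentative} copy of $v$; otherwise the hole left behind cuts that copy off from $h(v)$. Then an inserter no longer owns a fixed cell, although your rescan/promote/retract step and any ownership tag presuppose one. You give no $O(1)$-bit handshake that resolves this. The paper notes that earlier concurrent shifting designs needed improperly-paired $\LL/\SC$ and double-width cells, or restricted concurrency.

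The paper avoids shifting altogether. It reuses tombstones, accepts transient duplicate copies, and resolves them with the $\flaginsert/\flagrestart/\collided/\deleted$ protocol. In that protocol each insert keeps a fixed cell, and surviving copies move only toward $h(v)$ (Lemmas~\ref{lem:insertseq-earlier} and~\ref{lem:insertseq}). The backward scan exploits exactly this monotonicity.

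\textbf{Linearization points.} This step fails independently of deletion, because your fixed linearization points are inconsistent. You linearize a successful insert at $\flaginsert\to\flagmember$. Consider an operation whose response is justified only by a copy that is still $\flaginsert$: an insert that loses the tie-break and returns \textit{false}, or a lookup that returns \textit{true}. At its own read, $v$ is absent under your linearization, so its justification depends on a future winner. Nothing in your proposal guarantees such a winner will exist. In the paper's scenario, the owner of that copy is about to withdraw it in favour of a still-earlier copy that is then deleted.

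The missing idea is the paper's forced presence. Such an operation writes $\tup{v,\flagrestart}$, which stops the owner from withdrawing and makes it revalidate. The operation is then linearized after an insert whose linearization point may even lie after the operation's own response. Proving that this insert exists and does not follow the operation is the work of insert sequences and Lemmas~\ref{lem:not-stale-del} and~\ref{lem:nop+}, together with appending a pending insert at the end of the execution.

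\textbf{The $\CAS$ tag.} Your argument for the extra bits is wrong. A last-writer process id does not prevent ABA, since the same process can write the same value twice. An offset from $h(v)$ is determined by the key and the cell, so it is identical across rewrites and distinguishes nothing.

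In the paper the extra $\min\{\lceil\log m\rceil,\lceil\log n\rceil\}$ bits serve a narrow purpose. The state $\tup{\tup{v,j},\flagcell}$ names the eliminating insert by its fixed cell $j$ (equivalently, its process). The subsequent $\CAS$ to $\collided$ can therefore hit only the very copy that this insert reserved, and only after it has re-read its own copy as $\flaginsert$. This emulates $\LL/\SC$ for that single step, and it again relies on inserts owning fixed cells.
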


We analyze the expected amortized runtime of operations in our hash table when there are no concurrent insertions of the same key.
This assumption is used only to simplify the runtime analysis, and is not needed for linearizability or lock-freedom.
We show that in any execution that includes at most $(1-1/x)m$ insertions, with no two insertions of the same key overlapping in time and at most $c$ operations on any given key overlapping in time, the expected amortized runtime of each operation is $O(x^2 + c)$.
This matches the classical analysis of linear probing due to Knuth~\cite{Knuth1963OpenAddressing}, up to an additive \emph{point contention} term of $c$.

To our knowledge, ours is the first lock-free linear-probing hash table that uses bounded memory cells and does not require rebuilding the table, 
as long as the number of
cells occupied by keys
does not exceed the size of the table.
Concurrent insertions of the same key may temporarily take up space in the hash table, but this space is made available for reuse immediately upon the return of the redundant insert operations.

\paragraph{Using tombstones in concurrent linear-probing hash tables.}
A key design choice in implementing a linear-probing hash table is how to handle deletions.
One option is to shift elements to fill the resulting hole; however, shifting is difficult to implement correctly under concurrency, and synchronizing it often requires substantial metadata
(for example, in~\cite{HIHashTableSTOC25}).
Another option is to use \emph{tombstones},
a special symbol that marks
cells previously occupied by a deleted element, 
so that lookups continue traversing past them.
However, tombstones may accumulate and degrade performance unless the table is periodically rebuilt (as in, e.g.,~\cite{GaoGH05,MaierSD19}).

Our hash table uses tombstones, but it allows inserts to \emph{reuse} tombstones, effectively treating them as empty cells where new keys can be inserted.
This introduces a new challenge: concurrent inserts of the same key may observe different tombstones and temporarily create multiple copies of the same key in the table (we give an example of this in Figure~\ref{fig:alg}
below).
Our main algorithmic contribution is a mechanism that resolves such conflicts using only bounded metadata per cell.



In our hash table, 
operations perform steps only ``on their own behalf'' and do not attempt to help other operations to complete.
Rather than ensuring that other inserts of the same key  complete, 
operations are allowed to withdraw and return before ascertaining which copy is the ``final winner''.
This helps us avoid large metadata, but it requires delicate synchronization between concurrent operations on the same key.
For example,
a challenging aspect of our design is that an ultimately \emph{unsuccessful} insert operation may force the key to be treated as present by subsequent operations, even though there has not been a successful insert yet.
The linearizability proof addresses this challenge by introducing the notion of 
 \emph{insert sequences}, defined as chains of insert operations where the tentative copy of one operation is eliminated by the next in the sequence. 
The final copy in the sequence is successful (i.e., returns \textit{true}).
Even with this notion in hand, ordering insert operations is subtle,
because there can be several insert sequences \emph{on the same key} that overlap in time.
In particular, \emph{unsuccessful} insert operations are not necessarily ordered after the ``final winner'' in their own insert sequence.

\section{Additional Discussion of Related Work}
\label{sec:related}

Concurrent hash tables have been studied extensively, with many
designs targeting scalability, non-blocking progress, and dynamic
resizing.
Most modern concurrent hash tables rely on chaining,
bucket-based layouts, or other forms of indirection, often combined
with fine-grained locking or lock-free pointer-based synchronization
(e.g.,~\cite{Michael02,ShalevShavit06,HerlihyShavitBook2nd}).

Only a few concurrent hash tables 
exist that use linear probing, since open addressing severely constrains synchronization.
Table~\ref{tab:lp_comparison} compares  
our implementations and previous work.
Next we briefly discuss the implementations
that appear in the table.

{
    \renewcommand{\arraystretch}{1.5}
\begin{table}[tb]
    \centering
    \small
    \begin{tabular}{| c | c | c | c | c | c |}
        \hline
         & \makecell{\textbf{Memory} \\ \textbf{Primitive}} & \makecell{\textbf{Deletion} \\ \textbf{Method}} & \makecell{\textbf{Cell} \\ \textbf{Reuse?}} &
         \textbf{Cell Size} \\
        \hline
        \cite{PurcellHarris05} & $\CAS$ & Probe Bounding & \cmark
        & Unbounded \\
        \hline
        \cite{GaoGH05,MaierSD19} & $\CAS$ & Tombstones & \xmark 
        & $\lceil \log U + 2 \rceil$ \\
        \hline
        \cite{HIHashTableSTOC25} & {\makecell{{Improper} \\ {$\LL/\SC$}}} & Shifting & \cmark
        & $2 \lceil \log U + 1 \rceil + 2$ \\
        \hline
        \cite{ShunBlellochSPAA14} & $\CAS$ &
        \makecell{Shifting\\ 
        \emph{\footnotesize{(limited concurrency)}}}
            & \cmark 
        & $\lceil \log U+1 \rceil$ \\
        \hline
        \rowcolor{gray!20}
        \textbf{This paper} &  {\setlength{\fboxsep}{0pt}\colorbox{gray!20}{\makecell{         {\textbf{Proper}} \\ {\textbf{$\LL/\SC$}}}}}  & \textbf{Tombstones} & \cmark
        & {\bm{$\lceil \log U + 1 \rceil + 2$}} \\
        \hline
        \rowcolor{gray!20}
        \textbf{This paper} & \textbf{$\CAS$}  & \textbf{Tombstones} & \cmark
        & {\bm{$\lceil \log U + 1 \rceil + \lceil \log n \rceil + 2$}} \\
        \hline
    \end{tabular} 
      
    \vspace{4pt}
    \caption{Comparison of concurrent linear probing implementations, detailing the synchronization primitive used for each one ($\CAS$, improperly-paired $\LL/\SC$, or properly-paired $\LL/\SC$),
    the deletion method,
    whether or not the hash table is able to re-use cells after deletion,
    and the size of each memory cell in terms of the key domain size $U$.
    Cell sizes are reported for the core data structure and do not include auxiliary space used for resizing.}
    \label{tab:lp_comparison}
\end{table}
}

Purcell and Harris~\cite{PurcellHarris05} present one of the earliest
lock-free open-addressing hash tables.
Their algorithm uses quadratic probing and, instead of marking deletions with tombstones, stores an upper bound on the probe length at each cell. 
Their use of quadratic probing appears orthogonal to the synchronization mechanisms used to manage duplicate copies.

Like ours, the algorithm of~\cite{PurcellHarris05} allows multiple copies of the same key to coexist
temporarily; however,
it resolves conflicts using helping together with rich
per-cell metadata.
Duplicate copies are eliminated dynamically based on probe order:
when an insertion encounters an earlier copy, it withdraws its own
copy and helps complete the earlier one.
As a result, insert operations return only after ensuring that some copy has been finalized.
This helping mechanism relies on unbounded timestamps to avoid the
ABA problem.

Removing timestamps completely from the hash table of~\cite{PurcellHarris05} appears difficult
due to the requirements of the helping mechanism.
In particular, because
helping may require tracking and finalizing \emph{multiple} tentative
copies simultaneously,
a na\"ive attempt to replace timestamps by $\LL/\SC$ would result in
\emph{improperly-paired}
$\LL/\SC$ reservations,
where several $\LL$ reservations are simultaneously pending. 
This is incompatible with current architectures that \emph{allow only one active $\LL$ reservation at a time}.
(For example, the ARM architecture explicitly prohibits mismatched addresses between its $\LL/\SC$ equivalents, 
and violating this results in unpredictable behavior~\cite[Section~B2.12.5]{arm_ddi0487_mb}.)
One might also think of using process identifiers to help manage the ABA problem instead of using timestamps,
but this makes it difficult to distinguish between old and new copies created by the same process.

In other lock-free open-addressing implementations that use tombstones to
mark deletions~\cite{GaoGH05,MaierSD19},
tombstones are not reused (or are reused only for the same key), preventing duplicate copies from arising during
overlapping insertions of the same key.
This restriction simplifies synchronization and eliminates the need for per-entry metadata, but causes tombstones to accumulate over time, increasing table occupancy beyond the number of stored keys and requiring periodic rebuilding.
Our construction is the first lock-free linear-probing hash table to combine tombstone reuse with bounded per-cell metadata.
We believe that the dynamic resizing mechanism integrated in these concurrent open-addressing hash tables can be used with our hash table.


As mentioned in Section~\ref{sec:intro},
the alternative to tombstones is to shift keys backwards
to take up the space freed up by deleted keys.
Concurrent linear probing with backward-shift deletion is significantly harder
as it requires updating multiple cells in the hash table.
In~\cite{HIHashTableSTOC25}, a concurrent Robin Hood hash table
uses improperly-paired $\LL/\SC$ to emulate two-word synchronization primitives;  
this hash table also stores two keys in each cell, 
halving its capacity.
Shun and Blelloch~\cite{ShunBlellochSPAA14} implement backward-shift deletion in the
\emph{phase-concurrent} setting, where inserts and deletes are \emph{not allowed} to overlap.

While a $\CAS$-based version of our hash table
can be obtained by using a general-purpose simulation 
to replace $\LL/\SC$ by $\CAS$,
this would incur high memory overhead:
these simulations either increase the size of each memory cell
by at least $n$ bits~\cite{Moir97,IsraeliRappoportPODC94},
or utilize auxiliary memory arrays, thereby reducing data locality~\cite{blellochWei20,AndersonMoir95,JayantiPetrovic03}.
In contrast,
the $\CAS$-based version of our hash table requires
only $\lceil \log n \rceil$ additional bits per cell compared to the $\LL/\SC$ version.

\section{Preliminaries}
\label{sec:prelim}


We use the standard asynchronous shared memory model, in which $n$ processes communicate through shared \emph{memory cells}.
The shared memory is accessed by executing \emph{primitive operations}.
The $\LL$/$\SC$ primitive supports the following operations:
$\LL(x)$ returns the value stored in
memory cell $x$, and $\SC(x, \mathit{new})$ writes the value $\mathit{new}$ to $x$,
if it was not written since the 
last time the process performed an $\LL(x)$ operation (otherwise, $x$ is not modified). $\SC$ returns \textit{true} if it writes successfully, and \textit{false} otherwise.
$\LL/\SC$ operations should be \emph{properly-paired}:
an $\LL(x)$ is eventually followed by $\SC(x,\_)$ on the same memory cell $x$, without an intervening $\LL$ or $\SC$ to a different memory cell.
The $\CAS$ primitive supports the operation $\CAS(x, \mathit{old}, \mathit{new})$,
which atomically checks if the current value of the memory cell $x$ is $\mathit{old}$, and if so, replaces it by $\mathit{new}$ and returns \textit{true};
otherwise, the operation leaves the value unchanged,
and returns \textit{false}.
Memory cells also support read operations.


An implementation of an abstract data type specifies,
for each process, a program for each operation;
when receiving an \emph{invocation} of an operation,
the process takes \emph{steps} according to its program. 
Each step by a process consists of some local computation,
followed by a single primitive operation on the shared memory.
The process may change its local state after a step, and it
may return a \emph{response} to the operation.

A \emph{configuration} $C$ specifies the state of every process and every memory cell.
An \emph{execution} $\alpha$ is an alternating (finite or infinite) sequence of configurations and steps, starting with a configuration. 
Operation $o_1$ \emph{precedes} operation $o_2$ in $\alpha$, if $o_1$'s response precedes $o_2$'s invocation in $\alpha$; in this case, we also say 
that $o_2$ \emph{follows} $o_1$.

An execution $\alpha$ induces a \emph{history} $H(\alpha)$, 
consisting only of the invocations and responses of higher-level operations.
An invocation \emph{matches} a response if they both belong to the same operation.
An operation \emph{completes} in $H$
if $H$ includes both the invocation and response of the operation;
if $H$ includes the invocation of an operation, but no matching response, 
then the operation is \emph{pending}.
A history $H$ is \emph{sequential} if every invocation is immediately 
followed by a matching response.

A \emph{completion} of history $H$ is a history $H'$ whose prefix 
is identical to $H$, and whose suffix includes zero or more responses 
of pending operations in $H$. 
Let $\mathrm{comp}(H)$ be the set of all of $H$'s completions. 
A sequential history $H'$ is a \emph{linearization} of an execution $\alpha$ if (1) $H'$ is a permutation of a history in $\mathrm{comp}(H(\alpha))$, (2) $H'$ matches the sequential specification of the implemented object, and (3) $H'$ respects the real-time order of non-overlapping operations in $H(\alpha)$.
An execution $\alpha$ is \emph{linearizable} if it has a linearization, 
and an implementation of an abstract object is linearizable 
if all of its executions are linearizable.

An implementation is \emph{lock-free} if whenever there is a pending operation, 
 some operation returns in a finite number of steps of all processes; it is \emph{wait-free} if whenever there is a pending operation by process $p$, 
this operation returns in a finite number of steps by $p$.
\section{The Hash Table Construction} 
\label{sec:alg}


A hash table implements a \emph{dictionary}, 
which represents an unordered set of elements,
and supports the operations 
$\insr(v)$, $\del(v)$ and $\lookup(v)$.
Each operation takes a \emph{key} $v$ and returns \emph{true} or \emph{false}:
$\lookup(v)$ and $\del(v)$ return \emph{true}
if $v$ is in the set,
while $\insr(v)$ returns \emph{true} if $v$ is not in the set.
An $\insr$ may return a special value, $\abort$, indicating that no available slot was found to insert the key (i.e., the hash table is ``full''). 
Aborts are expected to be a rare event and they should trigger a rebuild of the hash table using standard techniques, which we do not treat here
 (see Section~\ref{sec:resize} for further discussion).
$\abort$s do not affect the logical state of the hash table;
in terms of the sequential semantics, we model them by allowing $\insr$ operations to nondeterministically return $\abort$ at any point, without modifying the table.


\subsection{High-Level Overview}
\label{sec:alg overview}

The main challenge in a concurrent linear-probing hash table that reuses tombstones
is handling overlapping insertions of the same key,
because such insertions may temporarily create multiple copies of the key in the table.
We must handle these copies in a way that ensures that
\emph{exactly one copy} eventually survives (barring deletion of the key). 

The key invariant underlying the algorithm is that surviving copies of a key
can move only toward earlier positions in the probe sequence beginning at $h(v)$.
Intuitively, among multiple copies of a key $v$,
copies that are closer to $h(v)$ have priority over later copies.
This probe-order priority prevents cycles of competing eliminations.
A similar approach was taken in~\cite{PurcellHarris05},
but here we avoid helping and large metadata.

To realize the invariant,
an insert operation first places a \emph{tentative copy} of the key in the table.
Tentative copies may subsequently be finalized, withdrawn, or eliminated by competing operations.
Operations searching for the key treat tentative copies as logically present,
even before the surviving copy is determined.
This behavior is essential for linearizability, and is formalized later using the notion of \emph{insert sequences} (see Section~\ref{sec:insert-seq}).

We focus first on the version of the algorithm that uses the $\LL/\SC$ primitive,
in which each memory cell stores a key together with only a constant number of tag bits. 
At the end of Section~\ref{sec:code} we show how to modify the hash table to use $\CAS$, by storing an index of a cell or a process within each cell (an additional $O(\min(\log m, \log n))$ bits).

\paragraph{Using metadata to handle duplicates.}

We use constant-size metadata to manage the competition between overlapping $\insr$s, as well as the interaction with concurrent $\del$ and $\lookup$ operations on the same key.
A key $v$ is first inserted into the table as a \emph{tentative copy},
$\langle v, \flaginsert \rangle$.
The tentative copy reserves the cell for the inserting operation, although other operations may still modify the metadata in the cell in order to communicate with the inserting process.
There are several ways the insertion can proceed from here (see Figure~\ref{fig:life-cycle}):

\sidecaptionvpos{figure}{t}
\begin{SCfigure}[40][tb] 
    \centering
    \begin{tikzpicture}[
    >={Stealth[scale=1.2]},
    node distance=1.5cm and 2.5cm,
    every node/.style={font=\small, inner sep=4pt},
    connection/.style={->, line width=0.6pt}
]

    \node (start) {$\tup{v,\flaginsert}$};

    \node (del) [below=1cm of start] {$\deleted$};
    \node (col) [left=0.2cm of del] {$\collided$};

    \node (mem)  [right=0.2cm of del] {$\tup{v,\flagmember}$};

    \node (restart)  [right=0.2cm of mem] {$\tup{v,\flagrestart}$};
    
    \node (tomb) [below=1cm of del] {$\tombstone$};

    \draw[connection] (start) -- (col);
    \draw[connection] (start) -- (del);
    \draw[connection] (start) -- (restart);
    \draw[connection, double, double distance=1.2pt] (start) -- (mem);

    \draw[connection, double, double distance=1.2pt] (col) -- (tomb);
    \draw[connection, double, double distance=1.2pt] (del) -- (tomb);

    \draw[connection] (restart) to[bend left=30] (del);

    \draw[connection, double, double distance=1.2pt] (restart) to[bend right=25] (start);

    \draw[connection, double, double distance=1.2pt] (start) .. controls +(-4, -0.5) and +(-4, 0.5) .. (tomb);

\end{tikzpicture}
    \caption{The life cycle of a tentative copy of key $v$. Double lines indicate changes that can only be made by the ``owner'', i.e.,
    the insert operation that created the tentative copy.
    In addition,
    the value $\collided$ is  written only by a concurrent $\insr(v)$ operation,
    and $\deleted$ only by a concurrent $\del(v)$ operation.}
    \label{fig:life-cycle}
\end{SCfigure}

\begin{itemize}
    \item The insertion can be finalized by having the original insert operation write  $\langle v, \flagmember \rangle$.
    \item The tentative copy can be withdrawn by writing $\tombstone$ into the cell. Only the original inserting operation can do this, either upon request from another operation (described next) or upon finding another copy of the same key closer to location $h(v)$.
    \item A different $\insr(v)$ or $\del(v)$ operation can \emph{eliminate} a copy of $v$ by writing $\collided$ or $\deleted$, 
    respectively. This will cause the original inserting operation to clean up the cell and convert it into a $\tombstone$.
    The value written to the cell determines the response of the inserting operation: \textit{true} for $\deleted$ and \textit{false} for $\collided$.
    \item A different process carrying out $\lookup(v)$ or $\insr(v)$ can also signal the inserting operation 
    that it has \emph{seen} the tentative copy and ``relied on it'',
    by writing    
    $\tup{v, \flagrestart}$.
    This causes the original inserting operation to
    stop trying to withdraw its tentative copy if it is currently doing so,
    and to revalidate it
    by checking again for duplicates (see below).
\end{itemize}

\paragraph{Validating a tentative copy by checking for duplicates.}
After writing a tentative copy of its key into some location $i$, an $\insr$ operation re-scans from location $h(v)$ to the end of the run
(the maximal contiguous sequence of non-$\emptycell$ cells starting from $h(v)$),
searching for other copies of $v$.
If a competing copy of $v$ is found \emph{before} location $i$ (i.e., between $h(v)$ and $i$),
then the other copy ``wins'', and the insert operation attempts to withdraw its tentative copy and convert it into a tombstone.
This may fail if another process signals that it has seen the tentative copy, causing the operation to restart the validation process.
If a competing tentative copy of $v$ is found in some location $j$ \emph{after} location $i$,
then the copy in the earlier location $i < j$ ``wins'', and the inserting process \emph{eliminates} the copy in location $j$ by writing $\collided$ into location $j$ (see more on this process below).
This, too, may fail --- if the copy in location $j$ is already finalized before the value $\collided$ can be written, then location $j$ becomes the final copy,
and the $\insr$ operation for location $i$ will give up and try to delete its own tentative copy.

Although we refer to them as \emph{tentative copies},
if there is a tentative copy of key $v$ in the table,
all later operations must behave as though $v$ is logically present in the table.
The ``tentative'' aspect is only regarding \emph{which copy will win}, not regarding \emph{whether $v$ is inserted}.


Because each tentative copy can only (try to) eliminate other tentative copies that precede it in the hash table, 
a chain of attempted inserts of the same key $v$ that eliminate each other goes
\emph{backwards} through the hash table, towards location $h(v)$.
We refer to such chains as \emph{insert sequences},
and the fact that surviving copies can only move ``up the table'' towards $h(v)$ is crucial to the correctness of the algorithm.
Figure~\ref{fig:alg} illustrates how a copy moves towards $h(v)$.

\paragraph{Safely eliminating tentative copies.}
A critical part of the algorithm is the safe elimination of later tentative copies of the same key.
Before eliminating another copy of $v$,
an $\insr(v)$ operation must ensure that its own
tentative copy has not been concurrently deleted:
without this check, an $\insr$ that already ``lost'' (i.e., been deleted by a concurrent $\del(v)$ operation,
or had its copy eliminated by a concurrent $\insr(v)$)
could erroneously eliminate 
tentative copies belonging to future insertions that began after it had already lost.
This could lead to \emph{no copies of $v$ surviving}, even though ``logically'', $v$ has been re-inserted into the table after being deleted.


\paragraph{Why revalidation is required.}
As we said above, a $\lookup(v)$ or $\insr(v)$ operation that encounters a tentative copy of $v$ 
will signal the inserting operation to revalidate, i.e., stop withdrawing the copy if it is currently doing so, and check again for duplicates.
Subsequently, the $\lookup$ or $\insr$ operation treats the tentative copy of $v$ as though it were finalized --- $\lookup(v)$ returns \emph{true}, and $\insr(v)$ returns \emph{false}.

Requesting revalidation is necessary because the $\lookup$ or $\insr$ operation has no way of knowing whether the tentative copy that it has seen is about to be withdrawn.
For example, suppose operation $I = \insr(v)$ has already written its tentative copy into location $i$,
but then found another tentative copy in location $h(v) < j < i$
and decided to withdraw its copy in location $i$.
Before $I$ can write a tombstone to location $i$,
the preceding copy of $v$ in location $j$ is deleted.
Then a $\lookup(v)$ or $\insr(v)$ operation $O$ begins scanning forward from $h(v)$ and finds the copy of $v$ in location $i$, even though the key is no longer logically in the set.
Operation $O$ has no way of knowing whether operation $I$ is about to finalize its copy or withdraw it, and does not wait around to see;
instead, it treats the key as though it is bound to be finalized, and signals to $I$ that it has done so by requesting revalidation.
This ``resurrects'' the key in location $i$ and causes $I$ to resume trying to finalize it.

\sidecaptionvpos{figure}{t}
\begin{SCfigure}[50][tb] 
\centering
  \begin{tabular}{l@{}l@{}} 

    (a) \hspace{0.2cm} &
    \begin{tabular}{|p{3ex}|p{2.5ex}|p{2.5ex}|p{2.5ex}|p{2.5ex}|p{2.5ex}|p{2.5ex}|p{3ex}|}
    \multicolumn{1}{c}{} & \multicolumn{1}{c}{} & \multicolumn{1}{c}{} & \multicolumn{1}{c}{} & \multicolumn{1}{c}{} & \multicolumn{1}{c}{} & \multicolumn{1}{c}{$I_1$} & \multicolumn{1}{c}{} \\
    \hline
    \centering $\ldots$ & \centering $\ast$ &
     \centering $\ast$ &  \centering $\ast$ &  \centering $\tssymbol$ &  \centering $\ast$ &   & 
    {\centering $\ldots$} \tabularnewline
    \hline
    \end{tabular} 
    \\[2.5ex]

    (b) \hspace{0.2cm} &
    \begin{tabular}{|p{3ex}|p{2.5ex}|p{2.5ex}|p{2.5ex}|p{2.5ex}|p{2.5ex}|p{2.5ex}|p{3ex}|}
    \multicolumn{1}{c}{} & \multicolumn{1}{c}{} & \multicolumn{1}{c}{} & \multicolumn{1}{c}{} & \multicolumn{1}{c}{$I_2$} & \multicolumn{1}{c}{} & \multicolumn{1}{c}{$I_1$} & \multicolumn{1}{c}{} \\
    \hline
    \centering $\ldots$ & \centering $\tssymbol$ &
     \centering $\ast$ &  \centering $\ast$ &  \centering $\tssymbol$ &  \centering $\ast$ &   & 
    {\centering $\ldots$} \tabularnewline
    \hline
    \end{tabular} 
    \\[2.5ex]

    (c) \hspace{0.2cm} &
    \begin{tabular}{|p{3ex}|p{2.5ex}|p{2.5ex}|p{2.5ex}|p{2.5ex}|p{2.5ex}|p{2.5ex}|p{3ex}|}
    \multicolumn{1}{c}{} & \multicolumn{1}{c}{$I_3$} & \multicolumn{1}{c}{} & \multicolumn{1}{c}{} & \multicolumn{1}{c}{$I_2$} & \multicolumn{1}{c}{} & \multicolumn{1}{c}{$I_1$} & \multicolumn{1}{c}{} \\
    \hline
    \centering $\ldots$ & \centering $v?$ &
     \centering $\ast$ &  \centering $\ast$ &  \centering $v?$ &  \centering $\ast$ &  \centering $v?$  & 
    {\centering $\ldots$} \tabularnewline
    \hline
    \end{tabular} 
    \\[2.5ex]

    (d) \hspace{0.2cm} &
    \begin{tabular}{|p{3ex}|p{2.5ex}|p{2.5ex}|p{2.5ex}|p{2.5ex}|p{2.5ex}|p{2.5ex}|p{3ex}|}
    \multicolumn{1}{c}{} & \multicolumn{1}{c}{$I_3$} & \multicolumn{1}{c}{} & \multicolumn{1}{c}{} & \multicolumn{1}{c}{$I_2$} & \multicolumn{1}{c}{} & \multicolumn{1}{c}{} & \multicolumn{1}{c}{} \\
    \hline
    \centering $\ldots$ & \centering $v?$ &
     \centering $\ast$ &  \centering $\ast$ &  \centering {\tiny $\mathsf{COL}$} &  \centering $\ast$ &  \centering $\tssymbol$  & 
    {\centering $\ldots$} \tabularnewline
    \hline
    \end{tabular} 
    \\[2.5ex]

    (e) \hspace{0.2cm} &
    \begin{tabular}{|p{3ex}|p{2.5ex}|p{2.5ex}|p{2.5ex}|p{2.5ex}|p{2.5ex}|p{2.5ex}|p{3ex}|}
    \multicolumn{1}{c}{} & \multicolumn{1}{c}{$I_3$} & \multicolumn{1}{c}{} & \multicolumn{1}{c}{} & \multicolumn{1}{c}{} & \multicolumn{1}{c}{} & \multicolumn{1}{c}{} & \multicolumn{1}{c}{} \\
    \hline
    \centering $\ldots$ & \centering $v$ &
     \centering $\ast$ &  \centering $\ast$ &  \centering $\tssymbol$ &  \centering $\ast$ &  \centering $\tssymbol$  & 
    {\centering $\ldots$} \tabularnewline
    \hline
    \end{tabular} 
    \\[2.5ex]

  \end{tabular}

  \caption{ Three concurrent insertions, $I_1$, $I_2$ and $I_3$, of the same key $v$; cells occupied by keys other than $v$ hold $\ast$.
  (a) $I_1$ prepares to write a tentative copy; deletion creates a tombstone.
  (b) $I_2$ finds a tombstone and prepares to write a tentative copy; deletion creates another tombstone.
  (c) $I_3$ finds a tombstone; all three operations write  tentative copies ($v?$).
  (d) $I_1$ sees copy written by $I_3$ and withdraws; $I_3$ sees copy written by $I_2$ and eliminates it by writing $\collided$.
  (e) $I_2$ cleans up by writing a tombstone; $I_3$ promotes its copy to final.
  }
  \label{fig:alg}
\end{SCfigure}

\subsection{Detailed Description}
\label{sec:code}

We now describe how these ideas are realized in the code.
We provide two versions of the algorithm. 
The first uses the $\LL$/$\SC$ primitive to safely verify and delete tentative copies. 
The second uses the more common $\CAS$ primitive, 
emulating the required behavior with a small amount of additional
per-cell metadata. Both implementations have the same high-level structure, and we encapsulate the use of the two different synchronization primitives by using a few primitive-specific functions (Algorithm~\ref{alg:ll_sc_code} for $\LL$/$\SC$ and Algorithm~\ref{alg:cas_code} for $\CAS$) and by using the keywords $\Read$ and $\Modify$ in the code.
When using $\LL$/$\SC$,
the keyword $\Read$ is an invocation of $\LL$,
and $\Modify$ is an $\SC$;
in the $\CAS$ version,
$\Read$ is standard atomic read,
and $\Modify$ is an invocation of $\CAS$.

We use modular arithmetic that wraps at the $m$-th cell, and omit the notation $(\mathrm{mod} \medspace m)$.

\paragraph{Data structures.}
The table is a single array $\ds{table}[0 \dots m-1]$,
where each cell stores either a tagged key $\langle v, \mathit{tag} \rangle$ where $\mathit{tag} \in \set{ \flaginsert, \flagmember, \flagrestart }$, 
or one of the values 
$\emptycell$, $\tombstone$ representing an available cell,
or one of a few internal cleanup states
($\deleted$, $\collided$, and in the $\CAS$ version,
$\tup{\tup{v,j},\flagcell}$, where $j$ is the index of a cell).
Using the two tag bits, one reserved sentinel value is sufficient to encode the four states that do not contain a key. Thus,
these values can be encoded using $\lceil \log (U+1) \rceil + 2$ bits
when using $\LL$/$\SC$, and $\lceil \log(U+1) \rceil + \min(\lceil \log n \rceil, \lceil \log m \rceil) + 2$ bits for $\CAS$.
In the code, the function 
$\func{val}(x)$ extracts the key from encoded cell $x$ (or $\bot$), 
and $\func{restart}(x)$ checks whether the owner of a tentative copy 
should validate it, which holds if $x = \tup{v,\flagrestart}$ or $x = \tup{\tup{v,\ast},\flagcell}$ (with $\CAS$). 

\paragraph{Searching for a key.}
All operations begin by 
searching for an existing copy of key $v$.
First, operations execute a \emph{forward scan},
which begins at location $h(v)$ and scans forward until it either finds $v$ in the hash table or reaches the end of the run (an $\emptycell$ cell).
If the key was not found during the forward scan,
operations then execute a \emph{backward scan} from the end of the run back to $h(v)$.
Since a sequence of tentative copies that eliminate one another moves back through the table towards location $h(v)$,
a backward scan ensures that at least one of these copies is eventually encountered, provided at least one remains in the table.

At any point during the forward and the backward scan, if key $v$ is found,
then action may be taken, depending on the type of the operation:
a $\del(v)$ operation will try to delete the key, and
a $\lookup(v)$ or $\insr(v)$ operation return \emph{true} or \emph{false},
respectively.
We remark that ``finding key $v$'' does not merely amount to finding a cell where a copy of key $v$ is stored:
if the copy that is found is not finalized (i.e., not $\langle v, \flagmember \rangle$) 
and the searching operation is $\lookup(v)$ or $\insr(v)$,
then the searching operation will request that the original inserting operation  revalidate its copy, by writing $\langle v, \flagrestart \rangle$ into the cell (if no other process already requested this).

Operations $\del(v)$ and $\lookup(v)$ return after completing their forward and backward scan, but for $\insr(v)$ the work has only begun.
At the end of the backward scan, upon returning to location $h(v)$,
an $\insr(v)$ operation probes forward until it finds an available cell, either $\emptycell$ or $\tombstone$.
It then attempts to insert a tentative copy of key $v$ by writing $\langle v, \flaginsert \rangle$, and if successful, checks for duplicates (as explained above).

\paragraph{Scanning.}
The forward scan and the backward scan are detailed in Algorithm~\ref{alg:scan}.
The forward scan from $h(v)$ proceeds until encountering $\emptycell$
or completing a full cycle.
At each cell that contains the key $v$,
the scan invokes either
$\func{validate\_copy}$ (for insert/lookup) or
$\func{try\_delete}$ (for delete);
these helper functions are described below,
but intuitively, $\func{validate\_copy}$ attempts to confirm the presence of a ``stable copy'' of the key, and returns \emph{true} if it succeeds;
$\func{try\_delete}$ attempts to convert the key into a tombstone or signal the inserting process to do so, returning a non-$\bot$ value if it succeeds.
If the forward scan sees no decisive result, it returns the index of the last non-empty cell in $v$'s probe run.
The backward scan then moves backwards toward $h(v)$ and searches for $v$, again invoking the
appropriate helper functions. 

\begin{algorithm}[tb]
{\small
    \DontPrintSemicolon
        \Function{$\func{forward\_scan}(v, \var{op\_type})$:}{
        $i \gets h(v)$ ;
        $\var{val} \gets \Read(\ds{table}[i])$ \;
        \While{$\var{val} \neq \emptycell$}{
        \If{$\func{val}(\var{val}) = v$}{
            \lIf{$\var{op\_type} \in \set{\insr, \lookup}$ and $\func{validate\_copy}(v,\var{val},i)$}{
                \Return \textit{true}
            }
            \If{$\var{op\_type} = \del$}{
                $\var{rsp} \gets \func{try\_delete}(v,\var{val},i)$ \;
                \lIf{$\var{rsp} \neq \bot$}{
                    \Return $\var{rsp}$
                }
            }}
            $i \gets i + 1$ \;
            \lIf{$\var{i} = h(\var{v})$}{
                break
                \tcp*[f]{Scanned all cells in the table}
            }
            $\var{val} \gets \Read(\ds{table}[i])$ \;
        }
         \lIf{$i = h(v)$ and $\var{val} = \emptycell$}{
            \Return $i$
            \tcp*[f]{An empty run}
        }
        \lElse(\tcp*[f]{Go back one cell to the end of the run}){\Return $i - 1$}
        }
        \BlankLine
        \Function{$\func{backward\_scan}(v, i, \var{op\_type})$:}{
        \While{\textit{true}}{
            $\var{val} \gets \Read(\ds{table}[i])$ \;
            \If{$\func{val}(\var{val}) = v$}{
            \lIf{$\var{op\_type} \in \set{\insr, \lookup}$ and $\func{validate\_copy}(v,\var{val},i)$}{
                \Return \textit{true}
            }
            \If{$\var{op\_type} = \del$}{
                $\var{rsp} \gets \func{try\_delete}(v,\var{val},i)$ \;
                \lIf{$\var{rsp} \neq \bot$}{
                    \Return $\var{rsp}$
                }
            }}
            \lIf{$i = h(v)$}{
                \Return $\bot$
                \tcp*[f]{Back to the start}
            }
            $i \gets i - 1$
        }
        }
    \caption{forward and backward scan}
    \label{alg:scan}
    }
\end{algorithm}

\paragraph{Lookup and delete.}
Lookup and delete operations (Algorithm~\ref{alg:deletelookup})
both perform a forward scan and, 
if needed, a backward scan.
Upon finding a copy of $v$, they invoke a helper function ($\func{validate\_copy}$ or $\func{try\_delete}$, respectively; see Algorithm~\ref{alg:helper} below),
and if that helper function returns \textit{true}, the operation returns \textit{true}.

\begin{algorithm}[tb]
    \LinesNumbered
    \small
    \begin{minipage}[t]{0.46\textwidth}
        \DontPrintSemicolon
        \Function{$\del(v)$:}{
        \nl $\var{rsp} \gets \func{forward\_scan}(v, \del)$ \;
        \nl \lIf{$\var{rsp} \in \set{\textit{true}, \textit{false}}$}{\Return $\var{rsp}$}
        \nl $\var{rsp} \gets \func{backward\_scan}(v, \var{rsp}, \del)$ \;
        \nl \lIf{$\var{rsp} \in \set{\textit{true}, \textit{false}}$}{\Return $\var{rsp}$}
        \nl \Return \textit{false} \;
        }
    \end{minipage}
    \hfill
    \begin{minipage}[t]{0.43\textwidth}
        \DontPrintSemicolon
        \vspace{-0.63\baselineskip} \nl \Function{$\lookup(v)$:}{
        \nl $\var{rsp} \gets \func{forward\_scan}(v, \lookup)$ \;
        \nl \lIf{$\var{rsp} = \textit{true}$}{\Return \textit{true}}
        \nl $\var{rsp} \gets \func{backward\_scan}(v, \var{rsp}, \lookup)$ \;
        \nl \lIf{$\var{rsp} = \textit{true}$}{\Return \textit{true}}
        \nl \Return \textit{false}
        }
    \end{minipage}
    \caption{$\del$ and $\lookup$}
    \label{alg:deletelookup}
\end{algorithm}

\paragraph{Insert.}
An insert (Algorithm~\ref{alg:insert}) first checks for an existing copy via the forward and backward scans; if found (using the same criteria as a successful lookup),
it returns \textit{false}.
Otherwise, it probes for an $\emptycell$/$\tombstone$ cell and
attempts to write $\tup{v,\flaginsert}$.
It then rescans the run of~$v$: encountering an earlier or already inserted
copy causes the operation to delete its own tentative copy using $\func{del\_copy}$ and abort.
If $\func{del\_copy}$ returns $\bot$, this means another operation requested revalidation, so the key must be ``resurrected'' and revalidated.
Encountering a later tentative copy triggers $\func{del\_other\_copy}$,
which attempts to eliminate it (returning failure if this attempt is unsuccessful).
This attempt can fail if the later copy was already promoted to final, if the eliminating process's own tentative copy has been marked as deleted, or if another process has requested revalidation.
Upon reaching $\emptycell$, the operation tries to promote its tentative
copy to $\tup{v,\flagmember}$.
If its tentative copy has been marked $\flagrestart$, it repeats the
validation; if marked $\collided$ or $\deleted$, it cleans up and returns \textit{false} or \textit{true}, respectively.

\begin{algorithm}[tb]
{\small
        \DontPrintSemicolon
        \Function{$\insr(v)$:}{
        $\var{rsp} \gets \func{forward\_scan}(v, \insr)$ \;\label{lin:insert-start}
        \lIf{$\var{rsp} = \textit{true}$}{
            \Return \textit{false}
        }
        $\var{rsp} \gets \func{backward\_scan}(v, \var{rsp}, \insr)$ \;
        \lIf{$\var{rsp} = \textit{true}$}{
            \Return \textit{false}
        }
        $j \gets h(v)$ ;
        $\var{val} \gets \Read(\ds{table}[j])$ \;
        \While(\tcp*[f]{Try to insert $v$}){\textit{true}}{
            \If{$\var{val} \in \set{\emptycell,\tombstone}$ and $\Modify(\ds{table}[j], \var{val}, \tup{v,\flaginsert})$}{
            break
        }
        $\var{j} \gets \var{j} + 1$ \;
        \lIf(\tcp*[f]{Scanned all cells in the table}){$j = h(v)$}{
            \Return $\abort$
            \label{line:abort}
        }
        $\var{val} \gets \Read(\ds{table}[j])$ \;
        }
        \SetNoFillComment
        \BlankLine
        \tcp{search for duplicate copies}
        $i \gets h(v)$ ; 
        $\var{val} \gets \Read(\ds{table}[i])$
                \label{lin:start-search}
        \; 
        \While{$\var{val} \neq \emptycell$}{
            \If(\tcp*[f]{Skip own copy and cells without $v$}){$i\neq j$ and $\func{val}(\var{val}) = v$}{
            \If(\tcp*[f]{Other copy is earlier or final}){$i -  h(\var{v}) < j - h(\var{v})$ or $\var{val} = \tup{v, \flagmember}$}{
                $\var{rsp} \gets \func{del\_copy}(v, j)$ \;
                \lIf(\tcp*[f]{Revalidate copy}){$\var{rsp} = \bot$}{
                    goto Line~\ref{lin:start-search}
                }
                \Return $\var{rsp}$ \;
            }
            \ElseIf{$\var{val} \neq \tup{v,\flagrestart}$}{
                \If(\tcp*[f]{Try to eliminate other copy}){$\neg \func{del\_other\_copy}(v, \var{val}, i, j)$}{
                    $\var{cur\_val} \gets \Read(\ds{table}[j])$ \;
                    \If{$\func{restart}(\mkern-1mu\var{cur\_val}\mkern1mu)$ and $\Modify(\ds{table}[j], \var{cur\_val}, \tup{v,\flaginsert})$}{
                        goto Line~\ref{lin:start-search}
                        \tcp*[f]{Revalidate copy}
                    }
                    $\var{rsp} \gets \func{del\_copy}(v, j)$ \;
                    \lIf(\tcp*[f]{Revalidate copy}){$\var{rsp} = \bot$}{
                        goto Line~\ref{lin:start-search}
                    }
                    \Return $\var{rsp}$ \;
                }
             }
            }
            $i \gets i + 1$ \;
            \lIf{$i = h(v)$}{
                break
                \tcp*[f]{Scanned all cells in the table}
            }
            $\var{val} \gets \Read(\ds{table}[i])$ \;
        }
        $\var{cur\_val} \gets \Read(\ds{table}[j])$ \;
        \If{$\var{cur\_val} = \tup{v,\flaginsert}$ and $\Modify(\ds{table}[j], \var{cur\_val}, \tup{v,\flagmember})$}{
            \tcp{Try to fully insert own copy}
            \Return \textit{true}
        }
        \If{$\func{restart}(\mkern-1mu\var{cur\_val}\mkern1mu)$ and $\Modify(\ds{table}[j], \var{cur\_val}, \tup{v,\flaginsert})$}{
                goto Line~\ref{lin:start-search}
                \tcp*[f]{Revalidate copy}
            }
        $\var{rsp} \gets \func{del\_copy}(v, j)$ \;
        \lIf(\tcp*[f]{Revalidate copy}){$\var{rsp} = \bot$}{
            goto Line~\ref{lin:start-search}
        }
        \Return $\var{rsp}$ \;
        }
    \caption{$\insr$}
    \label{alg:insert}
    }
\end{algorithm}

\paragraph{Helper functions.}
The function $\func{validate\_copy}$
(used by $\lookup$ and $\insr$) returns \textit{true} if it identifies a finalized copy of $v$, marks a copy of $v$ for revalidation (or sees an existing request), or observes that a copy of $v$ was rewritten; it returns \textit{false} if re-reading the cell reveals that the copy of $v$ is no longer present.
If neither scan finds a copy of $v$ that is confirmed by $\func{validate\_copy}$, the lookup returns \textit{false}.
A $\del$ uses $\func{try\_delete}$, which transforms
$\tup{v,\flagmember}$ into $\tombstone$,
and overwrites a tentative
copy with $\deleted$, to request that their owner clean up.
The function $\func{try\_delete}$ returns \textit{true} if it successfully deletes the key, \textit{false} if it observes that another process deleted a finalized key (in which case the $\del$ operation can also return \textit{false}), and $\bot$ if a copy of $v$ is no longer present.
If neither the forward nor the backward scan deleted anything, the $\del$ operation returns
\textit{false}.

\begin{algorithm}[tb]
{\small
        \DontPrintSemicolon
         \Function{$\func{del\_copy}(v, j)$: \tcp*[f]{Deletes own copy and cleans the cell}}{
         \Repeat{$\Modify(\ds{table}[j],\var{val},\tombstone)$}{
            $\var{val} \gets \Read(\ds{table}[j])$ \;
            \If{$\var{val} = \tup{v,\flagrestart}$}{
            \lIf{$\Modify(\ds{table}[j],\var{val},\tup{v,\flaginsert})$}{
                \Return $\bot$
            }
             $\var{val} \gets \Read(\ds{table}[j])$ \;
         }
         }
        \lIf{$\var{val} = \deleted$}{
            \Return \textit{true}
        }
        \Return \textit{false} \;
        }
        \BlankLine
        \Function{$\func{try\_delete}(v,\var{val},i)$: \tcp*[f]{If the cell contains the key try to delete it}}{
        \While{$\func{val}(\var{val}) = v$}{
        \If{$\var{val} = \tup{v,\flagmember}$}{
            \Return $\Modify(\ds{table}[i], \var{val}, \tombstone)$
        }
        \Else{
            \lIf{$\Modify(\ds{table}[i], \var{val}, \deleted)$}{
            \Return \textit{true} 
            }
        }
        $\var{val} \gets \Read(\ds{table}[i])$
        }
        \Return $\bot$ \;
        }
        \BlankLine
        \Function{$\func{validate\_copy}(v,\var{val},i)$: \tcp*[f]{Detects $v$ and requests revalidation if needed}}{
        \lIf{$\var{val} \in \set{\tup{v,\flagmember},\tup{v,\flagrestart}}$}{
            \Return \textit{true}
        }
        \lIf{$\Modify(\ds{table}[i], \var{val}, \tup{v,\flagrestart})$}{
        \Return \textit{true}
            }
        $\var{val} \gets \ds{table}[i]$ \;
        \lIf{$\func{val}(\var{val}) = v$}{
            \Return \textit{true}
        }
        \Return \textit{false}
        }
    \caption{Helper functions}
    \label{alg:helper}
}
\end{algorithm}

\begin{algorithm}[tb]
{\small
        \DontPrintSemicolon
        \SetNoFillComment
        \Function{$\Read(\var{loc})$:}{
            \Return $\LL(\var{loc})$
        }
        \BlankLine
        \Function{$\Modify(\var{loc},\var{old}, \var{new})$:}{
            \Return $\SC(\var{loc},\var{new})$
        }
        \BlankLine
        \Function{$\func{del\_other\_copy}(\var{v}, \var{val}, \var{i}, \var{j})$:}{
         $\var{cur\_val} \gets \ds{table}[j]$ \;
                \If{$\var{cur\_val} = \tup{\var{v},\flaginsert}$}{
                    \If{$\neg \SC(\ds{table}[\var{i}], \collided)$}{
                        $\var{val} \gets \ds{table}[\var{i}]$ \;
                        \lIf{$\var{val} = \tup{v,\flagmember}$}{
                            \Return \textit{false}
                        }
                    }
                    \Return \textit{true} \;
                }
                \Return \textit{false} \;
        }
    \caption{$\LL$/$\SC$-specific code}
    \label{alg:ll_sc_code}
}
\end{algorithm}

\paragraph{LL/SC and CAS versions.}
Beyond replacing the keywords $\Read$ and $\Modify$ by $\LL$/$\SC$ or by atomic read and $\CAS$ (respectively) as explained above,
the part of the algorithm that eliminates duplicates is slightly different for the two primitives.
In the $\LL$/$\SC$ version (Algorithm~\ref{alg:ll_sc_code}), 
before writing over the losing copy using an $\SC$, the process checks that its own copy has not been eliminated between the $\LL$ and the $\SC$ to the other copy. This ensures the deleting process's copy was present in the table at the same time as the one eliminated.
If this $\SC$ fails,
the process checks if the other copy has been finalized,
and if so, it gives up on its own copy.
If the other copy has not been finalized,
the process continues as though it were able to write $\collided$,
safe in the knowledge that some other process has handled this copy or deleted it.

It can be verified that our algorithm uses $\LL/\SC$ in a \emph{properly-paired manner}, where an $\SC$ targets the same cell as the preceding $\LL$,
avoiding nested or overlapping patterns that are usually not supported on current architectures.

In the $\CAS$ version (Algorithm~\ref{alg:cas_code}), 
duplicate elimination uses a temporary marked state
$\tup{\tup{v,j},\flagcell}$,
where $j$ is  the index of the cell locked by the operation attempting to delete this copy.
The eliminating process uses this state to claim \emph{provisional ownership} of the remote tentative copy before converting it to $\collided$, emulating $\LL$/$\SC$-style reservation using only $\CAS$
in the specific context of our algorithm
(this is not a general emulation of $\LL$/$\SC$ from $\CAS$).
This adds $O(\log m)$ bits; however, if $\log n < \log m$ we can instead use process identifiers.
Because the process executing the $\insr$ effectively ``locks'' the cell for the duration of the operation, process identifiers and cell indices  can be used interchangeably to declare provisional ownership.

\begin{algorithm}[tb]
{\small
        \DontPrintSemicolon
        \SetNoFillComment
        \Function{$\Read(\var{loc})$:}{
            \Return $* loc$ 
        \BlankLine
        }
        \Function{$\Modify(\var{loc},\var{old}, \var{new})$:}{
            \Return $\CAS(\var{loc},\var{old},\var{new})$
        }
        \BlankLine
        \Function{$\func{del\_other\_copy}(v, \var{val}, i, j)$:}{
        \lIf(\tcp*[f]{Reserved by a different insert}){$\var{val} = \tup{\tup{v,k},\flagcell}$ and $k \neq j$}{
            \Return \textit{true}
        }
        \If(\tcp*[f]{Try to reserve the copy}){$\var{val} \neq \tup{\tup{v,j},\flagcell}$}{
            \If{$\neg \CAS(\ds{table}[i], \var{val}, \tup{\tup{v,j},\flagcell})$}{
                $\var{val} \gets \ds{table}[i]$ \;
            \lIf{$\var{val} = \tup{v,\flagmember}$}{
                \Return \textit{false}
            }
            \Return \textit{true}
            }
        }
                $\var{cur\_val} \gets \ds{table}[j]$ \;
                \If(\tcp*[f]{Try to fully delete other copy}){$\var{cur\_val} = \tup{v,\flaginsert}$}{
                    $\CAS(\ds{table}[i], \tup{\tup{v,j},\flagcell}, \collided)$ \;
                    \Return \textit{true}
                }
                \Return \textit{false}
        }
    \caption{$\CAS$-specific code}
    \label{alg:cas_code}
    }
\end{algorithm}

\subsection{Insert Operations that Return $\abort$}
\label{sec:resize}

An $\insr$ operation that returns $\abort$ (in line~\ref{line:abort} of Algorithm~\ref{alg:insert})
has no effect on the abstract state of the dictionary.
Inserts only abort if they complete
a full scan of the hash table without finding an available cell (either $\emptycell$ or $\tombstone$). 

An $\insr$ that aborts
does not necessarily imply that the table is ``logically full''.
This is because $\insr$ operations of the same keys may temporarily take up space in the table by creating duplicate copies  (which will be eliminated prior to their return), or concurrent $\del$ operations may clear up space during the scan.
However, $\insr$ operations do not abort spuriously when space 
is available: 
if an $\insr(v)$ operation starts running when the table 
has free space, and the table does not already contain key $v$, 
then some available cell must be used up by a key before the $\insr$ completes.
This is formalized in the next proposition:

\begin{restatable}{proposition}{noAbort}
\label{prop:no abort}
If an $\insr(v)$ operation is invoked at a configuration in which 
at least one cell in the table is empty or contains a tombstone, 
and no cell contains the key $v$, 
then some available cell in the table is reclaimed by a key 
before the operation returns.
\end{restatable}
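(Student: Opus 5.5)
The plan is to argue by contradiction. Fix an execution and an $\insr(v)$ operation $I$ invoked at a configuration $C_0$ in which some cell $c$ holds $\emptycell$ or $\tombstone$ and no cell holds $v$. Suppose that between $C_0$ and the return of $I$, no available cell (one holding $\emptycell$ or $\tombstone$) is overwritten by a key. The only transitions that turn an available cell into a non-available one are $\Modify(\ds{table}[j], \var{val}, \tup{u,\flaginsert})$ with $\var{val}\in\set{\emptycell,\tombstone}$, performed by the probing loop of some $\insr(u)$. Under the assumption, therefore, no such $\Modify$ succeeds in the interval. In particular, the set of available cells can only grow, since other transitions (into $\tombstone$ via $\func{del\_copy}$ or $\func{try\_delete}$) only create available cells. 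So cell $c$ stays available throughout $I$'s execution.

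I would first rule out that $I$ returns \textit{false} from its scans. By the assumption, every copy of any key present during $I$ was already present at $C_0$ (no new tentative copies are created). Since no cell held $v$ at $C_0$, no cell holds $v$ at any point during $I$. Hence $\func{validate\_copy}$ is never invoked successfully, and both $\func{forward\_scan}$ and $\func{backward\_scan}$ fail to return \textit{true}. Thus $I$ reaches its probing loop.

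In the probing loop, $I$ reads cells $h(v), h(v)+1,\ldots$ in order and can reach line~\ref{line:abort} only after reading every cell once. When it reads $c$, it sees $\emptycell$ or $\tombstone$, because $c$ stays available. It then attempts $\Modify(\ds{table}[c],\var{val},\tup{v,\flaginsert})$. If this succeeds, $I$ itself reclaims an available cell by a key before returning, which contradicts the assumption.

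So the main obstacle is the case where the $\Modify$ fails. With $\CAS$, failure means the cell changed from $\var{val}$. With $\LL/\SC$, failure means the cell was written since $I$'s $\LL$. I would show that any write to $c$ while it holds $\emptycell$ or $\tombstone$ is either a key-insertion into an available cell (excluded by assumption), or some other write. For the latter, I would inspect every $\Modify$/$\SC$ in the code:
\begin{itemize}
\item $\func{del\_copy}$, $\func{try\_delete}$, $\func{validate\_copy}$, $\func{del\_other\_copy}$, and the finalize/revalidate steps in $\insr$ all require the previously read value to contain a key or be $\deleted/\collided/$marked.
\item For $\LL/\SC$, a successful $\SC$ on $c$ by another process requires that $c$ was unwritten since its $\LL$. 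So the value it read is the current value, which is available, and the only code path issuing $\SC$ on a read available value is the probing insert.
\item For $\CAS$, the expected old value pins this down directly.
\end{itemize}
Hence the cell cannot change while available except by a key insertion, so $I$'s $\Modify$ succeeds, again a contradiction. Either way, some available cell is reclaimed by a key before $I$ returns.
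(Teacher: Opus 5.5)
Your proof is correct and takes essentially the intended route. If no available cell is reclaimed while $I$ runs, then no copy of $v$ can ever appear, because copies of $v$ originate only from overwriting an available cell; so $I$'s scans fail, the available cell is never touched, and $I$'s own $\Modify$ on it must succeed.

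One step needs tightening. Unlike the other helpers, $\func{del\_copy}$ writes $\tombstone$ over whatever it read without checking it, so you need the easy invariant that an insert's own cell never holds $\emptycell$ or $\tombstone$ between its first tentative write and its return. This holds because $\emptycell$ is never written, and $\tombstone$ is written only over $\tup{v,\flagmember}$ by $\func{try\_delete}$ or by the owner's own $\func{del\_copy}$ immediately before it returns.
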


In particular, if the table contains a tombstone or an empty cell, 
then an $\insr$ operation running solo from a quiescent configuration 
will not abort.
Aborts suggest that
the hash table is full or nearly full.
Repeated $\abort$ responses may therefore trigger a resize operation,
which can also be initiated preemptively before the table becomes full.

\section{Correctness Proof}
\label{sec:alg-proof}

In this section, we prove that the hash table is linearizable and lock-free, and analyze its amortized runtime complexity. 
Let $U$ be the size of the key domain, 
$n$ the number of processes, 
and $m$ the size of the hash table.

\MainThm*


\subsection{Basic Properties}

We begin by showing that the linear-probing structure is preserved and that, among concurrent insertions of the same key, at most one copy can be fully inserted.
The proof abuses terminology and says that an operation performs a step 
when the process executing it performs the step.

A cell is \emph{non-empty} if it does not contain  $\emptycell$; 
a cell contains the key $v$ if its value is one of $\tup{v,\flaginsert}$, $\tup{v,\flagmember}$, $\tup{v,\flagrestart}$ or $\tup{\tup{v,\ast},\flagcell}$.

We observe that our algorithm preserves the following key property of sequential linear probing.
This property implies that if a key is in the table, it will be found by scanning the table starting from the key’s initial hash position until either the key is found or an empty cell is encountered. Moreover, two copies of the same key are part of the same run.

\begin{proposition}\label{prop:consecutive}
If a cell $i$ contains the key $v$, 
then all cells $j$, $h(v) \leq j \leq i$, 
are not empty.
\end{proposition}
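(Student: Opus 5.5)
We prove the statement by induction on the length of the execution, showing it is an invariant of every reachable configuration. In the initial configuration all cells hold $\emptycell$, so no cell contains any key and the claim holds vacuously. For the inductive step, fix a step $s$ from configuration $C$ to $C'$, assume the invariant in $C$, and consider every pair (key $v$, cell $i$) such that cell $i$ contains $v$ in $C'$.

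The first observation is purely syntactic and comes from the code. No $\Modify$ in Algorithms~\ref{alg:scan}--\ref{alg:cas_code} ever writes $\emptycell$: the values written are $\tup{v,\flaginsert}$, $\tup{v,\flagmember}$, $\tup{v,\flagrestart}$, $\tombstone$, $\deleted$, $\collided$ and $\tup{\tup{v,j},\flagcell}$. Hence once a cell is non-empty it stays non-empty forever, and a step can never make a cell $j$ with $h(v)\le j\le i$ empty. So if cell $i$ already contained $v$ in $C$, the invariant carries over to $C'$.

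It remains to handle steps that make cell $i$ newly contain $v$. Going through the writes, $\tup{v,\flagmember}$, $\tup{v,\flagrestart}$ and $\tup{\tup{v,j},\flagcell}$ are only written by a successful $\Modify$ (an $\SC$ or $\CAS$) on a cell that already contained $v$. With $\SC$, success means the cell has not been written since the $\LL$ that read a value with key $v$; with $\CAS$, the old value explicitly has key $v$. Likewise, a rewrite back to $\tup{v,\flaginsert}$ in $\func{del\_copy}$ or in the revalidation branch of $\insr$ replaces a value with key $v$. In all these cases cell $i$ contained $v$ in $C$, so we are back in the previous paragraph.

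The only genuinely new case, and the main step of the proof, is the $\Modify(\ds{table}[j], \var{val}, \tup{v,\flaginsert})$ in the insertion loop of $\insr(v)$. There the process starts at $h(v)$ and reaches $j$ after reading every cell $h(v),\dots,j-1$. In each of those reads it either saw a value outside $\set{\emptycell,\tombstone}$, or its $\Modify$ failed. The delicate point is that these reads happen at earlier times, not in $C$. Still, a cell read as non-empty stays non-empty by the monotonicity observation. A cell read as $\emptycell$ at which the $\Modify$ failed was written in between: with $\LL/\SC$ the $\SC$ fails only if some write occurred, and with $\CAS$ the value must have changed. Any write makes the cell non-empty, since $\emptycell$ is never written. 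Thus every cell $h(v),\dots,j-1$ is non-empty in $C$, cell $j$ becomes non-empty in $C'$, and the invariant holds. The wrap-around case, where $j$ cycles modulo $m$, is handled identically, since the loop never passes $h(v)$ again.
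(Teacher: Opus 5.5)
Your proof is correct. The paper gives no written proof and simply presents the proposition as an observation about the code. Your invariant argument is exactly the code inspection that observation relies on: $\emptycell$ is never written; every write of a value containing $v$, other than the insertion-loop $\Modify$, succeeds only on a cell that already contains $v$; and that $\Modify$ occurs only after the inserting process has seen every cell from $h(v)$ to its target as non-empty. The one point needing care is a failed $\Modify$ on a cell read as $\emptycell$ (an intervening write must have made it non-empty), and you handle it correctly.
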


Our algorithm ensures that, at any point in the execution, at most one insert
operation can become the current \emph{winner}, 
i.e., the one whose copy may be promoted to $\tup{v,\flagmember}$ 
(formalized in Lemma~\ref{lem:onemem} below).
This invariant yields a well-defined ordering of successful insert 
and delete operations, whose linearization points are defined in
Section~\ref{sec:proof-mutator}. 

Consider an $\insr(v)$ operation $op$ that writes  $\tup{v,\flaginsert}$ 
to a cell in the table.
This value can be overwritten with $\tup{v,\flagmember}$, $\deleted$, $\collided$, $\tombstone$ or $\tup{v,\flagrestart}$;
in the $\CAS$ version it can also be overwritten with $\tup{\tup{v,\ast},\flagcell}$.
If it is overwritten with one of the first three values, 
that value can be subsequently overwritten only with $\tombstone$.
If it is overwritten with $\tup{v,\flagrestart}$ 
or $\tup{\tup{v,\ast},\flagcell}$, then $op$ may overwrite the value again with $\tup{v,\flaginsert}$.

Each value in the sequence $\tup{v,\flaginsert}$, $\tup{v,\flagrestart}$ 
and $\tup{\tup{v,\ast},\flagcell}$, 
starting from the first $\tup{v,\flaginsert}$ value written by $op$, 
is a \emph{copy} of $op$.
The sequence cycles through these states with restricted transitions: while $\tup{\tup{v,\ast},\flagcell}$ may transition to either $\tup{v,\flaginsert}$ or $\tup{v,\flagrestart}$, $\tup{v,\flagrestart}$ is restricted to transition only back to $\tup{v,\flaginsert}$.
The cell in which the copies of $op$ are written is called $op$'s cell.
A copy of $op$ is \emph{inserted} if it is overwritten with $\deleted$ or $\tup{v,\flagmember}$.
If a copy is overwritten with $\tup{v,\flagmember}$, 
we call it the \emph{inserted copy} of operation $op$.
A copy of $op$ is \emph{deleted} if it is overwritten with 
$\deleted$, $\collided$ or $\tombstone$,
or if the inserted copy of $op$ is overwritten with $\tombstone$.
Note that only the last copy written by $op$ can be inserted or deleted.

The next lemma formalizes a key invariant mentioned in Section~\ref{sec:alg overview}, namely, that among all simultaneous attempts to insert duplicate copies of $v$, only one can be successful.
Intuitively, this is because before succeeding, an insert scans
the probe sequence and observes that no copy closer to the start of the probe sequence exists. 
In this scan, the insert ignores copies with the flag $\flagrestart$ or $\flagcell$. 
It can do so as the other insert operation must first rewrite the copy with the flag $\flaginsert$ and then scan the table, ensuring it will observe the current operation's copy. 

\begin{restatable}{lemma}{onemem}
\label{lem:onemem}
At any point in the execution, for any key $v$,
at most one cell in the table contains $\tup{v,\flagmember}$.
\end{restatable}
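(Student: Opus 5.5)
We argue by contradiction. Suppose that at some configuration two distinct cells $a \neq b$ both contain $\tup{v,\flagmember}$. A cell can acquire $\tup{v,\flagmember}$ in only one way: the owner $op$ of a tentative copy performs the successful $\Modify(\ds{table}[j], \tup{v,\flaginsert}, \tup{v,\flagmember})$ at the end of $\insr$. Let $op_a$ and $op_b$ be the operations that finalized cells $a$ and $b$, and assume without loss of generality that $op_b$ finalizes after $op_a$. Then cell $a$ holds $\tup{v,\flagmember}$ from $op_a$'s finalization step through $op_b$'s finalization step, because the only overwrite of $\tup{v,\flagmember}$ is to $\tombstone$.

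\textbf{Step 1: the last validation scan of $op_b$.} Consider the last execution of the duplicate-search loop of $op_b$, which starts at Line~\ref{lin:start-search} and is followed immediately by the final read and $\Modify$ of cell $b$ with no goto in between. During this pass $op_b$ reads each cell from $h(v)$ up to the first $\emptycell$. By Proposition~\ref{prop:consecutive}, cells $a$ and $b$ lie in the same run starting at $h(v)$. Moreover, $op_b$'s copy is present in cell $b$ throughout the pass, so the scan does not stop before reaching both cells. Hence $op_b$ reads cell $a$ at some time $t$ during this pass. At time $t$, cell $a$ cannot hold $\tup{v,\flagmember}$, since that would trigger $\func{del\_copy}$. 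It also cannot hold $\tup{v,\flaginsert}$ at an earlier position: that too leads to $\func{del\_copy}$, and any ensuing restart contradicts our choice of the last pass. If it held a tentative copy at a later position, $op_b$ would call $\func{del\_other\_copy}$, and success means $op_b$ wrote $\collided$ or confirmed elimination, contradicting that $op_a$ later finalizes. We therefore need to handle the remaining cases: the read of cell $a$ happened before $op_a$'s copy was written; $op_a$'s copy was present but carried $\flagrestart$ or $\flagcell$, which the scan skips; or $op_b$ passed over the cell for some other reason.

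\textbf{Step 2: symmetric reasoning on $op_a$.} If $op_b$'s read of cell $a$ did not see a live $\flaginsert$ or $\flagmember$ copy of $op_a$, then $op_a$ must have later rewritten cell $a$ to $\tup{v,\flaginsert}$. This happens either when $op_a$ first writes its copy or when it recovers from $\flagrestart$ or $\flagcell$ via the goto to Line~\ref{lin:start-search}. Consequently $op_a$ begins its own last validation pass after $t$. By that time $op_b$'s copy is already in cell $b$ and stays there until $op_b$ finalizes. Since $op_a$ finalizes before $op_b$, cell $b$ holds a copy of $op_b$, tagged $\flaginsert$, $\flagrestart$ or $\flagcell$, during the whole of $op_a$'s last pass. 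Now apply the same case analysis to $op_a$ reading cell $b$. If the copy is at an earlier position, $op_a$ withdraws. If it is at a later position and tagged $\flaginsert$, $op_a$ eliminates it, and the $\LL/\SC$ (or $\flagcell$ reservation) guarantees the elimination is effective unless cell $b$ was already final, which it is not. If it is tagged $\flagrestart$ or $\flagcell$, then $op_b$ must subsequently rewrite it to $\flaginsert$ and restart its own scan. That restart lies after $op_a$ finalized, so $op_b$'s last pass sees $\tup{v,\flagmember}$ in cell $a$ and calls $\func{del\_copy}$, a contradiction.

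\textbf{Main obstacle.} The hardest part is making the interleaving rigorous. Both operations may cycle several times between $\flaginsert$, $\flagrestart$ and $\flagcell$, so the argument has to be anchored on the last $\flaginsert$ write by each operation together with the subsequent pass. The cleanest choice is to let $t_a$ and $t_b$ be the times of each operation's final write of $\tup{v,\flaginsert}$, and then show that the operation whose last pass starts later necessarily reads the other's cell while it holds $\flaginsert$ or $\flagmember$, and therefore aborts or eliminates. For the elimination case we must also verify, in both the $\LL/\SC$ and the $\CAS$ version, that a $\func{del\_other\_copy}$ returning \textit{true} really prevents the victim from being finalized. A \textit{true} return not backed by a successful write happens only when the victim's cell had already changed, either to a non-final deleted state or to a $\flagrestart$/$\flagcell$ tag that forces the victim to rescan. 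The rescan case is handled exactly as in the $\flagrestart$ case above.
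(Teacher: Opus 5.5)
Your plan is the paper's argument with the two finalizers' roles swapped. The paper first inspects the last validation pass of the operation that finalizes \emph{first}, and concludes that the other operation's last write of $\tup{v,\flaginsert}$ comes later. It then shows that the second finalizer's last pass meets the first copy as $\flaginsert$ or $\flagmember$ and must withdraw. Your Steps 1--2 run the same two steps starting from $op_b$, and your closing split on $t_a<t_b$ versus $t_b<t_a$ is that argument written as a case distinction.

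Two steps, however, do not hold as written.
\begin{itemize}
\item In Step 1, $op_b$'s last pass need not reach cell $a$. If $a$ lies beyond $b$ in the probe sequence and $op_a$ has not yet written its copy, an intermediate cell may be $\emptycell$, and the scan stops there. The paper treats this ``run excludes cell $j$'' case separately. You can absorb it into your ``not yet written'' case: by Proposition~\ref{prop:consecutive}, cell $a$ holds no copy of $v$ when the scan stops.
\item In Step 2, you dispose of the $\flagrestart$/$\flagcell$ subcase by asserting that $op_b$'s subsequent restart lies after $op_a$ finalizes. Nothing supports this, since $op_b$ could rewrite its copy while $op_a$ is still scanning.
\end{itemize}

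What you need, and what the paper states explicitly, is this: after an operation's last write of $\tup{v,\flaginsert}$, its cell is not written at all until it writes $\tup{v,\flagmember}$. The reason is that the finalizing $\Modify$ requires $\flaginsert$ to be present, and only the owner ever writes $\flaginsert$ over a copy of $v$. Any intervening write would therefore force another such write by the owner.

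Because $t$ lies in $op_b$'s last pass, cell $b$ is frozen at $\tup{v,\flaginsert}$ throughout $op_a$'s last pass. Two things follow.
\begin{itemize}
\item The $\flagrestart$/$\flagcell$ subcase is simply impossible.
\item Your claim that ``the elimination is effective'' becomes justified. In the $\LL/\SC$ version, $op_a$'s $\SC$ of $\collided$ on cell $b$ must succeed. In the $\CAS$ version, even the $\CAS$ installing $\tup{\tup{v,a},\flagcell}$ in cell $b$ is a write that cannot occur.
\end{itemize}

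With this fact stated, your direct two-case version is a complete proof. Take whichever operation has the later last $\flaginsert$ write. During its last pass both cells hold $v$ throughout, so Proposition~\ref{prop:consecutive} guarantees it reads the other cell, which then holds $\flaginsert$ or $\flagmember$. It therefore either withdraws its own copy or destroys the other one. If its $\SC$/$\CAS$ on the other cell fails, the only possible cause is that the other copy was finalized, and the re-read then leads to withdrawal.
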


\begin{proof}
Assume, by way of contradiction, that two cells $i$ and $j$, $i \neq j$,
contain  $\tup{v,\flagmember}$ at the same point in $\alpha$.
Let $op_i$ and $op_j$ be the $\insr(v)$ operations that 
write these $\tup{v,\flagmember}$ values to cells $i$ and $j$, respectively.
Assume, without loss of generality, that $op_i$ writes $\tup{v,\flagmember}$ 
to cell $i$ before $op_j$ writes $\tup{v,\flagmember}$ to cell $j$.
    
An $\insr$ writes $\tup{v,\flagmember}$ to a cell only after it writes $\tup{v,\flaginsert}$ to that cell. 
Consider the last time $op_i$ writes $\tup{v,\flaginsert}$ to cell $i$,
and note that it then scans a run starting from $h(v)$ 
before writing $\tup{v,\flagmember}$ to cell $i$.
We next assume the run includes cell $j$ and consider the possible values $op_i$ reads from cell $j$.

First, assume that $op_i$ reads $\tup{v,\flaginsert}$ written by $op_j$ from $j$.
If cell $j$ is closer to the initial hash than cell $i$
(that is, $j -  \func{h}(\var{v}) < i - \func{h}(\var{v})$), 
then $op_i$ deletes its own copy and never writes $\tup{v,\flagmember}$ to cell $i$.
Otherwise, $op_i$ tries to delete the copy of $op_j$. 
Regardless of whether $op_i$ succeeds in deleting the copy, 
the value in cell $j$ changes.
Since $op_i$ did not write $\tup{v,\flagmember}$ to cell $i$ yet and is the first operation to write $\tup{v,\flagmember}$ among the two,
the value of cell $j$ cannot change to $\tup{v,\flagmember}$ and may only change to $\collided$, $\deleted$, $\tup{v,\flagrestart}$, or $\tup{\tup{v,\ast}, \flagcell}$. 
The first two values are not possible, 
since cell $j$ is eventually overwritten with $\tup{v,\flagmember}$.
Hence, either $\tup{v,\flagrestart}$ or $\tup{\tup{v,\ast}, \flagcell}$ 
are written in cell $j$ after $op_i$ reads cell $j$.

Otherwise, $op_i$ reads from cell $j$
either $\tup{v,\flagrestart}$ copy of $op_j$,
or $\tup{\tup{v,\ast},\flagcell}$ copy of $op_j$, 
or the first write of $\tup{v,\flaginsert}$ to cell $j$ by $op_j$
follows the read of cell $j$ by $op_i$.
In all cases, the last write of $\tup{v,\flaginsert}$ to cell $j$ by $op_j$
(which exists since this last copy is overwritten with $\tup{v,\flagmember}$)
follows $op_i$'s read of cell $j$.

If the run excludes cell $j$ and $op_i$ does not read cell $j$, by Proposition~\ref{prop:consecutive}, the first time $op_j$ writes  $\tup{v,\flaginsert}$ to cell $j$ follows the last time $op_i$ writes $\tup{v,\flaginsert}$ to cell $i$.

Regardless of whether $op_i$ reads cell $j$ or not,
$op_j$ reads cell $i$ after the last time $op_i$ writes $\tup{v,\flaginsert}$ to cell $i$.
The last such read of cell $i$ by $op_j$ occurs after $op_j$ writes  $\tup{v,\flaginsert}$ to cell $j$ for the last time. 
Since this is the last such write and $op_j$ writes $\tup{v,\flagmember}$ 
to cell $j$, this value can only change to be $\tup{v,\flagmember}$.
This read returns either $\tup{v,\flaginsert}$ 
or $\tup{v,\flagmember}$ written by $op_i$ to cell $i$.
This is because $op_i$ already performed its last write of $\tup{v,\flaginsert}$ to cell $i$, and this value can only change to $\tup{v,\flagmember}$.
Furthermore, by our initial assumption, once $\tup{v,\flagmember}$ is written to cell $i$, it remains there until the write of $\tup{v,\flagmember}$ to cell $j$.

If $op_j$ reads $\tup{v,\flagmember}$ or if $i -  \func{h}(\var{v}) < j - \func{h}(\var{v})$, 
$op_j$ deletes its own value.
Otherwise ($op_j$ reads $\tup{v,\flaginsert}$), 
$op_j$ tries to delete the copy in cell $i$. 
However, this attempt is unsuccessful since $op_i$ inserts its copy. 
Thus, $op_j$ rereads the cell to obtain $\tup{v,\flagmember}$, 
and deletes its own copy.
Since we assume that the read of cell $i$ occurs after the last write by $op_j$ of $\tup{v,\flaginsert}$ to cell $j$, the value of cell $j$ cannot subsequently change to $\tup{v,\flagrestart}$ or $\tup{\tup{v,\ast},\flagcell}$. 
Consequently, $op_j$ succeeds in deleting its own copy.

In all cases, $op_j$ deletes its own copy and does not write  $\tup{v,\flagmember}$, which is a contradiction.
\end{proof}






\subsection{Linearizability}

Let $\alpha$ be a finite execution.
An $\insr$ operation may return $\abort$ if it completes a full scan of the hash table and does not find an available cell.
We ignore $\insr$ operations that return $\abort$, as they do not affect the sequential specification; thus, they can be linearized at any point, provided they respect the real-time order.

We construct the linearization $\pi_v$ of the remaining 
$\insr$, $\del$ and $\lookup$ operations in $\alpha$.
Let $H(\alpha)|_v$ be the projection of the history $H(\alpha)$ onto key $v$; that is, the subsequence of operations in $H(\alpha)$ involving the input key $v$.
Our proof proceeds by constructing independent linearizations $\pi_v$ of $H(\alpha)|_v$, for each key $v$.
This is sufficient since the locality theorem of linearizability 
allows to obtain the linearization of the whole execution
by merging per-key linearizations while respecting real-time order.

For the rest of the linearizability proof, fix a key $v$.

By Lemma~\ref{lem:onemem},
at most one winner can fully insert its copy at any point in the execution. 
Only after a fully-inserted copy is deleted, a new copy can be crowned the new winner.
This gives us a well-defined order for successful insert and delete operations.
When a delete operation deletes a partially inserted key, 
a successful insert and delete happen simultaneously.
When there is also a fully inserted copy of the key at the same time, 
we first place the delete and then the insert.
This ensures that the ordering ends with a successful insert at such a point.
Otherwise, the insert is placed before the delete.

Ordering unsuccessful insert and delete operations, 
as well as lookup operations, is more challenging.
When a lookup or insert operation finds a partially inserted copy of the key, 
there are three possibilities: 
(a) the copy is eventually fully inserted, 
(b) the copy will be deleted because of a later full insertion, or 
(c) the copy will be deleted because of an earlier full insertion.
If the future successful insertion is not yet linearized, 
there is a pending insert operation, 
which we can treat as successful in the linearization. 
Later, when the actual winner is determined, 
the operation’s response is updated accordingly.

An operation $op$ is a \emph{mutator} if it is 
either an $\insr(v)$ operation whose copy is inserted, 
or a $\del(v)$ operation that deletes a copy of some insert operation.
In the latter case, this means that $op$ overwrites  $\tup{v,\flaginsert}$, $\tup{v,\flagrestart}$ or $\tup{\tup{v,\ast},\flagcell}$ 
with $\deleted$ or  $\tup{v,\flagmember}$ with $\tombstone$.
Let $\mop$ be the set of all mutator operations 
in $\alpha$ with key $v$.

Note that a mutator operation does not necessarily complete in $\alpha$, 
however, if it does, it returns \textit{true}.
If an insert$(v)$ observes a partial copy before a winner exists, 
all later operations should treat $v$ as present. 
In the linearization, we let one ongoing insert serve as the successful insert, 
even if it eventually returns \textit{false}.
This captures the forced-success behavior of unsuccessful inserts described in
Section~\ref{sec:alg overview}.

We begin by ordering the mutator operations 
(Section~\ref{sec:proof-mutator}).
Next (Section~\ref{sec:insert-seq}), we introduce \emph{insert sequences} 
and prove key properties that will be used (in Section~\ref{sec:proof-nonmutator}) 
to place non-mutator operations in between mutator operations.

\subsubsection{Ordering the Mutator Operations}
\label{sec:proof-mutator}

We construct the permutation $\mu_v$ of the mutator operations,
by assigning a \emph{linearization point} to each mutator operation $op$, which is a step in $\alpha$, 
between the invocation and the response of $op$. 
Two mutator operations may be linearized at the same step, in which case they are ordered, 
by marking them with $\set{1,2}$, 
namely, which of them is first and which is second. 

If $iop \in \mop$ is an $\insr(v)$ operation,
then a copy of $iop$ is overwritten with $\tup{v,\flagmember}$ or $\deleted$.
Let $\pt$ be the point in the execution where this overwrite happens.
If the copy of $iop$ is overwritten with $\tup{v,\flagmember}$, 
we define $\lin(iop) = \tup{\pt, 1}$.
Otherwise, a copy of $iop$ is overwritten with $\deleted$.
If at $\pt$ there is a cell in the table containing  $\tup{v,\flagmember}$, we define $\lin(iop) = \tup{\pt, 2}$. 
Otherwise, at $\pt$ no cell in the table contains  $\tup{v,\flagmember}$ and we define $\lin(iop) = \tup{\pt, 1}$.

If $dop \in \mop$ is a $\del(v)$ operation,
then $dop$ overwrites $\tup{v,\flagmember}$ with $\tombstone$, or $\tup{v,\flaginsert}$, $\tup{v,\flagrestart}$ or $\tup{\tup{v,\ast},\flagcell}$ with $\deleted$.
Let $\pt$ be the point in the execution where this overwrite happens.
If $dop$ overwrites $\tup{v,\flagmember}$ with $\tombstone$, we define $\lin(dop) = \tup{\pt, 1}$.
Otherwise, $dop$ overwrites  $\tup{v,\flaginsert}$, $\tup{v,\flagrestart}$, or $\tup{\tup{v,\ast},\flagcell}$ with $\deleted$.
If there is a cell in the table containing  $\tup{v,\flagmember}$ at $\pt$, we define $\lin(dop) = \tup{\pt, 1}$. 
Otherwise, at $\pt$ no cell in the table contains  $\tup{v,\flagmember}$ and we define $\lin(dop) = \tup{\pt, 2}$.

Note that at points where a copy is overwritten with $\deleted$ we linearize both an $\insr(v)$ operation and a $\del(v)$ operation.
If during this overwrite there is a cell in the table containing  $\tup{v,\flagmember}$, 
we linearize the delete first and then the insert.
Otherwise, we linearize them in reverse order,
first the insert then the delete.
This ensures that a key is considered to be in the table if and only if it is stored in a cell with the flag $\flagmember$.
We formalize this invariant below in Lemma~\ref{lem:memlog}.

We say that $\tup{\pt_1, i_1}$ \emph{precedes} $\tup{\pt_2, i_2}$ if either $\pt_1$ precedes $\pt_2$ in $\alpha$, or $\pt_1 = \pt_2$ and $i_1 < i_2$.
Let $\mu'_v$ be a permutation of the operations in $\mop$, 
ordered such that ${op}_1$ precedes ${op}_2$ in $\mu'_v$ 
if and only if $\lin({op}_1)$ precedes $\lin({op}_2)$.
All mutator operations, including those that did not return in $\alpha$, 
are assigned the response \textit{true} in $\mu'_v$.

We next formalize the behavior described in Section~\ref{sec:alg overview}, 
where an insert operation that observes a partial copy may force the key to be treated as present, even if no insert has yet completed.
We construct $\mu_v$ by possibly extending $\mu'_v$.
If at the end of $\alpha$: 
(1) there is a cell in the table containing  $\tup{v,\flaginsert}$, $\tup{v,\flagrestart}$ or $\tup{\tup{v,\ast},\flagcell}$, 
(2) $\mu'_v$ does not end with an $\insr(v)$ operation, and 
(3) there is no mutator $\insr(v)$ operation with a linearization point that occurs at the last step of $\alpha$, we append such an operation as follows.
Note that, by the code, the $\insr(v)$ operation this copy belongs to is pending.
Among all pending $\insr(v)$ operations that write  $\tup{v,\flaginsert}$ in $\alpha$,
let $op$ be the one invoked the earliest.
We obtain $\mu_v$ by appending $op$ to $\mu'_v$ with response \textit{true}. 
Furthermore, we add $op$ to $\mop$, and define its linearization point to be the last step in the execution.
If these conditions are not met, we simply define $\mu_v = \mu'_v$.


Next, we show that, when ignoring the last insert possibly appended to $\mu_v$, $v$ is logically in the table if and only if the table contains  $\tup{v,\flagmember}$.

\begin{restatable}{lemma}{memlog}
\label{lem:memlog}
$\mu'_v$ ends with $\insr(v)$ if and only if 
some cell contains $\tup{v,\flagmember}$ after $\alpha$.
\end{restatable}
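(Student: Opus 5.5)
We prove the statement by induction on the length of $\alpha$, considering every prefix of $\alpha$. The invariant is that $\mu'_v$ (restricted to mutators linearized so far) ends with an $\insr(v)$ if and only if some cell contains $\tup{v,\flagmember}$. By Lemma~\ref{lem:onemem}, at most one such cell exists, so "some cell contains $\tup{v,\flagmember}$" is a binary state $F$ that is either true or false. The base case is the initial configuration: no mutators have been linearized and no cell contains $\tup{v,\flagmember}$, so both sides are false. (We treat "ends with $\insr$" as false for the empty sequence.)

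For the inductive step, only a few kinds of steps can change either side. The first side changes only when some mutator is linearized. The second side changes only when a $\tup{v,\flagmember}$ value is written, or when one is overwritten. By the life-cycle transitions established before Lemma~\ref{lem:onemem}, $\tup{v,\flagmember}$ can be overwritten only by $\tombstone$, and that overwrite is performed by a $\del(v)$ in $\func{try\_delete}$. We check four cases.

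\begin{itemize}
\item \textbf{A copy is overwritten with $\tup{v,\flagmember}$.} The owning insert is linearized at $\tup{\pt,1}$, so $\mu'_v$ now ends with $\insr$ and $F$ becomes true. By Lemma~\ref{lem:onemem} no other cell held $\tup{v,\flagmember}$ just before, so $F$ was false and, by induction, $\mu'_v$ did not end with $\insr$. The new state is consistent.
\item \textbf{$\tup{v,\flagmember}$ is overwritten with $\tombstone$.} The overwriting $\del$ is linearized at $\tup{\pt,1}$. By induction $\mu'_v$ previously ended with $\insr$; it now ends with $\del$, and $F$ is false, again by Lemma~\ref{lem:onemem}.
\item \textbf{A tentative copy (any of $\tup{v,\flaginsert}$, $\tup{v,\flagrestart}$, $\tup{\tup{v,\ast},\flagcell}$) is overwritten with $\deleted$.} Both the owning insert and the deleting $\del$ are linearized at this point, and $F$ is unchanged. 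If $F$ holds, the order is $\del$ then $\insr$, so $\mu'_v$ still ends with $\insr$. If $F$ fails, the order is $\insr$ then $\del$, so $\mu'_v$ ends with $\del$. Either way the equivalence is preserved.
\item \textbf{Any other step.} It neither linearizes a mutator nor changes $F$.
\end{itemize}

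Two points need care. First, we must confirm that every mutator linearization point falls under one of the cases above, and that no other write creates or destroys $\tup{v,\flagmember}$. For example, $\collided$ and $\tombstone$ overwrite only tentative copies, which are non-final, and they linearize nothing. This follows from the code and the transition diagram. Second, and this is the main subtlety, when two mutators share the same step, the definitions of $\lin$ must use $F$ evaluated just before the step; since $F$ is unchanged by a $\deleted$ write, evaluating it at $\pt$ is unambiguous. The appended pending insert is excluded, since the statement concerns $\mu'_v$ only.
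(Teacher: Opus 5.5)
Your proof is correct and follows the paper's argument. Both induct over the steps that can append to $\mu'_v$ and use the same three cases: a copy promoted to $\tup{v,\flagmember}$; $\tup{v,\flagmember}$ overwritten by $\tombstone$, where Lemma~\ref{lem:onemem} shows no final copy remains; and a copy overwritten by $\deleted$, where the order of the two linearized operations depends on whether a final copy exists. Your explicit check that no other step creates or destroys $\tup{v,\flagmember}$ is something the paper leaves implicit, and it is a worthwhile addition.
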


\begin{proof}
We prove the lemma by induction on the number of steps in an execution 
prefix $\alpha'$
that 
either overwrite $\tup{v,\flaginsert}$ with $\tup{v,\flagmember}$, 
or overwrite $\tup{v,\flagmember}$ with $\tombstone$, 
or overwrite $\tup{v,\flaginsert}$, $\tup{v,\flagrestart}$ or $\tup{\tup{v,\ast},\flagcell}$ with $\deleted$. 
We note that this suffices since these \emph{overwrite steps} 
are the only steps that append an operation to $\mu'_v$.
The base case, when no such steps appear in $\alpha'$, is trivial.

Assume the lemma holds for an execution prefix $\alpha'$, 
and consider the possible cases.

\emph{If $\tup{v,\flaginsert}$ is overwritten with $\tup{v,\flagmember}$}, 
then $\insr(v)$ is appended to $\mu'_v$.
Since $\tup{v,\flagmember}$ is written, 
there is a cell containing  $\tup{v,\flagmember}$.

\emph{If $\tup{v,\flaginsert}$, $\tup{v,\flagrestart}$ or $\tup{\tup{v,\ast},\flagcell}$ are overwritten with $\deleted$}, 
then an $\insr(v)$ and a $\del(v)$ are appended to $\mu'_v$.
If at the end of $\alpha'$ there is a cell containing $\tup{v,\flagmember}$, 
$\del(v)$ is appended first followed by $\insr(v)$.
        Since the $\tup{v,\flagmember}$ value is not overwritten,
        until the next overwrite, a cell containing  $\tup{v,\flagmember}$ remains.
        If at the end of $\alpha'$ there is no cell containing $\tup{v,\flagmember}$, 
        $\insr(v)$ is appended first and then $\del(v)$.
        Since no $\tup{v,\flagmember}$ value is written,
        until the next overwrite, no cell contains  $\tup{v,\flagmember}$.

\emph{If $\tup{v,\flagmember}$ is overwritten with $\tombstone$}, 
then $\del(v)$ is appended to $\mu'_v$.
Since a $\tup{v,\flagmember}$ value is deleted, 
and by Lemma~\ref{lem:onemem},
no other cell also contains  $\tup{v,\flagmember}$,
no cell contains $\tup{v,\flagmember}$.
\end{proof}



\begin{restatable}{lemma}{lemnmop}
\label{lemma:spec nmop}
$\mu_v$ respects the sequential specification of a dictionary.
\end{restatable}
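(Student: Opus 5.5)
Since $\mu_v$ contains only mutator operations, all assigned response \textit{true}, the sequential specification amounts to a strict alternation: in $\mu_v$, operations must alternate $\insr(v), \del(v), \insr(v), \dots$, starting with an $\insr(v)$ (the key is initially absent). A successful $\insr$ requires $v$ absent and a successful $\del$ requires $v$ present. The plan is therefore to prove that $\mu'_v$ alternates, starting with an insert, and then to handle the possibly appended operation separately.

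\textbf{Alternation of $\mu'_v$.} I would argue by induction over the overwrite steps used in the proof of Lemma~\ref{lem:memlog}, since these are the only steps that append operations to $\mu'_v$. The invariant is that $\mu'_v$ alternates starting with $\insr(v)$, and the inductive step uses Lemma~\ref{lem:memlog} to know whether $\mu'_v$ currently ends with an insert. There are four cases.
\begin{itemize}
\item \emph{Writing $\tup{v,\flagmember}$ over $\tup{v,\flaginsert}$.} Lemma~\ref{lem:onemem} says no other cell contains $\tup{v,\flagmember}$ just before this step. By Lemma~\ref{lem:memlog}, $\mu'_v$ therefore does not end with an $\insr$: it is empty or ends with a $\del$. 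Appending an $\insr$ preserves alternation.
\item \emph{Overwriting $\tup{v,\flagmember}$ with $\tombstone$.} By Lemma~\ref{lem:memlog}, $\mu'_v$ ends with an $\insr$, so appending a $\del$ is fine.
\item \emph{Writing $\deleted$ while some cell holds $\tup{v,\flagmember}$.} The pair is ordered $\del$ then $\insr$. Since $\mu'_v$ currently ends with an $\insr$ by Lemma~\ref{lem:memlog}, alternation is preserved.
\item \emph{Writing $\deleted$ while no cell holds $\tup{v,\flagmember}$.} The pair is ordered $\insr$ then $\del$, and $\mu'_v$ is empty or ends with a $\del$, so again alternation is preserved.
\end{itemize}
A small technical point needs checking: two overwrite steps never coincide, since each is a single step of one process. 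When the inserting and deleting operations share a point, they are ordered by the tags $1,2$ exactly as above.

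\textbf{The appended operation.} If $\mu_v \neq \mu'_v$, then condition (2) says $\mu'_v$ does not end with an $\insr$, so it is empty or ends with a $\del$, and appending an $\insr$ with response \textit{true} keeps alternation. I also need the appended $op$ not to already belong to $\mu'_v$. This holds because $op$ is pending and its copy has not been overwritten with $\tup{v,\flagmember}$ or $\deleted$; otherwise $op$ would already be a mutator. I expect this to be the main subtlety. The earliest-invoked pending writer is not necessarily the owner of the current copy, and such an operation could already be a mutator. If that happens, I would restrict the choice to pending $\insr(v)$ operations that are not in $\mop$, and argue that at least one exists: the owner of the copy present at the end of $\alpha$ is pending, and by the code its last copy is neither inserted nor deleted. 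Condition (3) ensures that no mutator insert is simultaneously linearized at the final step, so no duplicate arises there.

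Together these cases show $\mu_v$ is an alternating sequence starting with $\insr(v)$, with every response \textit{true}, which is exactly a legal sequential dictionary history on key $v$.
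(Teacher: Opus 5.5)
Your proof is correct and follows the paper's argument. You use the same case analysis over the four kinds of overwrite steps, relying on Lemma~\ref{lem:onemem} and Lemma~\ref{lem:memlog} to decide whether $\mu'_v$ currently ends with an $\insr(v)$ (the paper phrases this as a contradiction at the first violating operation rather than as an induction), and then observe that condition (2) makes the appended $\insr(v)$ legal.

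Your extra concern is well founded: the earliest-invoked pending writer may already be in $\mop$, for instance an $\insr(v)$ whose copy was overwritten with $\deleted$ but which has not yet cleaned up. This does not affect the specification claim of this lemma, and the paper does not address it. It is, however, a genuine well-definedness issue for $\mu_v$ as a permutation, and your fix of choosing a pending non-mutator writer, such as the owner of the surviving copy, resolves it.
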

\begin{proof}
Consider first $\mu'_v$ and assume by way of contradiction, 
that it does not respect the sequential specification of a dictionary. 
Let $op$ be the first operation in $\mu'_v$ 
that violates the sequential specification. 
If $op$ is an $\insr(v)$, then it appears after another $\insr(v)$.
If $op$ is a $\del(v)$, then it is not preceded by an $\insr(v)$.
Consider the overwrite in $\alpha$ that determines the placement of $op$.

\emph{If a $\tup{v,\flaginsert}$ is overwritten by $\tup{v,\flagmember}$}, 
then $op$ is an $\insr(v)$.
By Lemma~\ref{lem:onemem}, 
no cell contains $\tup{v,\flagmember}$ before this overwrite,
and by Lemma~\ref{lem:memlog}, $\mu'_v$ constructed up until before 
this overwrite, does not end with an $\insr(v)$, 
which is a contradiction.

\emph{If a $\tup{v,\flagmember}$ is overwritten with $\tombstone$}, 
then $op$ is a $\del(v)$.
By Lemma~\ref{lem:memlog}, $\mu'_v$ constructed up until before this overwrite, ends with an $\insr(v)$, 
which is a contradiction.

Finally, assume \emph{$\tup{v,\flaginsert}$, $\tup{v,\flagrestart}$ 
or $\tup{\tup{v,\ast},\flagcell}$ are overwritten by $\deleted$}.
If there is a cell containing $\tup{v,\flagmember}$ before this overwrite,
then by Lemma~\ref{lem:memlog}, 
$\mu'_v$ constructed up until before this overwrite, 
ends with an $\insr(v)$.
Thus, appending $\del(v)$ and then $\insr(v)$ 
does not violate the sequential specification. 
Otherwise, no cell contains  $\tup{v,\flagmember}$ before this overwrite,
and by Lemma~\ref{lem:memlog}, 
$\mu'_v$ constructed up until before this overwrite, 
does not end with an $\insr(v)$.
Thus, appending $\insr(v)$ and then $\del(v)$ 
does not violate the sequential specification. 

Therefore, $\mu'_v$ respects the sequential specification.

Finally, assume that an $\insr(v)$ is appended to $\mu'_v$ 
to obtain $\mu_v$.
This happens only if 
$\mu'_v$ does not end with an $\insr(v)$, and therefore, 
appending the $\insr(v)$ obeys the sequential specification.
\end{proof}

\subsubsection{Insert Sequences}
\label{sec:insert-seq}

A copy of an insert operation is either still present in the table, inserted, or is deleted because of a different copy of $v$.
This yields a sequence of insert operations where a copy of each operation is deleted due to the next one in the sequence; 
the sequence ends when a copy of the final insert operation is successfully inserted.

A copy of an $\insr(v)$ operation $op_1$ is deleted because of an $\insr(v)$ operation $op_2$, denoted $op_2 \delrel op_1$, 
if either (1) $op_2$ overwrites a copy of $op_1$ with $\collided$, or
(2) $op_1$ overwrites its own copy with $\tombstone$ after finding a (possibly inserted) copy of $op_2$.
By the code, $op_2$ writes its first copy before $op_1$'s copy is deleted.

Lemma~\ref{lem:not-stale-del} shows a stronger property:
a copy cannot be deleted by an insert operation whose copy has already been deleted.
The details of the next proof are the main place where 
the correctness proof differs between the LL/SC and the
CAS versions of the algorithm. 

\begin{restatable}{lemma}{notstaledel}
\label{lem:not-stale-del}
    Let $op_1$ and $op_2$ be two $\insr(v)$ operations such that $op_2 \delrel op_1$. Then either a copy of $op_2$ and the last copy of $op_1$ are both present in the table at the same time, or the last copy of $op_1$ is a $\tup{\tup{v,\ast},\flagcell}$ value and a copy of $op_2$ and the second-to-last copy of $op_1$ are both present in the table at the same time.
\end{restatable}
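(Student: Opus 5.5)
I will split according to the two ways $op_2 \delrel op_1$ can arise, and within case (1) according to the primitive, since the statement is only nontrivial in the elimination case.

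\emph{Case (2): $op_1$ withdraws.} Here $op_1$ overwrites its own copy with $\tombstone$ after reading a copy of $op_2$ during its duplicate search. That search starts after $op_1$'s latest write of $\tup{v,\flaginsert}$ to its cell. Any later rewrite of that cell by $op_1$ (on revalidation) restarts the search from Line~\ref{lin:start-search}, so the read of $op_2$'s copy that triggers $\func{del\_copy}$ occurs while $op_1$'s current copy is present. That copy is either the last copy of $op_1$, or a $\tup{v,\flagrestart}$ or marked value written by another process that $op_1$ subsequently overwrites. In the first situation we have simultaneous presence at the moment of the read. In the second, the final $\SC$/$\CAS$ of $\tombstone$ in $\func{del\_copy}$ succeeds on the value it last read, so the last copy of $op_1$ is whatever the cell held just before the tombstone. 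I will argue that this value was already present, or was written after the read while $op_2$'s copy (or the copy it was read from) still existed. If this needs care, I will use the weaker disjunct with a $\flagcell$ last copy.

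\emph{Case (1), $\LL/\SC$ version.} $op_2$ writes $\collided$ into $op_1$'s cell $i$ inside $\func{del\_other\_copy}$. The successful $\SC(\ds{table}[i],\collided)$ is paired with an $\LL(\ds{table}[i])$ during the scan, so cell $i$ is unchanged between the $\LL$ and the $\SC$. The value read by that $\LL$ is therefore the last copy of $op_1$, present throughout the interval. Between them, $op_2$ reads its own cell $j$ and sees $\tup{v,\flaginsert}$, i.e., a copy of $op_2$. At the moment of that read both copies are in the table.

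\emph{Case (1), $\CAS$ version.} $op_2$ first $\CAS$es $\tup{\tup{v,j},\flagcell}$ into cell $i$, then reads cell $j$ and sees $\tup{v,\flaginsert}$, then $\CAS$es $\collided$ over the marked value. Since the second $\CAS$ succeeds, the marked value is still there when it succeeds. The marked value is $op_1$'s last copy, and at the read of cell $j$ it coexists with $op_2$'s copy, which gives the first disjunct. One subtlety is that the marked value could have been overwritten by $op_1$ (with $\tup{v,\flaginsert}$) and re-marked identically by $op_2$, an ABA. However, $op_2$ performs only one marking per call, and $op_1$'s rewriting restarts its search. I will argue that a successful final $\CAS$ means the marked value was continuously present since $op_2$'s marking, using that only $op_1$ and the owner index $j$ write there. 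Alternatively, if the last copy is a marked value installed by a \emph{different} eliminator, $op_2$ returns early and does not write $\collided$.

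\emph{Remaining subcase and main obstacle.} I also need the situation where $op_2$ returns \emph{true} from $\func{del\_other\_copy}$ after a failed $\CAS$ because someone else marked the cell. In that case $op_2$ does not overwrite, so it is not $op_2 \delrel op_1$. The genuinely delicate point, which I expect to be the main obstacle, is justifying the second disjunct. When the last copy of $op_1$ is the mark $\tup{\tup{v,j},\flagcell}$, the read of cell $j$ happens after the mark is written, so what coexists with $op_2$'s copy is indeed this mark. The case requiring the second-to-last copy is therefore case (2) with a mark: $op_1$ reads $op_2$'s copy while holding $\tup{v,\flaginsert}$ (the second-to-last copy), after which a third process marks the cell. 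I would handle this by tracing exactly which value $op_1$'s tombstone $\CAS$ overwrites.
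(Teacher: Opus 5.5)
Your case (1) arguments are sound and match the paper's. In the $\LL/\SC$ version, the successful $\SC$ of $\collided$ certifies that the value returned by the paired $\LL$ is $op_1$'s last copy, and $op_2$'s read of its own cell falls in between. In the $\CAS$ version, only $op_2$ can write $\tup{\tup{v,j},\flagcell}$ (where $j$ is $op_2$'s cell). So the successful $\CAS$ to $\collided$ overwrites the very mark that $op_2$ installed or observed earlier in the same call, and that mark is $op_1$'s last copy. You also correctly locate where the second disjunct is needed: case (2), when some process marks $op_1$'s tentative copy after $op_1$'s triggering read of $op_2$'s copy.

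The gap is case (2), which you leave as a promissory note, and your outline points the wrong way on one subcase. You count $\tup{v,\flagrestart}$ among the values that $op_1$'s final $\tombstone$ might overwrite, and you propose to show that such a value was written while $op_2$'s copy still existed. That coexistence is not generally available, and in the $\LL/\SC$ version there is no marked disjunct to fall back on.

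The missing observation, which is the core of the paper's proof, is that this subcase cannot occur.
\begin{itemize}
\item $\tup{v,\flagrestart}$ can be overwritten only by its owner (with $\tup{v,\flaginsert}$) or by a $\del$ (with $\deleted$). Hence an $\SC$ or $\CAS$ of $\tombstone$ can never replace it.
\item When $\func{del\_copy}$ reads $\tup{v,\flagrestart}$, either it restores $\tup{v,\flaginsert}$ and returns $\bot$, or its restore fails because a $\del$ wrote $\deleted$. Returning $\bot$ sends $op_1$ back to Line~\ref{lin:start-search}, so that read of $op_2$'s copy was not the one that triggered the withdrawal. If $\deleted$ was written, the copy was deleted by a $\del$, not because of $op_2$.
\item Likewise, a $\collided$ or $\deleted$ written in the window means $op_1$ did not overwrite its own copy, so case (2) does not apply.
\end{itemize}

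Now take the last read of $op_2$'s copy before the $\tombstone$ write. In the $\LL/\SC$ version, the observations above show that the copy present at that read is $op_1$'s last copy. In the $\CAS$ version, the only other possibility is a single mark. Marks replace only $\tup{v,\flaginsert}$, and $op_1$ writes no new tentative copy before its $\tombstone$. So at most one mark occurs, the $\tombstone$ overwrites it, and the copy present at the read is exactly the second-to-last copy. Without this argument, your case (2) is not proved.
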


\begin{proof}[Proof for the $\LL/\SC$ version]
    If $op_1$ deletes its own copy, it does so after reading a copy of $op_2$.
    Consider the last such read by $op_1$, then $op_1$'s next step is to overwrite its copy with $\tombstone$. 
    Assume that the copy of $op_1$ that is in the table during this last read is overwritten before a copy is deleted.
    Then, as $op_1$'s copy is not deleted yet, and this overwrite is not performed by $op_1$,
    it can only be overwritten with $\tup{v,\flagrestart}$.
    This copy cannot be overwritten until $op_1$'s next step.
    However, $\tup{v,\flagrestart}$ cannot be overwritten with $\tombstone$, in contradiction.

    If $op_2$ deletes a copy of $op_1$, then it begins by reading $op_1$'s copy with an $\LL$. 
    Then, it reads its own cell and gets $\tup{v,\flaginsert}$. 
    Lastly, it overwrites $op_1$'s copy with an $\SC$.
    If the $\SC$ succeeds, then it overwrites the same copy that it initially read with the $\LL$. 
    Therefore, this copy is the last copy of $op_1$.
    In between, it read its own copy, ensuring that a copy of $op_2$ is present in the table at the same time as $op_1$'s last copy.  
\end{proof}

\begin{proof}[Proof for the $\CAS$ version]
    Assume that $op_1$ deletes its own copy.
    Consider the last read of $op_2$'s copy by $op_1$, then $op_1$'s next step is to overwrite its copy with $\tombstone$. 
    If $op_1$'s copy that is in the table during this last read is overwritten, it must be with $\tup{v,\flagrestart}$ or $\tup{\tup{v,\ast},\flagcell}$.
    This is because $op_1$'s copy is not deleted yet, and this overwrite is not performed by $op_1$.
    We already showed for the $\LL$/$\SC$ version that an overwrite with $\tup{v,\flagrestart}$ is not possible, 
    as the overwrite with $\tombstone$ must fail.
    Assume that the copy is overwritten with $\tup{\tup{v,\ast},\flagcell}$.
    Since $op_1$ deletes its own copy, only $op_1$ can overwrite this value.
    Any other overwrite would mean that some other operation deleted this copy, which cannot happen.
    Thus, $op_1$ overwrites this copy with $\tombstone$, 
    and we get that $op_2$'s copy is present in the table at the same time as $op_1$'s second-to-last copy.
    
    If $op_2$ deletes a copy of $op_1$, it first overwrites $op_1$'s copy with $\tup{\tup{v,i},\flagcell}$, where $i$ is $op_2$'s cell.
    Then, it reads $\tup{v,\flaginsert}$ from its own cell. 
    Finally, it overwrites the value $\tup{\tup{v,i},\flagcell}$ in $op_1$'s cell with $\collided$.
    At any point during the execution, only a single insert operation owns a given cell.
    This means that only $op_1$ and $op_2$ can overwrite the $\tup{\tup{v,i},\flagcell}$ copy of $op_1$.
    Since $op_2$ is the only operation that can write the cell number $i$ to the table, 
    the value $\tup{\tup{v,i},\flagcell}$ written by $op_2$ is the same one overwritten with $\collided$, and this is the last copy of operation $op_1$.
    In between, $op_2$ read its own copy, ensuring that a copy of $op_2$ is present in the table at the same time as $op_1$'s last copy.  
\end{proof}

An \emph{insert sequence} $op_k \delrel \ldots \delrel op_1$, $k\geq 1$, \emph{has ended} if a copy of operation $op_k$ is inserted.
The insert sequence is \emph{maximal} if it has ended or 
if a copy of operation $op_k$ is still present in the table at the end of $\alpha$, that is, its final copy is neither deleted nor inserted.
This is denoted by $op_k \delrelstar op_1$.

Let $c_j$ be the cell of $op_j$, $1 \leq j \leq k$.
We next show that a copy is deleted only if its cell is further from the start of the probe sequence than the cell of the deleting operation (with the possible exception of the final deletion).

\begin{lemma}
\label{lem:insertseq-earlier}
Consider a maximal insert sequence $op_k \delrel \ldots \delrel op_1$, $k\geq 1$.
If $1 \leq j < k-1$, then 
$\parens*{c_{j+1} -  \func{h}(\var{v})} < \parens*{c_j - \func{h}(\var{v})}$.
Furthermore, if the sequence has not ended, or if a copy of operation $op_k$ is overwritten with $\deleted$, then 
$\parens*{c_{k} -  \func{h}(\var{v})} < \parens*{c_{k-1} - \func{h}(\var{v})}$.
\end{lemma}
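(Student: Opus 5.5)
The plan is to analyze each link $op_{j+1} \delrel op_j$ separately and split on the two ways a deletion-because-of can happen. Write $d(c) = c - h(v)$ for the probe distance of cell $c$.

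\emph{Case (2): $op_j$ overwrites its own copy with $\tombstone$ after finding a copy of $op_{j+1}$.} By the code of $\insr$, this happens only in two situations.
\begin{itemize}
\item The copy of $op_{j+1}$ that $op_j$ found in its duplicate scan satisfies $d(c_{j+1}) < d(c_j)$, or it is $\tup{v,\flagmember}$.
\item $\func{del\_other\_copy}$ failed against that copy, which happens when the copy was $\tup{v,\flagmember}$, or when $op_j$'s own copy had already been eliminated or marked for revalidation.
\end{itemize}
In either situation the found copy is earlier, or it belongs to an operation whose copy is eventually inserted. When $op_{j+1}$'s copy is (eventually) finalized, $op_{j+1}$ is inserted, so $j+1 = k$ and the sequence has ended. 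This is exactly the permitted exception for the last link, provided the finalization was via $\flagmember$ rather than $\deleted$. So I will show that whenever $j+1 < k$, or the last copy of $op_k$ is overwritten with $\deleted$ or not yet decided, $op_{j+1}$'s copy is never $\flagmember$. Its cell therefore had to be earlier in the scan, which gives $d(c_{j+1}) < d(c_j)$. The condition that the last copy of $op_k$ is decided by $\deleted$ or not decided is exactly the condition that $op_k$ never writes $\tup{v,\flagmember}$. For $j+1<k$, $op_{j+1}$ has a copy deleted by $op_{j+2}$, and only the last copy can be inserted or deleted, so it never held $\flagmember$.

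\emph{Case (1): $op_{j+1}$ overwrites $op_j$'s copy with $\collided$.} This happens only inside $\func{del\_other\_copy}$, which is called only in the else-branch. That branch requires $i \ne j$ with $d(i) \ge d(j)$ in $op_{j+1}$'s local names; since the cells differ, $d(c_j) > d(c_{j+1})$ strictly. The branch also requires that the value read is not $\flagmember$. This direction needs no exception.

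The main obstacle is the first case. There the failure branch of $\func{del\_other\_copy}$ leads to $\func{del\_copy}$ even when the other copy is later than $op_j$'s own, namely when $op_j$'s own cell is not $\flaginsert$. In that situation, however, $op_j$'s copy is either $\collided$ or $\deleted$, so it is not the one writing the tombstone "because of" $op_{j+1}$. The other possibility is that $op_j$'s cell holds a restart value; then it is reset and revalidated via goto, unless the reset $\Modify$ fails. I would carefully argue that when $op_j$ writes $\tombstone$ over its own live copy, the copy of $op_{j+1}$ it last read (the one used to define $\delrel$) was either earlier or $\flagmember$. For $\CAS$, the relevant $\flagmember$ read occurs in $\func{del\_other\_copy}$ after a failed $\CAS$. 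For the later-but-not-$\flagmember$ failure paths, I would show that $op_j$'s own cell must have been overwritten by someone else, so its tombstone is not attributable to $op_{j+1}$. Lemma~\ref{lem:not-stale-del} supplies the simultaneity needed to rule out stale attributions.
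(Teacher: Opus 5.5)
Your proposal is correct and is essentially the paper's argument. A copy of $op_j$ is lost because of $op_{j+1}$ in one of two ways: either $op_{j+1}$ writes $\collided$, which the code only aims at copies farther from $h(v)$, or $op_j$ withdraws after seeing a copy of $op_{j+1}$ that is closer to $h(v)$ or is $\tup{v,\flagmember}$; the $\flagmember$ trigger is excluded for every link except a last link ending in $\tup{v,\flagmember}$, since $op_{j+1}$ with $j+1<k$ (and $op_k$ under the second hypothesis) is never finalized. Your explicit treatment of the failed-$\func{del\_other\_copy}$ path (own cell no longer $\tup{v,\flaginsert}$, so $op_j$ either revalidates or later writes $\tombstone$ over $\collided$ or $\deleted$ rather than over a live copy) spells out what the paper compresses into ``by the code'', and the appeal to Lemma~\ref{lem:not-stale-del} is not needed.
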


\begin{proof}
Note that no copy of $op_{j+1}$ is overwritten with $\tup{v,\flagmember}$.
By the code and since $op_{j+1} \delrel op_j$, 
a copy of $op_j$ is deleted after $op_j$ finds a copy closer to the start of the probe sequence, or $op_{j+1}$ deletes a copy further from the start.
Therefore, 
$\parens*{c_{j+1} -  \func{h}(\var{v})} < \parens*{c_j - \func{h} (\var{v})}$,
as needed. 
If the sequence has not ended, or if the copy of $op_k$ is overwritten with $\deleted$, a copy of $op_{k-1}$ cannot be deleted after $op_{k-1}$ finds an inserted copy of $op_{k}$.
As before, $c_{k}$ is closer to the start of $v$'s probe sequence than $c_{k-1}$.
\end{proof}

The next lemma is a key property of an insert sequence. 

\begin{restatable}{lemma}{insertseq}
\label{lem:insertseq}
Consider a maximal insert sequence $op_k \delrel \ldots \delrel op_1$, $k > 1$.
If a copy of operation $op_k$ is not inserted, 
then among all operations $op_j$ that have already written a copy to the table, no copy of the operation with the minimal distance 
$\parens*{c_j - \func{h}(\var{v})}$ is deleted.
\end{restatable}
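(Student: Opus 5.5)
The plan is to read the statement as being about every configuration (equivalently, every prefix $\alpha'$ of $\alpha$), and to argue by contradiction using Lemma~\ref{lem:insertseq-earlier} and Lemma~\ref{lem:not-stale-del}.

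Fix a prefix and let $S$ be the set of operations $op_j$ in the sequence that have already written a copy. Let $op_m$ be the one whose cell $c_m$ minimizes $c_j - h(v)$. Suppose, toward a contradiction, that a copy of $op_m$ has been deleted by the end of this prefix.

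Since the copy of $op_k$ is not inserted, no copy of $op_m$ can be inserted either. So if $m<k$, then $op_m$ was deleted through the relation $op_{m+1} \delrel op_m$. That relation arises either from a $\collided$ overwrite or from $op_m$ withdrawing its own copy after seeing a copy of $op_{m+1}$. If $m = k$, the sequence is maximal but has not ended, so the final copy of $op_k$ is still present at the end of $\alpha$. Deletion of a copy of $op_k$ within the prefix is then impossible, because only the last copy can be deleted and deletion is irreversible. Hence $m<k$.

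Next, use Lemma~\ref{lem:not-stale-del}. Some copy of $op_{m+1}$ coexists with the last (or second-to-last) copy of $op_m$, and this happens at a point before $op_m$'s copy is deleted. Hence $op_{m+1}$ has already written a copy within the prefix, so $op_{m+1}\in S$.

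Then use Lemma~\ref{lem:insertseq-earlier}. Because the sequence has not ended (no copy of $op_k$ is inserted), the lemma gives $c_{j+1}-h(v) < c_j - h(v)$ for all $j<k$, including the last pair via its second clause. In particular $c_{m+1}-h(v) < c_m - h(v)$, which contradicts the minimality of $op_m$ in $S$.

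The main obstacle is making sure the relation $\delrel$ and the distance ordering are applied to the right copies and at the right times. One point to check is whether "has written a copy" for $op_{m+1}$ holds in the same prefix; the timing from Lemma~\ref{lem:not-stale-del} should settle this. Another is the $\flagcell$ subtlety in the $\CAS$ version, where the deleted copy may be the second-to-last one; Lemma~\ref{lem:not-stale-del} already handles this case. A further point is the edge case where the statement means the end of $\alpha$ rather than an arbitrary prefix; the same argument applies with the prefix taken to be all of $\alpha$. Finally, I would double-check that the withdraw-by-seeing-an-inserted-copy case cannot occur for $op_{k-1}$. This is exactly what the second clause of Lemma~\ref{lem:insertseq-earlier} guarantees when $op_k$ is not inserted.
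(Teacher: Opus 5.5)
Your proof is correct and is essentially the paper's argument. The paper inducts on the first writes of $op_1,\ldots,op_k$ and observes, via Lemma~\ref{lem:insertseq-earlier}, that the copy closest to $h(v)$ can only be deleted because of an operation with a closer cell whose copy was written earlier; that is exactly your minimal-counterexample step, and for the timing the remark right after the definition of $\delrel$ already suffices, so Lemma~\ref{lem:not-stale-del} is more than you need.

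One caveat about scope: later lemmas (e.g., Lemma~\ref{lem:nop+}) apply this lemma at every point before $op_k$'s copy is inserted, including sequences where $op_k$ is eventually inserted as $\tup{v,\flagmember}$, and there Lemma~\ref{lem:insertseq-earlier} says nothing about the pair $(op_{k-1},op_k)$. Your argument still goes through in that case: if $c_k$ lies beyond $c_{k-1}$, then $op_{k-1}$ can only withdraw after reading $\tup{v,\flagmember}$ in $c_k$, i.e., after the point under consideration.
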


\begin{proof}
The proof is by induction on the number of writes
of the first copies of $op_1, \ldots op_k$, 
before the copy of $op_k$ is inserted.
For the base case, note that the first $\tup{v,\flaginsert}$ copy 
is deleted only because of a copy that is written after it.
Thus, until the next write, the cell contains either $\tup{v,\flaginsert}$, $\tup{v,\flagrestart}$ or $\tup{\tup{v,\ast}, \flagcell}$, or a copy of the operation is inserted, 
implying that $op_k$'s copy is inserted.

For the induction step, assume that the lemma holds until the 
$j$-th write and that the copy of $op_k$ is not overwritten 
with $\tup{v,\flagmember}$ or $\deleted$.
Consider the next, $j$th write of a first copy. 
By Lemma~\ref{lem:insertseq-earlier}, 
a copy is deleted due to an $\insr(v)$ operation 
only if it is further from the start of the probe sequence than the copy of the deleting operation. 
Therefore, the copy residing in the cell closest to the start is deleted only because of a copy that is written after it.
In the next write, the cell contains either $\tup{v,\flaginsert}$, $\tup{v,\flagrestart}$ or $\tup{\tup{v,\ast}, \flagcell}$, or a copy of the operation is inserted, implying that $op_k$'s copy is inserted.
\end{proof}


\subsubsection{Placing the Non-Mutator Operations}
\label{sec:proof-nonmutator}

We can now construct the permutation $\pi_v$ by  
going through the rest of the completed operations,
in the order in which they are invoked in the execution,
and inserting them between the mutator operations,
already placed in $\mu_v$.
We place each such operation $op$ \emph{immediately} before or after a mutator operation $op'$; 
in the latter case, we place $op$ immediately before the next mutator operation that follows $op'$ in $\mu_v$, 
or at the end of the permutation if no such operation exists.
This strategy ensures that operations placed after the same mutator operation respect the real-time order among them.

Let $\nmop$ denote the set of the non-mutator operations that complete in $\alpha$,
and it has two subsets:
$\nmop^+$ are the operations that \emph{find the key}, that is, $\insr(v)$ operations that return \textit{false} and $\lookup(v)$ operations that return \textit{true},
while $\nmop^-$ are the operations that \emph{do not find the key}, that is, $\del(v)$ and $\lookup(v)$ operations that return \textit{false}.

We place these non-mutator operations by examining where the mutator operations are linearized. Specifically, for an operation in $\nmop^+$ (respectively, $\nmop^-$), we search for the earliest valid placement between the last mutator linearized before the operation begins and the first mutator linearized after it ends, such that $v$ is present (respectively, absent) according to the sequential specification.


We start with the operations in $\nmop^+$.
If there is a cell containing  $\tup{v,\flagmember}$ 
when an operation $op$ in $\nmop^+$ is invoked, 
we place $op$ in $\pi_v$ after the mutator operation with the latest linearization point occurring before $op$'s invocation.
If there is a mutator $\insr(v)$ operation with a linearization point occurring between $op$'s invocation and response, 
we place $op$ after the earliest such point. 
If none of these two cases hold, we prove that there is 
an $\insr(v)$ in $\mop$ whose linearization point occurs after $ op$'s response.
Among these operations, let $iop$ be the operation whose linearization point is the earliest;
we place $op$ in $\pi_v$ after $iop$.

Lemma~\ref{lem:nop+} 
shows that whenever a non-mutator
insert or lookup operation returns after observing a key and attempting to replace it with a $\flagrestart$ indication
and the first two cases do not hold,
there is an insert sequence whose final operation is ordered after it.
This is used to show that the ordering of operations in $\nmop^+$ 
is well defined.
The lemma also provides additional properties  
that are later used to prove that the ordering of these operations respects the real-time order.

\begin{restatable}{lemma}{noplus}
\label{lem:nop+}
    Consider $op \in \nmop^+$,
    and assume no cell contains  $\tup{v,\flagmember}$ when $op$ is invoked, and there is no mutator $\insr(v)$ operation with a linearization point between the invocation and the response of $op$.
    Then,
    (a) there is a mutator $\insr(v)$ operation with a linearization point occurring after $op$ returns.\\
    Furthermore, let $iop \in \mop$ be the $\insr(v)$ operation with the earliest linearization point occurring after $op$'s response. 
    Then,
        (b) $iop$ does not follow $op$, and 
        (c) a cell containing key $v$ is present in the table throughout the interval between $op$'s response and $\lin(iop)$.
\end{restatable}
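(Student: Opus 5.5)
Since $op \in \nmop^+$, it returns after some call to $\func{validate\_copy}(v,\var{val},i)$ returns \textit{true}. Let $s$ be the step that call relies on. Either it reads $\tup{v,\flagmember}$ or $\tup{v,\flagrestart}$, or it performs a successful $\Modify$ writing $\tup{v,\flagrestart}$, or it rereads a cell that still contains $v$. By hypothesis no cell contains $\tup{v,\flagmember}$ at $op$'s invocation, and no mutator $\insr(v)$ is linearized inside $op$'s interval. By Lemma~\ref{lem:memlog} and Lemma~\ref{lem:onemem}, this means no $\tup{v,\flagmember}$ value is present anywhere during $op$'s interval: a new one could only appear at a linearization point $\tup{\pt,1}$ of an insert, and any copy present at invocation would contradict the first assumption. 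So at step $s$ the cell $i$ holds a non-final copy of $v$ belonging to some $\insr(v)$ operation $op_1$ ($\flaginsert$, $\flagrestart$, or $\flagcell$). Moreover, $op_1$ has not been deleted by that point.

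The plan is to follow the maximal insert sequence $op_k \delrelstar op_1$ starting from $op_1$ and to take $iop$ related to its last element. The sequence is well defined: each copy of an insert operation is either still present, inserted, or deleted because of another insert via $\delrel$. By Lemma~\ref{lem:not-stale-del} the chain links operations whose copies coexist in time. There are three cases for how it ends.

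First, it may end with a copy of $op_k$ overwritten with $\tup{v,\flagmember}$ or $\deleted$. Then $op_k \in \mop$, and its linearization point comes after $s$, since $op_1$'s copy is present at $s$ and each step of the chain happens later. By hypothesis this point is not inside $op$'s interval, so it is after $op$'s response. This gives (a).

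Second, it may not have ended at the end of $\alpha$. Then a copy of $v$ is present at the end, and we must show that the appended insert of $\mu_v$ exists. If $\mu'_v$ ended with an insert, some cell would hold $\tup{v,\flagmember}$ by Lemma~\ref{lem:memlog}, which falls back into the first case. So conditions (1)--(3) hold, and the appended insert is linearized at the last step, which is after $op$'s response.

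Third, a $\del(v)$ may delete some $op_j$ along the way. By definition that overwrite is also a mutator-insert linearization point for $op_j$, so the same argument applies.

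For (c), Lemma~\ref{lem:insertseq} is the main tool. While the sequence has not produced an inserted copy, the copy of minimal distance among those written is never deleted. Since $op_1$'s copy is present at $s$, some copy of $v$ persists from $s$ until the first point where a copy in the chain becomes inserted, that is, overwritten with $\flagmember$ or $\deleted$. That point is $\lin(iop)$ or comes after it. Here $iop$ is chosen as the earliest mutator insert after $op$'s response. If some other mutator insert is linearized before the end of the chain, then $iop$ is that earlier one, and between $op$'s response and it the chain's surviving copy is still present. So (c) holds on the needed interval.

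For (b), we must show $iop$ is invoked before $op$ returns. If $iop$ is an insert from the chain, each $op_{j+1}$ in it writes its first copy before $op_j$'s copy is deleted, and $op_1$ was invoked before $s$. We need $op_{j+1}$ to be invoked before $op$ responds, and this part is delicate. Instead, I would argue by contradiction. Suppose $iop$ is invoked after $op$'s response. Then, by (c), $iop$'s forward and backward scans would find the persisting copy of $v$, because Proposition~\ref{prop:consecutive} keeps the run contiguous and the backward scan meets copies that move toward $h(v)$. So $iop$ would return \textit{false} before writing a copy, contradicting $iop \in \mop$.

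The main obstacle is justifying that $iop$'s scans must see a copy. The surviving minimal copy can move toward $h(v)$, behind the forward scan. The backward scan argument with Lemma~\ref{lem:insertseq-earlier} is needed here: copies only move to cells closer to $h(v)$, so the backward scan cannot miss the minimal surviving copy. There is also a subtlety for the appended insert of $\mu_v$. It is chosen as the earliest-invoked pending insert that wrote $\tup{v,\flaginsert}$, so its invocation precedes that of the chain's pending $op_k$. That in turn is forced before $op$'s response by the same scanning argument.
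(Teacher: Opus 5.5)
Your parts (b) and (c) follow the paper's route. You use persistence of a copy of $v$ along the chain via Lemma~\ref{lem:insertseq}, and a backward-scan contradiction for an $iop$ invoked after $op$ returns. Both, however, rest on the chain $op_k \delrelstar op_1$ ending \emph{after} $op$'s response. Your justification of that in (a) is ``$op_1$'s copy is present at $s$ and each step of the chain happens later,'' and this is the genuine gap.

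The relation $\delrel$ is not monotone in time. Lemma~\ref{lem:not-stale-del} only guarantees that some copy of $op_{j+1}$ coexists with the last (or second-to-last) copy of $op_j$ at \emph{some} moment, and that moment can precede $s$. For instance, let $s$ be the reread in $\func{validate\_copy}$ after a failed $\Modify$. Then $op_1$ may have read $op_2$'s copy before $s$, $op_2$'s copy may then be eliminated by $op_3$, and $op_1$ writes $\tombstone$ only after $s$. So your argument does not exclude the following: $op_k$ was finalized and its $\tup{v,\flagmember}$ copy deleted again before $op$ was invoked, which is consistent with no final copy at $op$'s invocation. The hypothesis only forbids linearization points \emph{inside} $op$'s interval, so ``not inside'' does not give ``after the response.''

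Ruling this out is most of the paper's proof of (a). It shows that $op$'s forward scan reads each $c_j$ before $op_j$ writes its penultimate copy. Suppose $j$ is the first violation. Lemma~\ref{lem:not-stale-del}, applied to $op_j \delrel op_{j-1}$, shows that $op_j$'s copy is not deleted before $op$ reads $c_{j-1}$. Lemma~\ref{lem:insertseq-earlier} places $c_j$ before $c_{j-1}$ in the probe order. Hence $op$ meets a copy of $op_j$ at $c_j$, and could continue past it only if $op_j$ wrote two more copies or was deleted, both impossible. For $j=k$, the surviving copy contradicts $op_k$ having been inserted before $op$'s invocation.

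A second, smaller omission: not every $op \in \nmop^+$ returns via a successful $\func{validate\_copy}$. An $\insr(v)$ that wrote its own tentative copy can return \textit{false} from $\func{del\_copy}$ without being a mutator. This happens when its copy was overwritten with $\collided$, or when it withdrew the copy after finding an earlier copy of $v$. The paper covers this by starting the chain at $op$ itself ($wop = op$). Your argument adapts, but the same forward-scan reasoning is needed to place $op_k$'s insertion after $op$'s response.
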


\begin{proof}
We pick an $\insr(v)$ operation $wop$ that writes $\tup{v,\flaginsert}$ to the table, so that $op$ is pending while a copy of $wop$ is present in the table.
    
    If $op$ is a $\lookup(v)$ or an $\insr(v)$ where no write to the table occurs, then $op$ returns after finding a cell containing the key $v$ and possibly trying to overwrite it with $\tup{v,\flagrestart}$.
    If $op$ finds $\tup{v,\flagmember}$ or $\tup{v,\flagrestart}$, let this value be a (possibly inserted) copy of operation $wop$.
    If $op$ finds $\tup{v,\flaginsert}$ or $\tup{\tup{v,\ast},\flagcell}$ and successfully overwrites it with $\tup{v,\flagrestart}$, let both of these values be copies of operation $wop$.
    If the overwrite fails, it rereads the cell for the last time and finds a value with key $v$. 
    Let this value be the (possibly inserted) copy of operation $wop$.
    
    Otherwise, if $op$ is an $\insr(v)$ that returns \textit{false} after writing $\tup{v,\flaginsert}$, then let $wop$ be operation $op$ itself.
    Let $op_k$ be the final operation in the maximal insert sequence starting with $wop$ (i.e., $op_k \delrelstar wop$). 
    We start by proving (a).

    If no copy of $op_k$ is ever inserted, a copy of $v$ is present in the table at the end of $\alpha$.
    By construction, either $\mu_v$ ends with an $\insr(v)$ operation, and so does $\pi_v$, or an $\insr(v)$ operation with a linearization point occurring at $\alpha$'s last step is placed at the end of $\pi_v$.
    Thus, there is a mutator $\insr(v)$ operation whose linearization point is after $op$ returns.
    
    Next, we assume that a copy of $op_k$ is inserted.
    Assume, towards a contradiction,
    that the linearization point of $op_k$ occurs before $op$ returns.
    Since no $\insr(v)$ operation has a
    linearization point occurring between the invocation and the response of $op$, the linearization point by $op_k$ occurs before $op$'s invocation.
    In other words, a copy of $op_k$ is inserted before $op$'s invocation.
    Since no cell contains $\tup{v,\flagmember}$ when $op$ is invoked, $op_k$'s copy is deleted before the invocation of 
    $op$.

    We show that $op$ reads the cell of $op_k$ in its forward scan
    before the write of the penultimate copy of $op_k$.
    However, since no new copy is written after a copy in inserted, 
    this contradicts that a copy of $op_k$ is inserted before the invocation of $op$.

    First, we show that the read of $op$ from the cell of $iop$, 
    in its forward scan, precedes 
    the write of the penultimate copy of $iop$
    (which is its only copy if $iop$ has only one copy).
    If $op = iop$, this holds since $op$ first completes the forward scan and then writes a copy for the first time.
    If $op \neq iop$ and it reads $\tup{v,\flagrestart}$ or successfully overwrites a copy with $\tup{v,\flagrestart}$, then the $\tup{v,\flagrestart}$ is overwritten with $\tup{v,\flaginsert}$ before $iop$'s copy is either inserted or deleted.
    
    If the overwrite of $op$ with $\tup{v,\flagrestart}$ fails, it rereads the cell and finds a value containing the key $v$.
    This value cannot be $\tup{v,\flagmember}$.
    This is because no cell contains $\tup{v,\flagmember}$ when $op$ is invoked and no insertion of $v$ occurs before $op$ returns.
    So, this value is either $\tup{v,\flaginsert}$, $\tup{v,\flagrestart}$ or $\tup{\tup{v,\ast},\flagcell}$. 
    Then, either another copy of the same operation is written to the cell,
    or this copy is deleted and only then the first copy of $iop$ is written to this cell.

    Assume that for some $j$, $2\leq j \leq k$, 
    the write of the penultimate copy of $op_j$
    precedes the read of the cell of $op_j$ in the forward scan of $op$; assume $j$ is the first such operation in the insert sequence. 
    Since $op_j \delrel op_{j-1}$, 
    by Lemma~\ref{lem:not-stale-del}, a copy of $op_j$ 
    is in the table after the write of the penultimate copy of  $op_{j-1}$.
    Thus, no copy of $op_j$ is deleted 
    right after $op$ reads the cell of $op_{j-1}$.
    
    If $j = k$, this immediately leads to a contradiction.
    Otherwise, by Lemma~\ref{lem:insertseq-earlier}, 
    $c_{j}$ is closer to the start of the probe sequence than $c_{j-1}$.
    Thus, in its forward scan, $op$ first reads $c_j$ and then $c_{j-1}$.
    So, the write of the penultimate copy of $op_j$ 
    precedes the read of $c_j$ in the forward scan of $op$, 
    and no copy of $op_j$ is deleted before this read.
    Then, $op$ encounters a copy of $op_j$ in its forward scan.
    Since $j \neq k$, no copy of $op_j$ is inserted.
    Consequently, since $op$ attempts to overwrite this copy with $\tup{v,\flagrestart}$, it follows that either two additional copies of $op_j$ are written to the cell, or a copy of $op_j$ is deleted before $op$ continues to the next cell. Both scenarios yield a contradiction.

    To prove (b), assume towards a contradiction that $iop$ follows $op$.
    Therefore, since a copy of $wop$ is written before $op$ returns, $iop$ is invoked after a copy of $wop$ is first written.
    By (a), a copy of ${op}_k$ is inserted after $op$'s response, or not inserted at all. 
    Thus, either $iop = {op}_k$ or a copy of $iop$ is inserted before a copy of ${op}_k$ is inserted.
    In both cases, the backward scan of $iop$ starts after a copy of $wop$ is written and ends before a copy of ${op}_k$ is inserted.
    
    Since ${op}_k \delrelstar wop$, 
    Lemma~\ref{lem:insertseq} implies that after a copy of operation $wop$ is first written and until a copy of operation ${op}_k$ is inserted, the key $v$ always appears in the table and can move only to a cell closer to the start of the probe sequence.
    This implies that $iop$ finds the key $v$ in its initial backward scan, which contradicts the fact that it is a mutator operation.

    Finally, we prove (c).
    By (a), a copy of ${op}_k$ is inserted after $op$'s response, or not inserted at all. Thus, either $iop = {op}_k$ or a copy of $iop$ is inserted before a copy of ${op}_k$ is inserted.
    Since ${op}_k \delrelstar wop$, 
    by Lemma~\ref{lem:insertseq}, 
    between the point where a copy of operation $wop$ is first written and until a copy of operation ${op}_k$ is inserted, the key $v$ always appears in the table.
    This implies that throughout the interval between $op$'s response and until a copy of operation ${op}_k$ is inserted, there is a cell containing 
    a copy of $v$ belonging to one of the operations in the insert sequence.
    Since the interval between $op$'s response and $\lin(iop)$ is contained in this interval, the claim follows.
\end{proof}

\begin{restatable}{lemma}{nmopplus}
\label{lemma:spec nmop plus}
The linearization of $\nmop^+$ respects the sequential specification of a dictionary.
\end{restatable}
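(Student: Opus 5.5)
The plan is to check, for each $op \in \nmop^+$, that at its position in $\pi_v$ the key $v$ is logically present. In the sequential specification this means the last mutator preceding $op$ in $\pi_v$ must be an $\insr(v)$. Since $op$ returns \textit{true} (lookup) or \textit{false} (insert), this is exactly what the specification demands. Placing several non-mutators immediately after the same mutator does not alter the abstract state, because they do not modify the set. So it suffices to treat each $op$ individually against the prefix of $\mu_v$ ending at the mutator $op$ is placed after. Lemma~\ref{lemma:spec nmop} already guarantees that $\mu_v$ itself is a legal sequential history, so the abstract state after each mutator is well defined. I would go through the three placement rules in turn.

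\emph{Case 1: some cell contains $\tup{v,\flagmember}$ when $op$ is invoked.} Here $op$ is placed after the mutator $op'$ with the latest linearization point before $op$'s invocation. All mutators of $\mu'_v$ are ordered by their linearization points. No mutator is linearized between $\lin(op')$ and the invocation, and the possibly appended final insert is linearized at the last step. Hence the prefix of $\mu'_v$ ending at $op'$ is exactly $\mu'_v$ for the execution prefix ending just before $op$'s invocation. By Lemma~\ref{lem:memlog} applied to that prefix, it ends with $\insr(v)$, because a $\tup{v,\flagmember}$ cell exists. One subtlety must be checked: two mutators linearized at the same step with tags $1,2$. The ``latest'' one is the tag-$2$ operation, and Lemma~\ref{lem:memlog} concerns the state after both, so this is consistent.

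\emph{Case 2: a mutator $\insr(v)$ has a linearization point inside $op$'s interval.} Then $op$ is placed immediately after the earliest such $\insr(v)$, so the preceding mutator is an insert and $v$ is present. The only subtlety is a $\deleted$ overwrite that linearizes an insert and a delete at the same step. Here ``after the insert'' must mean immediately after the insert itself, not after the step. If the insert carries tag $1$ and the delete tag $2$, $op$ sits strictly between them. The placement rule (``immediately after $op'$, before the next mutator'') already allows this.

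\emph{Case 3: neither holds.} By Lemma~\ref{lem:nop+}(a) there is a mutator $\insr(v)$ linearized after $op$'s response. We place $op$ immediately after the earliest such $iop$, so again the preceding mutator is an insert and $v$ is present. If $iop$ is the appended final insert of $\mu_v$, it is still an $\insr(v)$, and Lemma~\ref{lemma:spec nmop} covers its legality.

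The main obstacle is the bookkeeping at shared linearization steps in Cases 1 and 2. Concretely, one must confirm that ``the mutator with the latest linearization point before invocation'' yields the state described by Lemma~\ref{lem:memlog}. That lemma is stated for a whole execution prefix, and I would apply it to the prefix ending at the step $\pt$ of that mutator. Real-time consistency (using Lemma~\ref{lem:nop+}(b),(c)) is not needed here and will be handled separately.
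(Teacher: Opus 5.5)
Your proposal is correct and follows essentially the same route as the paper, which also shows that each $op \in \nmop^+$ is placed immediately after a mutator $\insr(v)$, using Lemma~\ref{lem:memlog} for the first placement rule, the definition for the second, and Lemma~\ref{lem:nop+}(a) for the third. Your extra care about the tags $1,2$ at shared linearization steps is consistent with the construction and just makes explicit what the paper leaves implicit.
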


\begin{proof}
We prove that every operation $op$ in $\nmop^+$
is placed immediately after a mutator $\insr(v)$ 
operation in $\pi_v$.

If there is a cell containing $\tup{v,\flagmember}$ 
when $op$ is invoked, then by Lemma~\ref{lem:memlog}, $\mu'_v$ constructed up to this point ends with an $\insr(v)$ operation.
Thus, the last mutator operation placed up until this point in $\mu_v$, and consequently also in $\pi_v$, is $\insr(v)$ and $op$ is placed after it in $\pi_v$.

If there is a mutator $\insr(v)$ operation whose linearization point in between the invocation and the response of $op$, 
then $op$ is placed after a mutator $\insr(v)$ operation.

Finally, if neither of these two cases holds, by Lemma~\ref{lem:nop+}(a), there is a mutator $\insr(v)$ operation with a linearization point occurring after $op$ returns, and $op$ is placed after the one with the earliest linearization point.
\end{proof}

Consider now an operation $op$ in $\nmop^-$.
If no cell contains the key $v$ when $op$ is invoked, 
or no copy of $v$ is deleted between its invocation and response, we place $op$ in $\pi_v$ before the next mutator operation whose linearization point is after the invocation of $op$.
If the first case does not hold, we prove that there is 
a $\del(v)$ in $\mop$ whose linearization point occurs between $op$'s invocation and response.
Hence, if neither of these conditions holds,
let $dop$ in $\mop$ be the $\del(v)$ operation with the earliest linearization point 
between the invocation and response of operation $op$.
We place $op$ after $dop$ in $\pi_v$.

\begin{restatable}{lemma}{keynotfound}
\label{lem:keynotfound}
    Let $op \in \nmop^-$.
    Suppose that when $op$ is invoked, the table contains a (possibly inserted) copy of $v$  belonging to operation $op_1$.
    Let ${op}_k \delrelstar {op}_1$, then a copy of operation ${op}_k$ is inserted and deleted before $op$ returns.
\end{restatable}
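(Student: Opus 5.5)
We argue by contradiction, following the structure of the proof of Lemma~\ref{lem:nop+}(a). Let $op\in\nmop^-$ and $op_1$ be as in the statement, and let $op_k \delrelstar op_1$ be the maximal insert sequence starting at $op_1$. Suppose that it is \emph{not} the case that a copy of $op_k$ is inserted and deleted before $op$ returns. Then there are three possibilities: no copy of $op_k$ is ever inserted; a copy of $op_k$ is inserted after $op$ returns; or it is inserted before and remains undeleted throughout $op$. In each case we show that $op$ finds the key, contradicting $op\in\nmop^-$.

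The first step is to establish that a copy of $v$ belonging to the insert sequence exists throughout $op$'s execution interval, and that it lies in the run starting at $h(v)$. By Lemma~\ref{lem:insertseq}, from the moment the first copy of $op_1$ is written (which is before $op$'s invocation) until a copy of $op_k$ is inserted, the copy with minimal distance from $h(v)$ is never deleted. Hence $v$ is always present in the table, and the surviving copy only moves to cells closer to $h(v)$. In the third case, the inserted $\tup{v,\flagmember}$ copy of $op_k$ (or its $\deleted$ marker, which still contains $v$) stays in place until it is deleted. So in every case, at every point of $op$'s interval, some cell $c$ contains a copy of $v$ from the sequence. By Proposition~\ref{prop:consecutive}, all cells between $h(v)$ and $c$ are non-empty, so $c$ lies inside the run that $op$ traverses.

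The main step, and the expected obstacle, is to show that the forward scan followed by the backward scan of $op$ must actually hit such a copy. I would reuse the argument pattern from Lemma~\ref{lem:nop+}(a). Consider the first operation $op_j$ in the sequence whose copy $op$ ``misses''. Lemma~\ref{lem:not-stale-del} shows that $op_j$'s copy coexists with $op_{j-1}$'s last (or second-to-last) copy. Lemma~\ref{lem:insertseq-earlier} shows that $c_j$ is closer to $h(v)$ than $c_{j-1}$. Together these imply that the backward scan of $op$, which moves from the end of the run toward $h(v)$, reaches $c_j$ after $op_j$'s copy has been written and before it is deleted. The subtle points are the last element $op_k$ and the possible overwrite of the last copy by $\deleted$, because there the distance inequality of Lemma~\ref{lem:insertseq-earlier} does not apply. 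These will need a separate case, using that a $\tup{v,\flagmember}$ copy stays fixed. Care is also needed for $\tup{\tup{v,\ast},\flagcell}$ values and for the forward-scan end index possibly being before $c$. For the latter, I would argue that the run end read by the forward scan is at least $c$, since cells are never reset to $\emptycell$.

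Finally, we check that encountering the copy causes $op$ to return a non-$\nmop^-$ value. If $op$ is a $\lookup$, $\func{validate\_copy}$ returns \textit{true}: it either sees $\tup{v,\flagmember}$ or $\tup{v,\flagrestart}$, successfully writes $\tup{v,\flagrestart}$, or rereads a cell that still contains $v$. The only way it returns \textit{false} is if the copy vanished, which would be a deletion of the sequence's survivor, contradicting the invariant unless $op_k$'s copy is inserted and deleted before $op$ returns. If $op$ is a $\del$, $\func{try\_delete}$ either succeeds, making $op$ a mutator, or returns \textit{false} only after seeing a finalized copy deleted, which again is exactly the excluded event. It returns $\bot$ only if the copy vanished, handled as above. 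This gives the contradiction.
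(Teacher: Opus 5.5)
\textbf{There is a genuine gap, and it is exactly the case you defer.} Suppose the last copy of $op_k$ is overwritten with $\tup{v,\flagmember}$ \emph{after} $op$ is invoked, and is not overwritten with $\tombstone$ before $op$ returns. Lemma~\ref{lem:insertseq-earlier} deliberately exempts the last link in this case: $op_{k-1}$ may withdraw because it saw a finalized copy of $op_k$ lying \emph{further} from $h(v)$. So when $op_k$ is finalized, the live copy of $v$ can jump outward, from the minimal-distance cell to $c_k$. Your invariant ``the surviving copy only moves toward $h(v)$'' then fails. A backward scan that has already passed $c_k$ could miss every copy once the remaining tentative copies withdraw. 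Worse, $c_k$ may have been $\emptycell$ when your forward scan ended, so the backward scan need not start beyond it.

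The fact that a $\tup{v,\flagmember}$ value stays in place does not rule this out. What you need is that $op_k$'s copy is \emph{already} in $c_k$ when $op$'s scan reads that cell. The paper gets this from a timing argument that is missing from your plan. Because $op_k$'s last $\tup{v,\flaginsert}$ copy is eventually finalized, the validation scan that follows it meets no copy of $v$ closer to $h(v)$. By Lemma~\ref{lem:insertseq}, such a copy would exist if a closer member of the sequence had already written. This gives three cases:
\begin{itemize}
\item If $c_k$ is further than $c_1$, then $op_k$'s copy was written before $op_1$'s. It is therefore present when $op$ is invoked, and the forward scan reads it.
\item If $c_k$ lies strictly between $c_{j+1}$ and $c_j$, then $op_k$'s copy was written before those of $op_{j+1},\dots,op_{k-1}$. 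The backward scan reads $c_j$, then $c_k$, then $c_{j+1}$, and finds the copy in $c_k$.
\item If $c_k$ is closer than $c_{k-1}$, Lemma~\ref{lem:insertseq} alone suffices.
\end{itemize}
Once $op$ reads a copy of $op_k$ in $c_k$, it can return \textit{false} only if the $\tup{v,\flagmember}$ value is overwritten with $\tombstone$ before it returns.

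\textbf{Smaller points to fix.}
\begin{itemize}
\item $\deleted$ does not contain $v$. Moreover, an overwrite of $op_k$'s copy with $\deleted$ before $op$ returns already yields the conclusion, so it never belongs in your third case.
\item For the part of the interval before $op_k$ is inserted, your ``first missed $op_j$'' argument via Lemma~\ref{lem:not-stale-del} is borrowed from Lemma~\ref{lem:nop+}(a). It does not by itself place $op$'s backward scan at $c_j$ inside the right window. The direct argument is this: by Lemma~\ref{lem:insertseq}, the minimal-distance live copy never moves away from $h(v)$, while the backward scan moves toward $h(v)$ one cell per read, so the two must meet.
\item A copy disappearing between $op$'s read and its reread is not a contradiction in itself. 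A copy that is no longer minimal may be withdrawn legitimately, so you must continue the scan under the invariant rather than stop there.
\end{itemize}
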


\begin{proof}
    First, assume that when $op$ is invoked, the table contains a $\tup{v,\flagmember}$ copy of $op_1$.
    Suppose for the sake of contradiction that this value is not deleted before $op$ returns.
    According to the algorithm, $op$ should encounter this value.
    If $op$ is a $\lookup(v)$ operation, it returns \textit{true}.
    If $op$ is a $\del(v)$ operation, it either successfully overwrites this value with $\tombstone$ and returns \textit{true}, or fails and returns \textit{false}.
    In the latter case, the failure implies that the $\tup{v,\flagmember}$ value is overwritten with $\tombstone$.
    In all cases, this contradicts the assumption that the value is not deleted.

    Next, assume that when $op$ is invoked, the table contains a copy of $op_1$ with value $\tup{v,\flaginsert}$, $\tup{v,\flagrestart}$, or $\tup{\tup{v,\ast},\flagcell}$.
    
    If a copy of operation ${op}_k$ is not deleted before $op$ returns,
    by Lemma~\ref{lem:insertseq}, there is always a copy of key $v$ in the table,
    during both the forward and backward scans of the operation.
    Furthermore, if a copy of $v$ is deleted, there is a copy of $v$ in a cell closer to the start of the probe sequence. 
    Hence, a copy of $v$ is found during the backward scan of $op$.
    
    If $op$ is a $\lookup(v)$ operation, this implies that it returns \textit{true}.

    If $op$ is a $\del(v)$ operation, the copy cannot move closer to the start once it reaches the beginning of the probe sequence. There, $op$ finds it, successfully overwrites it with $\deleted$, and returns \textit{true}.
    However, this contradicts the assumption that $op$ returns \textit{false}.
    Hence, a copy of operation ${op}_k$ is deleted before $op$ returns.

    Note that, by Lemma~\ref{lem:insertseq}, until a copy of ${op}_k$ is deleted, there is always a copy of $v$ in the table that can move only to a cell closer to the start of the probe sequence.
    
    If ${op}_k$'s copy is overwritten
    with $\deleted$, the claim holds immediately.
    Otherwise, if it is overwritten with $\tup{v,\flagmember}$ before $op$ is invoked, then as shown in the first part of the proof, the copy is deleted before $op$ returns.

    Assume ${op}_k$'s copy is overwritten with $\tup{v,\flagmember}$ after $op$'s invocation.
    We consider three cases, and in all of them we show that the $\tup{v,\flagmember}$ value of ${op}_k$ is overwritten with $\tombstone$ before $op$ returns.
    Note that it is enough to show that cell $c_k$ is read in one of the scans of $op$ after ${op}_k$'s copy is first written to the table.
    Since $op$ returns \textit{false}, a copy of ${op}_k$ is either not found or $op$ tries to overwrite it and fails.
    Both cases imply that $\tup{v,\flagmember}$ is overwritten with $\tombstone$.

    \textbf{Case 1:}
     ${c_{k} - \func{h}(\var{v})} < {c_{k-1} - \func{h}(\var{v})}$.
    This implies that $c_k$ is the cell closest to the start of the probe sequence among all cells in the sequence.
    Then, cell $c_k$ is read in the backward scan after ${op}_k$'s copy is first written; otherwise, this contradicts Lemma~\ref{lem:insertseq}.

    \textbf{Case 2:}
    ${c_{k} -  \func{h}(\var{v})} > {c_{1} - \func{h}(\var{v})}$. 
    This implies that $c_k$ is the cell furthest from the start of the probe sequence among all cells in the sequence.
     
     We show that ${op}_k$'s copy is first written before ${op}_1$'s copy is first written.
     Assume towards a contradiction that this does not hold, 
     and consider the last write of $\tup{v,\flaginsert}$ by ${op}_k$.
     Note that by the initial assumption, this value is subsequently overwritten with $\tup{v,\flagmember}$. 
     
     During the scan performed by ${op}_k$ following this write,  
     the operation encounters a copy of $v$ in a cell closer to the start of the probe sequence. Since ${op}_1$'s copy is already present while ${op}_k$'s copy is not inserted yet, this follows from Lemma~\ref{lem:insertseq}.
    However, this implies that ${op}_k$ deletes its own copy, which contradicts the fact that this copy is eventually inserted.

    Recall that ${op}_1$'s copy is already written when $op$ is invoked.
    Since we showed that ${op}_k$'s copy is written before ${op}_1$'s copy, it follows that ${op}_k$'s copy is also present when $op$ is invoked.
    Consequently, ${op}_k$'s copy is written before cell $c_k$ is read during $op$'s forward scan.

    \textbf{Case 3:}
    For some index $1 \leq j < k-1$, ${c_{j+1} - \func{h}(\var{v})} < {c_{k} -  \func{h}(\var{v})} < {c_{j} - \func{h}(\var{v})}$.
    That is, ${op}_{j+1}$'s cell is closer to the start of the probe sequence than ${op}_k$'s cell, 
    and ${op}_{j}$'s cell is further from it.

    We show that 
    ${op}_k$'s copy is first written before any of the copies of operations ${op}_{\ell}$, $j+1 \leq \ell \leq k -1$, are first written.
    If this is not the case, consider the last write of $\tup{v,\flaginsert}$ by ${op}_k$. Note that by the initial assumption, this value is subsequently overwritten with $\tup{v,\flagmember}$.

    During the scan performed by ${op}_k$ following this write,  
    the operation encounters a copy of $v$ in a cell closer to the start of the probe sequence. 
    Since, for some $j+1 \leq \ell \leq k -1$, 
    ${op}_{\ell}$'s copy is already present while ${op}_k$'s copy is not inserted yet, this follows from Lemma~\ref{lem:insertseq}.
    However, this implies that ${op}_k$ deletes its own copy, which contradicts the fact that this copy is eventually inserted.

    Note that in a backward scan, first cell $c_j$ is read, then cell $c_k$, and finally cell $c_{j+1}$.
    If in $op$'s backward scan it does not return after reading cell $c_{j}$, 
    by Lemma~\ref{lem:insertseq}, either ${op}_k$'s copy is inserted, or a closer cell $c_\ell$, for $j +1 \leq \ell \leq k$, contains a copy of $v$.
    This follows from ${op_1}$'s copy already being present in the table before this scan begins.

    As ${op}_k$'s copy is written before any of the copies of operations ${op}_\ell$, $j+1\leq \ell\leq k-1$, are first written, a copy of ${op}_k$ is written in the cell $c_k$ when $c_k$ is read in $op$'s backward scan.
\end{proof}

\begin{lemma}\label{lemma:spec nmop minus}
The linearization of $\nmop^-$ respects the sequential specification of a dictionary.
\end{lemma}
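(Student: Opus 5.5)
The plan is to mirror the proof of Lemma~\ref{lemma:spec nmop plus}. We show that every $op \in \nmop^-$ is placed in $\pi_v$ at a position where, according to $\mu_v$, the key $v$ is absent. This means either no mutator precedes it, or the nearest preceding mutator in $\mu_v$ is a $\del(v)$. By Lemma~\ref{lemma:spec nmop} the mutators alone respect the specification, and non-mutators do not change the abstract state. It therefore suffices to check, for each of the placement rules, that the mutator immediately before $op$ is not an $\insr(v)$. Since the rules always put $op$ immediately before or after a specific mutator, this is one case check per rule.

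\textbf{Case 1: $op$ is placed before the next mutator linearized after $op$'s invocation.} This happens when no cell contains $v$ at $op$'s invocation, or no copy of $v$ is deleted during $op$. The preceding mutator is the last one linearized before the invocation. Suppose, towards a contradiction, that it is an $\insr(v)$. By Lemma~\ref{lem:memlog}, either some cell contains $\tup{v,\flagmember}$ at the invocation, or the last mutator of $\mu'_v$ is not an $\insr(v)$. The latter is excluded, since the appended insert of $\mu_v$ is linearized at the last step of $\alpha$, after $op$ returns. So a $\tup{v,\flagmember}$ copy, belonging to some $op_1$, exists at the invocation, and in particular a cell contains $v$.
\begin{itemize}
\item If no cell contains $v$ at the invocation, this already contradicts the above.
\item Otherwise no copy of $v$ is deleted during $op$. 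Lemma~\ref{lem:keynotfound} with $op_k=op_1$ (the inserted copy ends the sequence) says a copy of $op_k$ is inserted and deleted before $op$ returns. Since it is already inserted at the invocation, it must be deleted during $op$, a contradiction.
\end{itemize}

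\textbf{Case 2: some cell contains $v$ at the invocation and a copy is deleted during $op$.} We first establish that a $\del(v)$ in $\mop$ is linearized inside $op$'s interval. Take the copy of $op_1$ present at the invocation and form $op_k \delrelstar op_1$. Lemma~\ref{lem:keynotfound} gives that $op_k$'s copy is inserted and then deleted before $op$ returns. That deletion is an overwrite of $\tup{v,\flagmember}$ with $\tombstone$, or of a tentative copy with $\deleted$, so it is performed by a mutator $\del(v)$. We must show its linearization point is after the invocation, i.e.\ that the deletion is not entirely before $op$ starts.
\begin{itemize}
\item If $op_k$'s inserted copy still exists at the invocation, the deletion clearly comes after.
\item If it was already deleted, then $op_1$'s copy could not still be present, because $op_k \delrelstar op_1$ and $op_1$'s copy is deleted no later than $op_k$'s. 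This is where the ordering facts of Lemma~\ref{lem:insertseq} are needed.
\end{itemize}

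With $dop$ the earliest such $\del(v)$, $op$ is placed immediately after $dop$. The main remaining subtlety is when $dop$ overwrites a tentative copy with $\deleted$. Then $dop$ and an $\insr(v)$ share the same step, and we need the delete to be the second of the pair, i.e.\ no $\tup{v,\flagmember}$ exists at that step. If instead a $\tup{v,\flagmember}$ exists, so that the delete is ordered first, I would argue $dop$ is still a delete of $v$ in the abstract state. To keep $op$ in an ``absent'' position, I would pick the earliest delete linearized with tag $1$ on a $\tup{v,\flagmember}$, or tag $2$ on $\deleted$, and check via Lemma~\ref{lem:memlog} that the state right after it has $v$ absent.

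This final step is where I expect the main obstacle. It requires choosing the correct delete (possibly refining ``earliest'' to ``earliest after which $v$ is absent''). It also requires showing, using Lemma~\ref{lem:keynotfound}, that such a delete exists within $op$'s interval.
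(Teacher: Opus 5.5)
\paragraph{Gap in Case~2: the existence of $dop$.}
Your Case~1 is sound, and it is essentially the paper's ``no $\tup{v,\flagmember}$ at the invocation'' branch via Lemma~\ref{lem:memlog}. The gap is in Case~2, where you argue that some $dop$ exists. The claim that $op_1$'s copy is deleted no later than $op_k$'s is false, and Lemma~\ref{lem:insertseq} does not give it, since that lemma only covers sequences that have not ended. A withdrawing insert $op_1$ first reads $op_2$'s copy and only then runs $\func{del\_copy}$, and $op_2$'s copy can end in between.

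Concretely:
\begin{itemize}
\item Two inserts finish their scans on an empty run. $op_2$ places a tentative copy at $h(v)$, and $op_1$ places one at $h(v)+1$.
\item $op_1$ reads $op_2$'s copy and enters $\func{del\_copy}$. It reads its own cell but has not yet written $\tombstone$.
\item A $\del(v)$ overwrites $op_2$'s copy with $\deleted$, so $op_k=op_2$ is inserted and deleted.
\item A $\lookup(v)$ is now invoked while $op_1$'s copy is still present. Then $op_1$ writes $\tombstone$, and the lookup scans past it and returns \textit{false}.
\end{itemize}
At the lookup's invocation a cell contains $v$, and a copy of $v$ is deleted during the lookup. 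Yet no mutator at all is linearized in its interval. So no $dop$ can be extracted from ``some copy of $v$ is present at the invocation''.

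\paragraph{The missing idea.}
Split on whether a $\tup{v,\flagmember}$ value is present at $op$'s invocation, not merely any copy of $v$; this is what the paper does.
\begin{itemize}
\item If no $\tup{v,\flagmember}$ value is present, Lemma~\ref{lem:memlog} shows that $v$ is already absent at the invocation. So placing $op$ just before the next mutator is valid and no $dop$ is needed. This is your Case~1 argument, and it also covers the scenario above.
\item If one is present, the first part of Lemma~\ref{lem:keynotfound} shows that this very value is overwritten with $\tombstone$ before $op$ returns. That overwrite is a mutator $\del(v)$ linearized strictly inside $op$'s interval, so $dop$ exists.
\end{itemize}

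\paragraph{Your ``main obstacle'' is not one.}
By your own reduction, it suffices that the mutator immediately preceding $op$ be a $\del(v)$. When $op$ is placed immediately after $dop$, that is exactly the case. Every $\del(v)$ in $\mu_v$ returns \textit{true}, and $\mu_v$ obeys the specification (Lemma~\ref{lemma:spec nmop}). Hence $v$ is absent right after $dop$, whatever the tags are. If $dop$ has tag~$1$ and an $\insr(v)$ has tag~$2$ at the same step, $op$ is placed between the two, which is an absent position. This is the paper's one-line ``placed directly after a $\del(v)$'' case. There is no need to re-select the delete, and doing so would change the construction the lemma is about.
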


\begin{proof}
    Let $op \in \nmop^-$.
    If $op$ is placed directly after a $\del(v)$ operation, the lemma clearly holds.
    If no cell contains  $\tup{v,\flagmember}$ at $op$'s invocation point, then by Lemma~\ref{lem:memlog}, $\mu'_v$ constructed up to this point does not end with an $\insr(v)$ operation.
    Thus, the next mutator operation placed after this point in $\mu_v$, and consequently also in $\pi_v$, if exists, 
    should be an $\insr(v)$ operation and $op$ can be placed before it in $\pi_v$.
    If there is a cell containing  $\tup{v,\flagmember}$ at $op$'s invocation point, by Lemma~\ref{lem:keynotfound}, this $\tup{v,\flagmember}$ value is deleted before $op$ returns.
    Thus, by construction, $op$ is placed after a $\del(v)$ operation.
\end{proof}

The proof of the next lemma shows that $\pi_v$ preserves  real-time order.
Together with Lemmas~\ref{lemma:spec nmop}, \ref{lemma:spec nmop plus} and~\ref{lemma:spec nmop minus}, 
it follows that $\pi_v$ is a linearization of $\alpha$ restricted to operations with input $v$.

\begin{restatable}{lemma}{linwhole}
\label{lemma:real-time whole}
For any execution $\alpha$,
    $\pi_v$ is a valid linearization of
    $H(\alpha) |_v$.
\end{restatable}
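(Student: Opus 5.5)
Since Lemmas~\ref{lemma:spec nmop}, \ref{lemma:spec nmop plus} and~\ref{lemma:spec nmop minus} already give that $\pi_v$ obeys the sequential specification, what remains is to check the other two conditions of a linearization. First, $\pi_v$ must be a permutation of a completion of $H(\alpha)|_v$. Second, $\pi_v$ must respect real-time order.

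For the completion condition, note that $\pi_v$ contains every completed non-mutator operation and every mutator. The mutators are all operations that overwrite a copy, plus possibly the one pending $\insr(v)$ appended to $\mu_v$. Each pending mutator is given response \textit{true}, which defines the completion. We must also check that every completed mutator actually returned \textit{true} in $\alpha$. A mutator $\del$ returns \textit{true} by the code of $\func{try\_delete}$. For an $\insr$ whose copy is overwritten by $\tup{v,\flagmember}$ or $\deleted$, it returns \textit{true} by the code of the final promotion and of $\func{del\_copy}$. Completed operations returning $\abort$ are placed at any point consistent with real time.

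For real-time order, I would first show that every operation $op$ is assigned a point $t(op)$ in $\alpha$, and that $\pi_v$ orders operations consistently with these points (ties broken by the marks and by invocation order). The key property is that $t(op)$ lies within $op$'s interval, or at a point that does not contradict precedence. For mutators, $t(op)=\lin(op)$ is a step of $op$ or a step inside its interval: the overwrite is performed by $op$ itself, or, for an $\insr$ whose copy is overwritten with $\deleted$, during $op$'s execution, since $op$ is pending until it cleans up. The appended insert is placed at the last step while still pending. The case analysis for pairs then goes as follows.

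For two mutators the claim is immediate. For a non-mutator $op\in\nmop^-$, it is placed either just after the latest mutator before its invocation, or right after a $\del$ linearized inside its interval. In both cases the effective point lies in $[\mathit{inv}(op),\mathit{resp}(op)]$, so precedence with mutators and with other operations is preserved; operations sharing a slot are ordered by invocation, which is consistent with real time. For $op\in\nmop^+$, the first two placements are again within the interval.

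\textbf{The main obstacle is the third case}, where $op$ is placed after $iop$ whose linearization point occurs \emph{after} $op$ returns. Here I would use Lemma~\ref{lem:nop+}. Part (b) shows $iop$ does not follow $op$, so placing $op$ after $iop$ does not reverse real-time order between them. For any other operation $op'$ with $op \to op'$ in real time, we must show $op'$ is not placed before $op$:
\begin{itemize}
\item If $op'$ is a mutator, its point is after $op$'s response. By the choice of $iop$ as the earliest mutator insert after the response, together with part (c) (a copy of $v$ is present throughout, so no successful $\del$ can intervene without a preceding insert), $op'$ cannot be linearized between $op$'s response and $\lin(iop)$ unless it is $iop$ itself or later.
\item If $op'\in\nmop^+$, then $op'$ is placed after $iop$ or a later mutator insert.
\item If $op'\in\nmop^-$, part (c) implies a copy of $v$ is present at its invocation. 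Then Lemma~\ref{lem:keynotfound} forces a $\del$ linearized inside its interval, hence after $iop$.
\end{itemize}
Conversely, if $op'\to op$, then $op'$ is placed before $op$, since $op$'s slot is at or after its invocation. Checking these slot-sharing cases carefully, especially ties at the same step, is where I expect most of the work.
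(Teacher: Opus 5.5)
Your proposal is correct and follows essentially the same route as the paper. Mutators are linearized inside their own intervals, non-mutators in the first placement cases sit inside their intervals with invocation-order tie-breaking, and the third $\nmop^+$ case is handled as in the paper via Lemma~\ref{lem:nop+}(b),(c) and Lemma~\ref{lem:keynotfound}; your explicit check that completed mutators return \textit{true} is a harmless addition.

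One small correction: the fact that no mutator other than $iop$ is linearized between $op$'s response and $\lin(iop)$ does not come from part~(c). It comes from the absence of any $\tup{v,\flagmember}$ throughout that interval. That absence forces every $\del(v)$ mutator there to be a $\deleted$-overwrite, ordered immediately after the $\insr(v)$ mutator linearized at the same step, which must be $iop$.
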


\begin{proof}
Lemmas~\ref{lemma:spec nmop}, \ref{lemma:spec nmop plus}, and~\ref{lemma:spec nmop minus} imply that $\pi_v$ respects the sequential specification. 
It remains to prove that $\pi_v$ preserves the real-time order.

The real-time order of operations in $\mop$ is preserved in $\pi_v$,
since mutator operations are placed in $\pi_v$ at a point between their invocation and response.
If a final $\insr(v)$ is appended to $\mu'_v$, 
resulting in $\mu_v$, this operation is pending.
Thus, no operation in $\alpha$ follows it, 
so appending it to the end of $\mu'_v$ 
preserves the real-time order. 

Next, we show that operations in $\nmop$ are placed after a mutator operation whose linearization point is in the interval between
the last linearization point before the operation's invocation and the earliest linearization point after it returns.
For operations in $\nmop^-$, this holds by definition.

Let $op \in \nmop^+$. If there is a cell containing  $\tup{v,\flagmember}$ when $op$ is invoked, or there is a mutator $\insr(v)$ operation with a linearization point occurring between $op$'s invocation and response, the claim holds by definition.
If neither of the cases holds, $op$ is placed after $iop\in mop$, the $\insr(v)$ operation with the earliest linearization point occurring after the response of $op$.
We prove that no other mutator operation is linearized
between the response of $op$ and $\lin(iop)$.

By the definition of $iop$, no mutator $\insr(v)$ operation other than $iop$ is linearized within this interval.
Consider a mutator $\del(v)$ operation $dop$ with a linearization point in this interval.
Recall our assumption that at $op$'s invocation, no cell contains  $\tup{v, \flagmember}$, and no $\insr(v)$ is linearized between $op$'s invocation and response. 
Consequently, no cell contains $\tup{v, \flagmember}$ during the interval between $op$'s response and $\lin(iop)$.
Hence, there is an $\insr$ operation linearized at the same execution step as $dop$ that precedes $dop$ in the linearization order, and 
this $\insr(v)$ operation must be $iop$.
Therefore, the linearization point of $dop$ must fall outside this interval.

Since the placement of operations in $\nmop^-$ is confined to the interval bounded by the latest linearization point prior to the operation's invocation and the operation's response, the real-time order is preserved between operations in $\mop$ and $\nmop^-$, as well as among operations in $\nmop^-$.

Operations in $\nmop^+$ are placed in an interval that includes
the earliest linearization point following the operation's response.
This preserves the real-time order between any two operations in $\nmop^+$.
We next show this also holds for an operation in $\nmop^+$
and operations in $\mop$ or $\nmop^-$.

Let $op^+ \in \nmop^+$. If there is a cell containing  $\tup{v,\flagmember}$ when $op$ is invoked, or there is a mutator $\insr(v)$ operation with a linearization point occurring between $op$'s invocation and response, the placement interval ends at $op$'s response.
Consequently, in this case, the real-time order is naturally preserved between $op^+$ and any operation in $\mop$ or $\nmop^+$.

We next consider the case where neither condition holds. 
Here, $op^+$ is placed after $iop \in \mop$, defined as the $\insr(v)$ operation with the earliest linearization point following the response of $op^+$.
Since we showed that no other mutator operation is linearized between $op^+$'s response and $\lin(iop)$, the only real-time order violation can be between $op^+$ and $iop$.
Yet, Lemma~\ref{lem:nop+}(b) guarantees that $iop$ does not follow $op^+$.

Let $op^-\in \nmop^-$. If $op^+$ follows $op^-$, then the placement intervals do not intersect, and the real-time order is preserved.

Assume $op^+$ precedes $op^-$, but is placed after it.
If $\lin(iop)$ occurs before the invocation of ${op}^-$, ${op}^-$ is placed after ${op}^+$. 
Since ${op}^+$ is placed after ${op}^-$, 
$\lin(iop)$ occurs after the invocation of ${op}^-$.

The invocation of $op^-$ falls within the interval between the response of $op^+$ and $\lin(iop)$. Consequently, by Lemma~\ref{lem:nop+}(c), at the time of this invocation, the table contains a copy of $v$.

By Lemma~\ref{lem:keynotfound}, a copy of $v$ is deleted between ${op}^-$'s invocation and response.
Let $dop \in \mop$ be the $\del(v)$ operation with the earliest linearization point that occurs between ${op}^-$'s invocation and response. ${op}^-$ is placed after $dop$.
We already showed that no other mutator operation is linearized between the response of $op^+$ and $\lin(iop)$.
Hence,  $\lin(iop)$ precedes $\lin(dop)$ in $\alpha$, and $op^+$ is placed before $op^-$ is $\pi_v$.
\end{proof}

    \subsection{Liveness}
\label{app:liveness}

It is easy to see that $\lookup$ operations are wait-free.
For $\del(v)$ and $\insr(v)$
operations, the only cases in which these operations
take infinitely many steps without returning
are:
\begin{itemize}
    \item For $\del(v)$: the operation repeatedly attempts and fails to overwrite a tentative copy of $v$ at a specific cell with $\deleted$;
    \item For $\insr(v)$: the operation
    writes a tentative copy into some cell,
but is then forced to revalidate it infinitely often
by signals from other processes.
\end{itemize}
In both cases,
the operation attempts infinitely many times to modify 
a specific cell containing a copy of $v$.
Thus, to prove lock-freedom, we show that 
this scenario arises only if infinitely many other operations complete.

Since $\lookup$ and $\del$ operations modify a memory cell only once during their execution, their interference is naturally bounded. 
The difficult case is $\insr$ operations, for which we prove two lemmas: 
one bounds the interference from copies of other operations, 
and the other shows that an operation either rewrites its copy 
a bounded number of times or infinitely many operations complete.


\begin{restatable}{lemma}{interdiff}
\label{lemma:bounded interference diff copy}
An $\insr(v)$ operation $op$ performs at most a bounded number of modifying instructions on cells containing a copy of $v$ belonging to other operations.
\end{restatable}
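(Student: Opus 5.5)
The plan is to enumerate every place in the code of $\insr(v)$ where $op$ issues a $\Modify$ (an $\SC$ or $\CAS$) on a cell other than its own cell $j$. By Algorithms~\ref{alg:scan}, \ref{alg:insert}, \ref{alg:helper}, \ref{alg:ll_sc_code} and~\ref{alg:cas_code}, such modifications arise in only two places. The first is $\func{validate\_copy}$, invoked during the initial forward and backward scans. The second is $\func{del\_other\_copy}$, invoked during the duplicate-search loop after $op$ owns cell $j$. Everything else $op$ modifies ($\Modify$ in the insertion probe, $\func{del\_copy}$, restart handling) targets either an available cell or its own cell $j$, so it does not count.

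\textbf{Phase before insertion.} The forward scan visits each cell at most once and the backward scan at most once. Each call to $\func{validate\_copy}$ performs at most one $\Modify$, and the scan returns immediately whenever $\func{validate\_copy}$ returns \textit{true}. Hence this phase contributes at most $2m$ modifying instructions on other operations' copies, and it is executed only once per $op$.

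\textbf{Phase after insertion.} This is the main obstacle, because the duplicate-search loop may be restarted (goto Line~\ref{lin:start-search}) unboundedly often due to revalidation requests. A single pass performs $O(1)$ modifications per cell: at most one $\SC$ in the $\LL/\SC$ version, and at most two $\CAS$ in the $\CAS$ version (reserve, then $\collided$). Moreover, a pass always ends, via $\func{del\_copy}$ or a restart, as soon as $\func{del\_other\_copy}$ returns \textit{false}.

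To bound the total across restarts, I will charge each modification to the \emph{other operation} $op'$ whose copy is targeted. Every time $op$ modifies a copy of $op'$ in cell $i>j$ (in probe order) through $\func{del\_other\_copy}$, one of three outcomes occurs. The copy may be eliminated with $\collided$, after which $op'$ writes no further copies. The copy may be found finalized, in which case $op$ withdraws rather than touching it again. Otherwise the copy changed concurrently. I will argue, using the lifecycle of Figure~\ref{fig:life-cycle} and Lemma~\ref{lem:not-stale-del}, that after $op$ successfully marks (or, in the $\CAS$ version, reserves and then converts) a copy of $op'$, that copy of $op'$ is deleted. Consequently $op$ modifies the copies of each other operation only $O(1)$ times in total.

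The remaining gap is to bound the number of distinct operations $op'$ whose copies $op$ can encounter. By Lemma~\ref{lem:insertseq-earlier}, copies $op$ eliminates lie strictly later than $j$ in $v$'s run, and each occupies a distinct cell while present. A cell reused by a fresh insert of $v$ after $op'$ is eliminated would let the count grow. I would handle this by noting that such a fresh insert, being invoked while $op$'s copy is present, finds $op$'s copy in its own scans (Lemma~\ref{lem:insertseq}) and returns without writing. Thus only inserts already past their scans when $op$ wrote its last \emph{first} copy can compete, which is at most $n$ operations. This yields a bound of $O(m+n)$ modifying instructions, with the restart-charging argument being the step I expect to require the most care.
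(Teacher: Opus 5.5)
Your enumeration of the modification sites and your bound on the initial scans are sound. So is your bound on the number of distinct victims $op'$: an insert invoked after $op$'s first copy is written reaches $op$'s earlier copy in its forward scan and returns without writing. That last argument is exactly the first half of the paper's proof. In the $\LL/\SC$ version your per-victim bound also holds, because the only modification of $op'$'s copy in $\func{del\_other\_copy}$ is a successful $\SC$ to $\collided$, after which $op'$ never writes again.

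The gap is the per-victim bound in the $\CAS$ version. There, the reservation $\CAS$ to $\tup{\tup{v,j},\flagcell}$ is itself a modifying instruction, and it need not be followed by a conversion. The following cycle can occur:
\begin{enumerate}
\item After reserving, $op$ finds its own cell overwritten with $\tup{v,\flagrestart}$ by a fresh lookup, so $\func{del\_other\_copy}$ returns \textit{false} and $op$ revalidates.
\item Meanwhile $op'$ turns its cell back into $\tup{v,\flaginsert}$, either by resetting the reservation itself or after a third operation has written $\tup{v,\flagrestart}$ there.
\item On its next pass, $op$ reserves a copy of the same operation again.
\end{enumerate}
Your three listed outcomes omit this case. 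The fact that ``reserve and then convert'' deletes the copy says nothing about reservations that are never converted, so ``$O(1)$ modifications per victim'' does not follow. In fact, a single victim can be reserved once for each third-party revalidation of its copy.

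The missing idea, which is the second half of the paper's proof, is to bound how often $op'$ can rewrite $\tup{v,\flaginsert}$ once $op$'s copy exists.
\begin{itemize}
\item Once $op'$'s validation scan has seen $op$'s copy, which lies earlier in the probe sequence, $op'$ only ever calls $\func{del\_copy}$.
\item $\func{del\_copy}$ turns a reservation into $\tombstone$, and it restarts only after a \emph{third} operation $op''$ writes $\tup{v,\flagrestart}$ into $op'$'s cell.
\item Any $op''$ invoked after $op$'s first copy stops at $op$'s copy and never reaches $op'$'s cell.
\item Each $op''$ writes $\tup{v,\flagrestart}$ at most once.
\end{itemize}
Hence only boundedly many such restarts, and therefore re-reservations, can happen. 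Without this step, your $O(m+n)$ bound is unjustified for the $\CAS$ version.
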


\begin{proof}
From the code, 
$op$ modifies a cell containing a tentative copy \emph{of a different operation} in the following situations:
(1) $op$ overwrites a tentative copy with $\langle v,\flagrestart\rangle$, or
(2) in the $\CAS$ version, $op$ overwrites a tentative copy with $\langle\langle v,j\rangle,\flagcell\rangle$, or
(3) in both the $\CAS$ and $\LL/\SC$ versions, $op$ overwrites a tentative copy with $\collided$.

Case (1) 
can happen only once before $op$ returns.

We show that $op$ performs only a constant number of modifying instructions corresponding to cases (2) and (3), 
before either returning or
moving to a state from which it returns with a finite number of steps.
Assume towards contradiction this is not the case. 
Then, $op$ overwrites a tentative copy with $\collided$ infinitely many times, or, in the $\CAS$ version, with $\langle\langle v,j\rangle,\flagcell\rangle$.

Assume $op$ overwrites a tentative copy with $\collided$ infinitely many times.
Once a tentative copy is overwritten with $\collided$, 
a new tentative copy of the same operation is never written again. 
Thus, $op$ is overwriting tentative copies of distinct operations.

Consider the sequence of operations whose copies are overwritten by $op$. Since this sequence is infinite, we can identify an operation $op'$ in this sequence that is invoked after $op$ has already written its own initial tentative copy.
Before $op$ overwrites a tentative copy with $\collided$, it verifies that its own tentative copy has not been deleted; therefore, a tentative copy of $op$ remains present in the table throughout the execution interval of $op'$.

Before writing a tentative copy, $op'$ first scans a run starting from $h(v)$. By Proposition~\ref{prop:consecutive}, 
during this scan, $op'$ encounters a tentative copy of $op$.
Since no copy of $op$ is ever deleted, $op'$ should return \textit{false} immediately, without writing a new tentative copy to the table. This contradicts the assumption that $op'$ writes a tentative copy, which $op$ later overwrites.

Next, consider the case where $op$ overwrites a tentative copy with $\langle\langle v,j\rangle,\flagcell\rangle$ infinitely many times.
If $op$ overwrites tentative copies belonging to infinitely many distinct operations, the proof proceeds as in the previous case.
Otherwise, $op$ repeatedly overwrites a tentative copy of the same operation $op'$.
This implies that $op'$ repeatedly rewrites its cell to $\langle v,\flaginsert\rangle$ (potentially after being reset to $\langle v,\flagrestart\rangle$).

Since $op$ tries to delete a copy of $op'$,
$op$'s tentative copy is closer to the start of the probe sequence of $v$ than the tentative copy of $op'$.
Furthermore, because $op$ verifies the existence of its own copy before performing the overwrite, the tentative copy of $op$ remains present in the table throughout the execution interval of $op'$.
Each time $op'$ rewrites the value $\tup{v,\flaginsert}$, 
it scans a run again starting from $h(v)$. 
By Proposition~\ref{prop:consecutive}, $op'$ encounters the tentative copy of $op$ during this scan. 
Since $op$'s copy is closer to the start of the probe sequence, $op'$ attempts to delete its own copy. 
This deletion attempt is unsuccessful, as we assumed infinitely many tentative copies of $op'$ are written.
This failure can only occur due to an intervention by a third operation $op''$ overwriting $op'$'s tentative copy with $\langle v,\flagrestart\rangle$.

As this occurs infinitely many times, we can choose an $op''$ invoked after $op$ writes its first tentative copy. 
In $op''$'s forward scan, it encounters $op$'s tentative copy (which is closer to the start of the probe sequence) before reaching $op'$'s tentative copy. Consequently, since no copy of $op$ is ever deleted, $op''$ returns immediately. This contradicts the claim that $op''$ reaches $op'$ tentative copy and overwrites it with $\langle v,\flagrestart\rangle$.
\end{proof}

\begin{restatable}{lemma}{interown}
\label{lemma:bounded interference own copy}
An $\insr(v)$ operation $op$ either performs at most a bounded number of modifying instructions on the cell containing its own copy, or infinitely many other operations return during the execution interval of $op$.
\end{restatable}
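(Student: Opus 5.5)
Once $op$ has written its first tentative copy into its cell $j$, it modifies cell $j$ in only a few ways: writing $\tup{v,\flaginsert}$ over a $\func{restart}$ value (a revalidation), writing $\tup{v,\flagmember}$, or writing $\tombstone$ inside $\func{del\_copy}$. Each of the last two ends the operation. So I will bound the number of \emph{revalidations}, i.e., successful rewrites of $\tup{v,\flaginsert}$. I also need the number of failed $\Modify$ attempts on cell $j$ between consecutive revalidations to be bounded. A failure means another process changed cell $j$, and between two of $op$'s own writes the cell can be changed by others only a constant number of times, namely $\tup{v,\flaginsert}\to\tup{v,\flagrestart}$, $\to\tup{\tup{v,\ast},\flagcell}$, or $\to\collided/\deleted$. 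Hence it suffices to show that if $op$ revalidates infinitely often, then infinitely many other operations return during $op$'s interval.

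Every revalidation is caused by a value $\tup{v,\flagrestart}$ or, in the $\CAS$ version, $\tup{\tup{v,k},\flagcell}$ written into cell $j$ by another operation. I split into two cases.

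First, $\flagrestart$ markings. These are written by $\func{validate\_copy}$ on behalf of $\lookup(v)$ or $\insr(v)$ operations, and each such operation writes at most once (Lemma~\ref{lemma:bounded interference diff copy}, case (1), and the code for $\lookup$). Moreover, after a successful $\func{validate\_copy}$ call the operation returns immediately, with \textit{true} for a lookup and \textit{false} for an insert, without further waiting. So infinitely many $\flagrestart$ writes come from infinitely many distinct operations, each finishing within a bounded number of its own steps. I need to argue that those operations actually return during $op$'s interval. Their write to cell $j$ happens while $op$ is pending, and the step following a successful $\Modify$ in $\func{validate\_copy}$ is the return. Under the fairness implicit in lock-freedom, either infinitely many of them return, or infinitely many are stalled. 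If they are stalled, they are infinitely many distinct processes, which is impossible with $n$ processes. More precisely, each process can only have one pending such operation, so among infinitely many distinct operations from $n$ processes, all but at most $n$ have returned.

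Second, in the $\CAS$ version, $\flagcell$ markings. By Lemma~\ref{lemma:bounded interference diff copy}, each other $\insr(v)$ operation performs only boundedly many such modifications. So infinitely many markings again come from infinitely many distinct operations, and the same pigeonhole over $n$ processes shows infinitely many of them returned during $op$'s interval.

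The main obstacle is bounding the non-revalidation $\Modify$ failures on cell $j$, in particular inside the $\func{del\_copy}$ repeat loop. Each iteration that fails must be caused by some other process changing cell $j$. Such a change is either a terminating value ($\collided$ or $\deleted$, after which $op$'s next $\Modify$ to $\tombstone$ succeeds, since only $op$ touches the cell afterward) or one of the marking values already counted above. I will check against the code that every loop iteration of $op$ on cell $j$ is charged either to a distinct foreign modification of cell $j$ or to termination, which yields the claimed dichotomy.
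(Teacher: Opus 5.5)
Your proposal is correct and follows essentially the paper's route. You count the $\tup{v,\flagmember}$ or $\tombstone$ write once, trace each revalidation to a foreign $\tup{v,\flagrestart}$ write (at most one per $\lookup$ or $\insr$) or, with $\CAS$, a $\flagcell$ write (boundedly many per operation by Lemma~\ref{lemma:bounded interference diff copy}), and conclude that infinitely many distinct writers must return; your pigeonhole over the $n$ processes and your accounting of failed $\Modify$ attempts (including inside $\func{del\_copy}$) make explicit steps the paper leaves implicit, and the appeal to ``fairness implicit in lock-freedom'' is unnecessary because the pigeonhole alone suffices.
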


\begin{proof}
Consider an $\insr(v)$ operation $op$ on some key $v$.
By the code, 
$op$ modifies a cell containing \emph{its own} tentative copy 
either when (1) $op$ promotes its own copy to $\langle v,\flagmember\rangle$, deletes its own copy by writing $\tombstone$, or
(2) $op$ writes a new copy of $\tup{v,\flaginsert}$.

In case (1), $op$ overwrites its tentative copy with one of these values only once before returning.

In case (2), $op$ infinitely many times overwrites either $\tup{v,\flagrestart}$ with $\tup{v,\flaginsert}$, or, 
in the $\CAS$ version, $\tup{\tup{v,\ast},\flagcell}$ with $\tup{v,\flaginsert}$.

If $op$ overwrites $\tup{v,\flagrestart}$ with $\tup{v,\flaginsert}$ infinitely many times,
then since an $\insr(v)$ or $\lookup(v)$ operation overwrites a tentative copy 
with $\tup{v,\flagrestart}$ at most once before returning,
implying that infinitely many $\insr(v)$ or $\lookup(v)$ 
operations return.

Otherwise, $op$ overwrites $\tup{\tup{v,\ast},\flagcell}$ with $\tup{v,\flaginsert}$ infinitely many times.
By Lemma~\ref{lemma:bounded interference diff copy}, any single operation overwrites a tentative copy with $\tup{\tup{v,\ast},\flagcell}$ only a finite number of times before returning. 
Therefore, an infinite number of such operations return.
\end{proof}


\begin{lemma}
\label{lemma:bounded interference}
For any key $v$, 
an operation on $v$ either performs at most a bounded number of modifying instructions 
to cells containing $v$ before returning, 
or infinitely many operations return during the execution interval of the operation.
\end{lemma}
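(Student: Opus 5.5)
The proof goes case by case on the type of the operation on key $v$, and combines the two preceding lemmas for inserts. Recall that the modifying instructions an operation can perform on cells containing $v$ are exactly the $\Modify$ steps ($\SC$ or $\CAS$) in the code of $\func{try\_delete}$, $\func{validate\_copy}$, $\func{del\_other\_copy}$, $\func{del\_copy}$, and the insertion and promotion steps of $\insr$.

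\textbf{Lookups.} A $\lookup(v)$ calls $\func{validate\_copy}$ at most once per cell in each of its two scans. Each scan visits at most $m$ cells, and each call performs at most one $\Modify$. So a lookup performs at most $2m$ modifying instructions, and in fact returns right after its first successful one. This gives a bound independent of other operations.

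\textbf{Deletes.} The only loop in $\del(v)$ that can repeat modifications on one cell is the \textbf{while} loop of $\func{try\_delete}$.
\begin{itemize}
\item Each failed $\Modify$ to $\deleted$ or $\tombstone$ on cell $i$ means the value of cell $i$ changed since the $\Read$/$\LL$, and the new value still contains $v$ (otherwise the loop exits with $\bot$).
\item If the cell holds $\tup{v,\flagmember}$, the delete performs a single $\Modify$ and returns.
\item Otherwise the cell holds a tentative copy of some insert. Its value can change while still containing $v$ only through a rewrite of $\tup{v,\flaginsert}$ by the owner, a $\flagrestart$ write by a lookup or insert, or a $\flagcell$ write by an insert.
\item Each such change is a modifying instruction by some insert or lookup on a cell containing $v$.
\end{itemize}
By the lookup bound, Lemma~\ref{lemma:bounded interference diff copy} and Lemma~\ref{lemma:bounded interference own copy}, each other operation contributes only boundedly many such changes unless infinitely many operations return. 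Hence infinitely many failures of the delete force infinitely many distinct operations to interfere. Interfering lookups and non-owner inserts complete after boundedly many steps each. If instead the owner rewrites infinitely often, Lemma~\ref{lemma:bounded interference own copy} already gives infinitely many returns. In every case infinitely many operations return during the delete's interval. The delete's other modifications, one per visited cell per scan outside the loop, are at most $2m$.

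\textbf{Inserts.} The modifying instructions of an $\insr(v)$ on cells containing $v$ are of two kinds: on other operations' copies, or on its own cell. The first kind is bounded by Lemma~\ref{lemma:bounded interference diff copy}. The second kind is bounded by Lemma~\ref{lemma:bounded interference own copy}, unless infinitely many other operations return.

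The only remaining modifications are the $\Modify$ attempts while probing for an available cell. These target $\emptycell$ or $\tombstone$ cells, not cells containing $v$, and each probe position is attempted at most once, so there are at most $m$ of them. Combining the two bounds gives the lemma for inserts.

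\textbf{Main obstacle.} The delete case is the delicate step. I must argue that every failed $\Modify$ in $\func{try\_delete}$ is charged to a modifying step of some other operation on that cell, and then apply the earlier lemmas to those interferers rather than to the delete itself. I would make this charging explicit: each failure corresponds to a distinct successful write to cell $i$ between the delete's $\LL$ (or $\Read$) and its $\SC$ (or $\CAS$). The one subtlety is ABA in the $\CAS$ version, where a $\CAS$ can fail while the cell returns to an equal value. Even then the cell changed in between, so the same charging still applies.
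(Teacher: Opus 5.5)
Your argument is essentially correct, but you read ``modifying instruction'' more broadly than the paper does, so the delete case costs you more work. The paper counts only \emph{successful} writes. A $\lookup(v)$ successfully writes $\tup{v,\flagrestart}$ at most once, and a $\del(v)$ successfully writes $\deleted$ or $\tombstone$ at most once, because each returns right after such a write. Inserts are handled by Lemmas~\ref{lemma:bounded interference diff copy} and~\ref{lemma:bounded interference own copy}, exactly as you do.

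The repeated \emph{failed} attempts inside $\func{try\_delete}$ are treated only in the subsequent lemma that $\del$ is lock-free. That lemma charges each failure to an interfering successful write and then applies the present lemma to the interferers. By counting every $\Modify$ attempt, you fold that lock-freedom argument into this lemma. You get a stronger, self-contained statement. The paper instead keeps this lemma a one-line per-operation bound on successful writes, which is exactly the interference measure both lock-freedom proofs consume. Note that under your reading, the cited insert lemmas must also bound failed attempts (e.g., repeated failing $\Modify$ to $\tombstone$ in $\func{del\_copy}$), whereas their proofs count only successful overwrites.

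Your delete case has two slips, both easily repaired.
\begin{itemize}
\item You claim that interfering non-owner inserts complete after boundedly many steps. This is false: such an insert may itself be forced to revalidate its own copy indefinitely. What you actually need is a pigeonhole fact: infinitely many distinct operations taking steps after the delete's invocation, executed by only $n$ processes, must include infinitely many that return.
\item Your list of interfering writes is incomplete. The delete's $\SC$/$\CAS$ can fail because the copy was removed and a new $\insr(v)$ then placed a fresh tentative copy in the freed cell before the reread. The removal can be another $\del(v)$ writing $\deleted$, an insert writing $\collided$, or the owner writing $\tombstone$. The charging you sketch at the end handles this. Charge each failure to the first successful write to cell $i$ after the delete's $\Read$; that write is applied to a value containing $v$, and distinct failures get distinct writes. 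Every operation type, including other deletes (at most one successful write each), contributes boundedly many such writes unless infinitely many operations return.
\end{itemize}
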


\begin{proof}
A $\lookup(v)$ overwrites a tentative copy with 
$\langle v,\flagrestart\rangle$ only once before returning.
A $\del(v)$ overwrites a tentative copy with $\deleted$ 
or an inserted copy with $\tombstone$ only once before returning.
For $\insr(v)$, the lemma follows from 
Lemma~\ref{lemma:bounded interference diff copy} 
and Lemma~\ref{lemma:bounded interference own copy}.
\end{proof}

In particular, if an execution has infinitely many such modifying instructions 
on cells containing $v$, 
then infinitely many operations on $v$ return.
We will use this lemma to prove that $\del$ and $\insr$ operations are lock-free.

\begin{lemma}
$\del$ is lock-free. 
\end{lemma}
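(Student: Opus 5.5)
We argue by contradiction. Suppose some $\del(v)$ operation $op$ takes infinitely many steps without returning, while only finitely many operations return; fix a suffix of the execution after which no operation returns. The argument traces which loops of $op$ can run forever.

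\textbf{Step 1: locate the infinite loop.} The forward scan is bounded by one full cycle of the table (it stops at $\emptycell$ or on returning to $h(v)$), and the backward scan visits at most $m$ cells. Every other loop iteration of both scans advances the index. So the only way for $op$ to run forever is to remain forever inside a single call $\func{try\_delete}(v,\var{val},i)$ on some fixed cell $i$.

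\textbf{Step 2: analyse that call.} Inside $\func{try\_delete}$, if $op$ ever reads $\tup{v,\flagmember}$, it performs one $\Modify$ and returns, regardless of whether the $\Modify$ succeeds. If $op$ ever reads a value without key $v$, the loop exits and returns $\bot$. Hence every iteration of the infinite loop reads a non-final copy of $v$ in cell $i$ and then has its $\Modify(\ds{table}[i],\var{val},\deleted)$ fail. For $\LL/\SC$, a failed $\SC$ means cell $i$ was written since the preceding $\LL$. For $\CAS$, a failure means the value differs from the one just read, so again cell $i$ was written between the read and the $\CAS$. Either way, cell $i$, while containing $v$, is modified infinitely often by operations other than $op$. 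All these modifying instructions are on a cell containing $v$, so they are performed by operations on $v$: the owner inserting $\tup{v,\flaginsert}$, or other processes writing $\tup{v,\flagrestart}$ or, in the $\CAS$ version, $\tup{\tup{v,\ast},\flagcell}$.

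\textbf{Step 3: conclude.} There are only $n$ processes, and an operation that returns can be charged only finitely many of these modifications. Hence either some single operation performs infinitely many modifying instructions on cells containing $v$, or infinitely many distinct operations perform them. In the second case, all but at most $n$ of these operations must have returned, contradicting the assumption. In the first case, Lemma~\ref{lemma:bounded interference} says that infinitely many operations return during that operation's execution interval, again a contradiction. Therefore some operation returns, and $\del$ is lock-free.

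The main obstacle is the case analysis in Step 2, in particular the $\CAS$ version. There, a failed $\CAS$ could in principle be an ABA-style situation with repeated identical values, and I need every failure to correspond to a genuine write to cell $i$ by another operation on $v$. This holds because a $\CAS$ fails only if the current value differs from the value just read, so an intervening write must have occurred. I also need to check that writes of $\collided$ or $\tombstone$ remove key $v$ from cell $i$, which makes $\func{try\_delete}$ exit rather than loop.
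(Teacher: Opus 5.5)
Your proof is correct and follows the same route as the paper. A non-returning $\del(v)$ must be stuck in $\func{try\_delete}$ on a single cell, where each failed $\Modify$ witnesses an intervening modification of a cell containing $v$ by another operation on $v$, and Lemma~\ref{lemma:bounded interference} then yields infinitely many returns; you also make explicit two steps the paper leaves implicit, namely that both scans are otherwise bounded by $m$ cells and the pigeonhole argument that infinitely many distinct interfering operations on $n$ processes means infinitely many of them have returned.
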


\begin{proof}
Consider a $\del(v)$ operation $op$ that takes infinitely many
steps without returning.  
Then $op$ repeatedly reads
$\tup{v,\flaginsert}$ or $\tup{v,\flagrestart}$, 
and in the $\CAS$ version also $\tup{\tup{v,\ast},\flagcell}$,
and fails to overwrite it with $\deleted$.
Each such failure indicates that, between $op$'s read 
and its attempted write, 
another operation on $v$ performed a modifying instruction on the cell.
By Lemma~\ref{lemma:bounded interference}, 
each operation on $v$ can perform only a bounded number of 
modifying instructions before returning, or infinitely many other operations return.
\end{proof}

\begin{lemma}
    $\insr$ is lock-free.
\end{lemma}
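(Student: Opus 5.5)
We prove that $\insr$ is lock-free by contradiction, following the same template as the proof for $\del$. Fix an execution in which some $\insr(v)$ operation $op$ takes infinitely many steps without returning, while only finitely many operations return. We then classify the loops in Algorithm~\ref{alg:insert} in which $op$ can be trapped, and in each case we extract infinitely many modifying instructions on cells containing $v$. Lemma~\ref{lemma:bounded interference} will then force infinitely many operations to return.

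\textbf{The scanning and probing phases.} The forward scan, backward scan and the probing loop each traverse at most $m$ cells. The helper calls inside them terminate except through retries:
\begin{itemize}
\item $\func{validate\_copy}$ performs at most one modify.
\item In the probing loop, a failed $\Modify$ only advances $j$, and the loop aborts after one full cycle.
\end{itemize}
So the first half of $op$ is wait-free, and $op$ must write a tentative copy into some cell $j$. From then on, the only way to loop forever is to return to Line~\ref{lin:start-search} infinitely often, or to spin forever in the repeat loop of $\func{del\_copy}$.

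\textbf{Spinning in $\func{del\_copy}$.} Each failed iteration means the $\Modify$ to cell $j$ failed. Hence between $op$'s $\Read$ and its $\Modify$, some other operation modified cell $j$, which still contains a copy of $v$. This is because only $op$ writes $\tombstone$ to its own cell, and the values $\deleted$ and $\collided$ are terminal apart from $op$'s cleanup. Infinitely many such failures give infinitely many modifying instructions by other operations on a cell containing $v$.

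\textbf{Restarting the validation.} Each jump back to Line~\ref{lin:start-search} is preceded by a successful $\Modify$ of $op$'s own cell from a $\func{restart}$ value, either $\tup{v,\flagrestart}$ or $\tup{\tup{v,\ast},\flagcell}$, back to $\tup{v,\flaginsert}$. By Lemma~\ref{lemma:bounded interference own copy}, $op$ can perform infinitely many such modifications only if infinitely many other operations return during its interval. Moreover, every restart is caused by some other operation writing a restart value into cell $j$, which is again a modifying instruction on a cell containing $v$.

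A single pass of the duplicate-search loop is also finite. It scans at most $m$ cells, and the calls to $\func{del\_other\_copy}$ perform $O(1)$ steps. The $\CAS$ version needs a little care: a failed reservation CAS returns immediately rather than retrying.

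\textbf{Concluding.} In every non-terminating case, infinitely many modifying instructions by other operations hit cells containing $v$. These instructions are spread over finitely many cells, so there are infinitely many such instructions overall. By Lemma~\ref{lemma:bounded interference}, each operation performs only boundedly many of them before returning, unless infinitely many operations return. Finitely many processes with boundedly many instructions per operation can only produce infinitely many instructions if infinitely many operations complete. This contradicts the assumption that only finitely many operations return, so $\insr$ is lock-free.

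\textbf{Main obstacle.} The delicate step is the $\func{del\_copy}$ spin. We must argue that every failed $\Modify$ on cell $j$ is caused by another process's modifying instruction on a cell that still contains $v$ at that moment. In particular, we must rule out that cell $j$ has already been overwritten with $\deleted$ or $\collided$ while $op$ keeps failing; this holds because those values can only be replaced by $op$ itself. In the $\LL/\SC$ version we must also check that $op$'s own intervening $\LL$ does not cause spurious $\SC$ failures, which holds because the reads in the loop are properly paired on cell $j$.
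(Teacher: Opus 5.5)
Your proof is correct and follows essentially the same route as the paper. The scanning and probing phases are bounded. Non-termination therefore forces infinitely many modifying instructions by other operations on $v$ to $op$'s cell, and Lemma~\ref{lemma:bounded interference} rules this out unless infinitely many operations return. Your separate treatment of the $\func{del\_copy}$ spin, using that $\deleted$ and $\collided$ can only be overwritten by the owner, spells out a case the paper folds into its claim that the tentative copy is overwritten infinitely often.
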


\begin{proof}
Consider an $\insr(v)$ operation $op$ that takes infinitely many
steps without returning.  

If $op$ never successfully writes $\tup{v,\flaginsert}$, 
it returns after the forward or backward scan, or after traversing all $m$ cells in an attempt to find an available slot.


Otherwise, $op$ writes $\tup{v,\flaginsert}$.
If $op$ takes an infinite number of steps without returning, 
then this value is overwritten infinitely many times 
with $\tup{v,\flagrestart}$, in the $\LL/\SC$ version, 
and with $\tup{v,\flagrestart}$ or $\tup{\tup{v,\ast},\flagcell}$,
in the $\CAS$ version.
Each such overwrite is a modifying instruction to a cell containing~$v$
performed by some operation on~$v$.
By Lemma~\ref{lemma:bounded interference}, 
each operation on~$v$ can perform only a bounded number of such
modifying instructions before returning, 
except when infinitely many other operations return.
Thus, an insert cannot take infinitely many steps 
without infinitely many operations completing.
\end{proof}

    \subsection{Amortized Step Complexity}
\label{sec:time-analysis}


In this section, we analyze the runtime of our algorithm
under the simplifying assumption that at any point in time, there is at most one active $\insr$ operation for any given key. (Multiple concurrent operations on the same key are permitted, as long as only one of them is an insertion.) 
Without this assumption, the adversary can force the creation of tombstones (as demonstrated in Fig.~\ref{fig:alg}), though we note that it can only deterministically force $O(n)$ tombstones in any given run in the hash table.

We recall that the classical analysis of Knuth~\cite{Knuth1963OpenAddressing} for sequential linear-probing hash tables shows that after any sequence of $k$ insertions,
if we represent the resulting load factor on the table by $k/m = 1 - 1/x$ for $x > 0$, then for any key $v$, 
the expected distance from $h(v)$ to the end of $v$'s run is at most $O(x^2)$. 

\paragraph{Mapping to a sequential execution.}
Knuth~\cite{Knuth1963OpenAddressing} did not address tombstones in his analysis (in fact, he addressed only insertions, but his analysis holds if deletions are implemented by reordering the remaining keys to take the empty place of the deleted key, rather than using tombstones).
To address this gap we perform a pessimistic analysis that {ignores} the possibility that an inserted key will take the place of a previously deleted one:
we assume that any inserted remains in the hash table,
ignoring the possible re-use of tombstones that are left behind by successful deletions.

To map concurrent executions to sequential ones, we define the \emph{effective insertion schedule} of an execution $\alpha$. This schedule consists of the subsequence of $\insr$ operations in $\alpha$ that perform a write of $\tup{\ast,\flaginsert}$, ordered by the time of that write.
This implies the following:

\begin{proposition}
\label{prop:maptoseq} 
    Let $\alpha$ be an execution consisting only of $\insr$ operations, where no two concurrent operations have the same key. Then, the state of the table at the end of $\alpha$ is identical to the state after the sequential execution of the effective insertion schedule of $\alpha$.
\end{proposition}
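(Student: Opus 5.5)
We argue by induction on the writes of $\tup{\ast,\flaginsert}$ in $\alpha$, keeping the following invariant. After the $k$-th such write, and as long as no later such write has occurred, the \emph{occupancy pattern} of the table equals the state produced by sequentially inserting the first $k$ operations of the effective insertion schedule. By occupancy pattern we mean, for each cell, whether it is $\emptycell$, or which key occupies it, identifying $\tup{v,\flaginsert}$, $\tup{v,\flagrestart}$, $\tup{v,\flagmember}$ and $\tup{\tup{v,\ast},\flagcell}$ with ``key $v$''. We only need to compare cells up to this identification; at the end of $\alpha$ pending copies are tentative, but they sit exactly where the sequential insertion would place them. Since $\alpha$ contains only insertions and no two concurrent insertions share a key, no $\del$ exists, so $\deleted$ is never written. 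We will also argue that $\collided$ and $\tombstone$ never appear, so every cell is either $\emptycell$ or holds a key.

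\textbf{Step 1: no duplicate copies.} We first show that the duplicate-handling machinery is never triggered. Consider an $\insr(v)$ operation $op$ that writes $\tup{v,\flaginsert}$. Every other $\insr(v)$ either precedes or follows $op$. One that precedes $op$ and inserted a copy leaves it finalized, because nothing can delete it in this execution. If that earlier insert wrote a copy, then $op$'s forward scan finds it: by Proposition~\ref{prop:consecutive} the copy lies in the run from $h(v)$. So $op$ returns \textit{false} before writing anything. By induction, therefore, at most one copy of each key is ever written. Consequently the validation scan finds no other copy of $v$, so $\func{del\_copy}$, $\func{del\_other\_copy}$, $\collided$ and $\tombstone$ never occur. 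Only $\lookup$ and $\insr$ could write $\tup{v,\flagrestart}$, and only for a concurrent copy of $v$, which is excluded. Hence every written copy is eventually promoted to final, or stays tentative if it is pending.

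\textbf{Step 2: placement.} Consider the $k$-th write of $\tup{v,\flaginsert}$, by $op$ into cell $j$. Since cells never return to $\emptycell$ and no tombstones exist, $op$'s probe from $h(v)$ succeeded at the first cell it read as $\emptycell$. We must show that this cell is the first $\emptycell$ after $h(v)$ in the table \emph{at the moment of the write}. Every cell $h(v),\ldots,j-1$ was read by $op$ as non-empty, and non-empty cells stay non-empty. Cell $j$ was empty when read by $\LL$/read and still empty at the successful $\SC$/$\CAS$. This is exactly the cell that sequential linear probing would choose given the current occupancy. Moreover, by Step 1 and the induction hypothesis, the current occupancy equals the sequential state after $k-1$ insertions, because between consecutive $\flaginsert$ writes only flag changes occur, and these do not alter the occupancy. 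This closes the induction.

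\textbf{Main obstacle.} The delicate part is Step 1. Concurrent inserts of \emph{different} keys race for the same cell, and the failed $\Modify$ merely moves them on, which is harmless. The real subtlety is an insert of $v$ that precedes $op$ but returned \textit{false} after finding $v$, or that aborted. Neither writes a copy, so neither matters. We also need $h$ and the sequential state to be compared with flags forgotten, which is what ``identical'' should mean in the statement; if the paper intends literal equality, we restrict the claim to quiescent ends of $\alpha$, where all copies are $\flagmember$.
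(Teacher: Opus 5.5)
Your proof is correct and is essentially the argument the paper leaves implicit: the paper states the proposition as an immediate consequence of ordering the effective insertion schedule by the times of the $\tup{\ast,\flaginsert}$ writes, and your Step~1 supplies the needed fact that, with no deletions and no same-key concurrency, no duplicate copies, $\flagrestart$ marks, $\collided$ or $\tombstone$ values ever arise. One small correction in Step~2: $op$ need not have read every cell in $h(v),\ldots,j-1$ as non-empty, because it may read a cell as $\emptycell$ and then have its $\Modify$ fail. Such a failure means another insert wrote a key there first, and nothing ever writes $\emptycell$, so all of $h(v),\ldots,j-1$ are still non-empty when $op$'s write succeeds, which is all the induction needs (as you yourself note under ``Main obstacle'').
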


Consider a sequential execution of all the insert operations in the batch. 
By Knuth's analysis the expected run length for any given key is $O(x^2)$.
Crucially, this holds regardless of the execution order.
Thus, the specific effective insertion schedule determined by the concurrent execution is irrelevant to the expected run length.

As in the sequential algorithm, our $\del$ and $\lookup$ operations never occupy a cell that was previously empty, and therefore they can never extend an existing run or cause two adjacent runs to merge; only $\insr$ operations can increase the run length.
By Proposition~\ref{prop:maptoseq}, provided there are no concurrent $\insr$ operations targeting the same key, the sequential analysis suffices to determine the expected run length.
Consequently, the expected length of any scan over consecutive non-empty cells in the concurrent execution is also $O(x^2)$.

We define a \emph{$c$-bounded fixed-workload scheduler} as an adversary that manages a batch of operations that is fixed in advance,
before the hash function is chosen.
The scheduler is worst-case, and determines the invocation order dynamically: whenever a process is free, the scheduler may choose \emph{any} of the remaining operations in the batch to run next,
subject to
(a) a maximum \emph{point contention} of $c$ per key, that is, the maximum number of concurrent operations targeting the same key at any point in time is at most $c$;
and
(b) a maximum of one $\insr$ operation active on any given key at any time.
Under this scheduling model, we establish the following bound on the step complexity, applicable to both the $\LL/\SC$ and $\CAS$ versions of our algorithm.

\begin{restatable}{theorem}{Amortized}
\label{thm:amortized}
Let $m$ be the size of the table, let $x>1$ be a real number,
and fix any batch of operations that includes at most $(1-{1}/{x})m$ $\insr$ operations.
Under a $c$-bounded fixed-workload scheduler, the expected amortized step complexity per operation in both the $\LL/\SC$ and $\CAS$ versions is $O(x^2 + c)$.
\end{restatable}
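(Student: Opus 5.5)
The plan is to charge every step an operation takes either to the \emph{scan work} over runs of non-empty cells or to \emph{contention work} caused by other operations on the same key. Then I bound the first in expectation by $O(x^2)$ using Knuth's analysis, and the second in amortized terms by $O(c)$.

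\textbf{Step 1: bounding scan work.} First I fix the batch and the scheduler. Since the batch is fixed before $h$ is chosen, the set of keys that are ever inserted is independent of $h$. Under assumption (b) of the $c$-bounded scheduler, Proposition~\ref{prop:maptoseq} applies to the insertion pattern. Tombstones are pessimistically treated as occupied, and $\del$/$\lookup$ never occupy an $\emptycell$ cell. So at every configuration, the set of non-empty cells is contained in the set of cells occupied after sequentially inserting some subset of at most $(1-1/x)m$ keys. By monotonicity, it is also contained in the set occupied after inserting all of them.

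Hence, for any operation on key $v$, the run starting at $h(v)$ that it sees at any time is contained in the final run through $h(v)$ of that sequential table. Its expected length is $O(x^2)$ by Knuth's bound, uniformly over insertion order. This takes care of the adversary choosing the effective insertion schedule adaptively.

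Each pass over the run is a forward scan, a backward scan, a probe for a free cell, or a duplicate search from Line~\ref{lin:start-search}. Each pass costs $O(\text{run length})$ steps plus $O(1)$ per cell containing $v$. This holds because $\func{validate\_copy}$ and $\func{try\_delete}$ do constant work unless they retry. Since no two $\insr(v)$ are concurrent, a $\del$ or $\lookup$ does at most $O(1)$ passes. An $\insr$ does $O(1)$ passes plus one duplicate-search pass per revalidation (goto).

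\textbf{Step 2: bounding contention work.} The remaining steps are retries: iterations of the loop in $\func{try\_delete}$, iterations of the loop in $\func{del\_copy}$, and restarts of the duplicate search. With no concurrent $\insr(v)$, there are no competing tentative copies of $v$, so $\collided$ and $\flagcell$ writes never occur on a live copy. The only interfering writes are $\tup{v,\flagrestart}$ by $\lookup(v)$ (at most once each, as in Lemma~\ref{lemma:bounded interference}) and $\deleted$/$\tombstone$ by $\del(v)$ (once each).

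Each failed $\Modify$ or each restart is therefore charged to a distinct modifying step of a concurrent operation on $v$. I charge it to that operation. An operation $o$ is charged at most once per concurrent operation whose retry it caused, and by point contention at most $c$ operations on $v$ overlap $o$. The goto in $\insr$ costs a fresh scan, however, so a revalidation costs $O(x^2)$ rather than $O(1)$. I therefore charge the whole rescan to the $\lookup$ that wrote $\flagrestart$. Each $\lookup$ writes it at most once, so this adds $O(x^2)$ expected cost to that $\lookup$. Summing, total expected steps are at most $N\cdot O(x^2)+N\cdot O(c)$ for a batch of $N$ operations, which gives amortized $O(x^2+c)$.

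\textbf{Main obstacle.} The hardest step is making Step 1 rigorous against an adaptive scheduler. The expectation of the run length seen by an operation must not be conditioned on scheduling choices that depend on $h$. I would handle this by bounding every run length seen by the deterministic worst-case quantity, the run containing $h(v)$ in the table with all batch keys inserted. This quantity is schedule-independent, so Knuth's bound applies to it directly. I also need care in showing that retries in $\func{del\_copy}$ and in the $\CAS$-specific $\func{del\_other\_copy}$ are each attributable to a single foreign modification. This will rely on the life-cycle transitions of Figure~\ref{fig:life-cycle}.
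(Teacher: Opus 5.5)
\textbf{Assessment: there is a gap in the $O(c)$ contention bound, but it is fixable.}

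Your Step 1 follows the paper's pessimistic mapping to a sequential table. It handles the adaptive scheduler more carefully than the paper does: the paper only remarks that Knuth's bound is order-independent, while you dominate every run by the run through $h(v)$ in the table built from \emph{all} batch insertions. Charging an insert's revalidation rescan to the $\lookup$ that wrote $\tup{v,\flagrestart}$ also matches the paper.

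\textbf{The gap.} You justify the $O(c)$ term by two claims: an operation is charged at most once per concurrent operation whose retry it caused, and by point contention at most $c$ operations on $v$ overlap it. Both fail.
\begin{itemize}
\item Point contention bounds how many operations on $v$ are active at one instant. It does not bound how many overlap a given operation's interval. An $\insr(v)$ that the scheduler delays can be overlapped by arbitrarily many $\lookup(v)$ and $\del(v)$ operations in succession.
\item Your list of interfering writes omits the owner's own restores of $\tup{v,\flaginsert}$ over $\tup{v,\flagrestart}$. A $\del(v)$ looping in $\func{try\_delete}$ can fail alternately because a fresh $\lookup$ wrote $\tup{v,\flagrestart}$ and because the owner restored $\tup{v,\flaginsert}$. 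Under your rule (bill whoever performed the interfering write), the same delete bills the same insert again and again. The insert's bill then grows like (number of its restarts)$\cdot(c-1)$, which is not $O(c)$.
\item A single write is also not a ``distinct'' step per failure: it can defeat the pending $\Modify$ of several deletes at once.
\end{itemize}

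\textbf{The missing idea.} This is how the paper argues: count per written value rather than per overlapping operation.
\begin{itemize}
\item Bill each failed attempt of a $\del(v)$ to the value it read just before that attempt. All deletes that read a given value and then fail on it are active at the instant that value is overwritten. So by point contention each value is billed at most $c-1$ times.
\item Bill a $\tup{v,\flagrestart}$ value to the $\lookup$ that wrote it, and the owner's first $\tup{v,\flaginsert}$ to the owner.
\item Bill each restored $\tup{v,\flaginsert}$ to the $\lookup$ whose $\tup{v,\flagrestart}$ it overwrote. This is the same rerouting you already use for rescans.
\end{itemize}
Every operation is then billed for at most two values, i.e., $O(c)$ steps. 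A delete that instead sees $\tup{v,\flagmember}$, $\deleted$ or $\tombstone$ leaves its loop after $O(1)$ steps, which it pays itself.

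Alternatively, you can keep your attribution and argue in aggregate, since the number of restores is at most the number of $\lookup$s. Either way, the bound must come from point contention at the instant of a write.

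\textbf{A minor point.} Your worry about $\func{del\_other\_copy}$ is moot. With at most one active $\insr(v)$ there is never a competing tentative copy, so it is never called and no $\flagcell$ value is ever written.
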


\begin{proof}
    We design a \emph{charging scheme} that charges each step taken by process $p$ to some operation, either its own operation or a different one.
    Recall that because concurrent insert operations operate on distinct keys, the state $\tup{\tup{\ast,\ast},\flagcell}$ (required in the $\CAS$ version for colliding insertions) is never written. Consequently, the analyses of the $\LL/\SC$ and $\CAS$ implementations coincide, and we treat them together.

\begin{description}
    \item[Lookup operations:] 
    A lookup operation performs at most one forward and one backward scan, each traversing expected $O(x^2)$ cells, as established earlier. 
    Each step of a scan takes a constant number of read and modify instructions, which are charged to the lookup itself.

    \item[Delete operations:]
    A delete operation performs at most one forward and one backward scan. However, it may execute a super-constant number of instructions on a single cell if it repeatedly encounters a value $\tup{v,\flaginsert}$ or $\tup{v,\flagrestart}$ and fails to update it.

        \begin{itemize}
            \item 
            The initial read of a cell and any successful modify instructions are charged to the delete operation itself.

            \item If the delete operation reads a value that does not contain the key $v$ (or contains $\tup{v,\flagmember}$), it stops re-reading the cell and proceeds. This read is charged to the delete operation itself.

            \item Any subsequent pair of modify and read instructions representing a failed attempt to update the cell is charged to the \emph{interfering} insert or lookup operation that wrote the value triggering the retry.
            \begin{enumerate}
                \item If the value is $\tup{v,\flagrestart}$, the pair is charged to the operation that wrote this value.
                \item If the value is $\tup{v,\flaginsert}$ and it is the first such value written by the current insert operation, the pair is charged to that insert operation.
                \item If the value is $\tup{v,\flaginsert}$ but it overwrites a $\tup{v,\flagrestart}$, the pair is charged to the operation that wrote the preceding $\tup{v,\flagrestart}$.
            \end{enumerate}
        \end{itemize}
        
        A delete operation is charged expected $O(x^2)$ of its own steps. Steps are charged to other operations only when the delete operation re-reads a cell. Consequently, the operation that triggers this retry must be concurrent with the delete operation.
        An insert or lookup operation restarts at most one other operation (by writing $\tup{v,\flagrestart}$ exactly once). Since there are at most $c$ concurrent operations targeting key $v$, at most $c-1$ delete operations can read this $\tup{v,\flagrestart}$ and charge the writer.
        Similarly, if $\tup{v,\flagrestart}$ is subsequently overwritten by $\tup{v,\flaginsert}$, this write can cause at most $c-1$ additional concurrent delete operations to restart.
        Separately, an insert operation that writes $\tup{v,\flaginsert}$ is charged with the retries of at most $c-1$ concurrent delete operations that read this value.
        Thus, an insert or lookup is charged with at most $O(c)$ steps from concurrent delete operations.

    \item[Insert operations:]
    An insert operation begins by searching for the key by performing at most one forward and one backward scan. If found, it terminates. If not, it performs a forward scan to find an empty slot. Since the table is not full, a slot is guaranteed to be found.
        This initial phase comprises at most three scans. With constant work per cell, the expected cost is $O(x^2)$, charged to the insert operation itself.

        In the second phase, the insert repeatedly performs forward scans to finalize the insertion.
        If an insert operation attempts to delete its own copy, it succeeds in modifying the cell after a finite constant number of attempts, unless the cell is marked with $\flagrestart$. This holds because 
        once the operation’s cell transitions to a cleanup state ($\deleted$ or $\collided$), only the owner can modify it. 
        
        A forward scan is repeated only if a different insert or lookup restarts this operation.
        \begin{itemize}
            \item The first forward scan is charged to the insert operation itself.
            \item Any subsequent scans are charged to the operation that triggered the restart.
        \end{itemize}
        Since an insert or lookup restarts at most one other operation, it is charged with at most one additional forward scan of a different insert operation, adding expected $O(x^2)$ steps.
    \end{description}

    Each operation is charged with expected $O(x^2)$ steps for its own execution, at most $O(c)$ steps from interfering deletes, and at most $O(x^2)$ steps from a restarted insert.
    The total expected amortized step complexity is therefore $O(x^2 + c)$.
\end{proof}

\section{Summary and Discussion}

We present a lock-free linear probing hash table that minimizes memory overhead:
it uses only 2 bits of metadata per entry with $\LL$/$\SC$, and $O(\min(\log m, \log n))$ bits using $\CAS$.
Our hash table supports wait-free lookups,
and in the relatively uncontended scenario where keys are not inserted more than once at the same time,
it has roughly the same performance as sequential linear probing,
up to an additive term of the point contention on any given key.

In the world of \emph{sequential} data structures,
much effort is devoted to designing \emph{succinct data structures},
whose memory footprint matches the information-theoretic lower bound on the abstract data type they represent, up to lower-order terms.
For hash tables (or rather, for dictionaries) such constructions have been known for over two decades,
and the precise tradeoff between memory and query time has recently been completely resolved~\cite{LiLYZ23,BFKKL22}.
For concurrent dictionaries, however, no succinct lock-free construction is known; to our knowledge, our construction 
appears to come closest,
at $m(\lceil \log(U) \rceil + 2)$ bits in total (in the $\LL$/$\SC$ version).
Obtaining succinct lock-free dictionaries is an interesting open problem.

\paragraph*{Acknowledgments:}
Hagit Attiya is supported by the Israel Science Foundation 22/1425 and 25/1849. 
Rotem Oshman and Noa Schiller are supported by
NSF-BSF 2022699.

\bibliographystyle{plainurl}
\bibliography{cite}

@article{GaoGH05,
author = {Hui Gao and
                  Jan Friso Groote and
                  Wim H. Hesselink},
title = {Lock-free dynamic hash tables with open addressing},
year = {2005},
issue_date = {July 2005},
publisher = {Springer-Verlag},
address = {Berlin, Heidelberg},
volume = {18},
number = {1},
issn = {0178-2770},
url = {https://doi.org/10.1007/s00446-004-0115-2},
doi = {10.1007/s00446-004-0115-2},
journal = {Distrib. Comput.},
month = jul,
pages = {21-–42},
numpages = {22}
}

@article{MaierSD19,
author = {Maier, Tobias and Sanders, Peter and Dementiev, Roman},
title = {Concurrent Hash Tables: Fast and General(?)!},
year = {2019},
issue_date = {December 2018},
publisher = {Association for Computing Machinery},
address = {New York, NY, USA},
volume = {5},
number = {4},
issn = {2329-4949},
url = {https://doi.org/10.1145/3309206},
doi = {10.1145/3309206},
journal = {ACM Trans. Parallel Comput.},
month = feb,
articleno = {16},
numpages = {32},
}

@book{HerlihyShavitBook2nd,
title = "The Art of Multiprocessor Programming, Second Edition",
author = "Maurice Herlihy and Nir Shavit and Victor Luchangco and Michael Spear",
year = "2020",
doi = "10.1016/C2011-0-06993-4",
isbn = "9780123914064",
publisher = "Elsevier",
}

@inproceedings{PurcellHarris05,
author = {Purcell, Chris and Harris, Tim},
title = {Non-blocking hashtables with open addressing},
year = {2005},
isbn = {3540291636},
publisher = {Springer-Verlag},
address = {Berlin, Heidelberg},
url = {https://doi.org/10.1007/11561927_10},
doi = {10.1007/11561927_10},
booktitle = {Proceedings of the 19th International Conference on Distributed Computing (DISC)},
pages = {108–-121},
numpages = {14},
location = {Cracow, Poland},
}

@article{ShalevShavit06,
  author  = {Shalev, Ori and Shavit, Nir},
  title   = {Split-Ordered Lists: Lock-Free Extensible Hash Tables},
  journal = {Journal of the ACM},
  volume  = {53},
  number  = {3},
  pages   = {379--405},
  year    = {2006}
}

@inproceedings{Michael02,
  author    = {Michael, Maged M.},
  title     = {High Performance Dynamic Lock-Free Hash Tables and List-Based Sets},
  booktitle = {Proceedings of the Fourteenth Annual ACM Symposium on Parallel Algorithms and Architectures (SPAA)},
  year      = {2002},
  pages     = {73--82},
  publisher = {ACM}
}

@inproceedings{ShunBlellochSPAA14,
author = {Shun, Julian and Blelloch, Guy E.},
title = {Phase-Concurrent Hash Tables for Determinism},
year = {2014},
isbn = {9781450328210},
publisher = {Association for Computing Machinery},
address = {New York, NY, USA},
url = {https://doi.org/10.1145/2612669.2612687},
doi = {10.1145/2612669.2612687},
booktitle = {Proceedings of the 26th ACM Symposium on Parallelism in Algorithms and Architectures (SPAA)},
pages = {96-–107},
numpages = {12},
location = {Prague, Czech Republic},
}

@inproceedings{HIHashTableSTOC25,
author = {Attiya, Hagit and Bender, Michael A. and Farach-Colton, Mart\'{\i}n and Oshman, Rotem and Schiller, Noa},
title = {History-Independent Concurrent Hash Tables},
year = {2025},
isbn = {9798400715105},
publisher = {Association for Computing Machinery},
address = {New York, NY, USA},
url = {https://doi.org/10.1145/3717823.3718283},
doi = {10.1145/3717823.3718283},
booktitle = {Proceedings of the 57th Annual ACM Symposium on Theory of Computing (STOC)},
pages = {1283-–1294},
numpages = {12},
location = {Prague, Czechia},
}

@unpublished{Knuth1963OpenAddressing,
  author       = {Donald E. Knuth},
  title        = {Notes on ``Open'' Addressing},
  year         = {1963},
  month        = {July},
  note         = {Unpublished memorandum},
}

@inproceedings{HerlihyShavitTzafrir08,
author = {Herlihy, Maurice and Shavit, Nir and Tzafrir, Moran},
title = {Hopscotch Hashing},
year = {2008},
isbn = {9783540877783},
publisher = {Springer-Verlag},
address = {Berlin, Heidelberg},
booktitle = {Proceedings of the 22nd International Symposium on Distributed Computing (DISC)},
pages = {350-–364},
numpages = {15},
location = {Arcachon, France},
doi = {10.1007/978-3-540-87779-0_24},
}

@misc{Click2007,
    title = {A Lock-Free Wait-Free Hash Table},
    url = {http://www.stanford.edu/class/ee380/Abstracts/070221},
    author = {Cliff Click},
    year = {2007},
    note = {Accessed on 2024-07-04}
}

@INPROCEEDINGS{LockFreeCuckoo,
  author={Nguyen, Nhan and Tsigas, Philippas},
  booktitle={2014 IEEE 34th International Conference on Distributed Computing Systems}, 
  title={Lock-Free Cuckoo Hashing}, 
  year={2014},
  volume={},
  number={},
  pages={627-636},
 }

@inproceedings{LiLYZ23,
  author={Li, Tianxiao and Liang, Jingxun and Yu, Huacheng and Zhou, Renfei},
  booktitle={2023 IEEE 64th Annual Symposium on Foundations of Computer Science (FOCS)}, 
  title={Tight Cell-Probe Lower Bounds for Dynamic Succinct Dictionaries}, 
  year={2023},
  volume={},
  number={},
  pages={1842-1862},
  doi={10.1109/FOCS57990.2023.00112}}

@inproceedings{BFKKL22,
author = {Bender, Michael A. and Farach-Colton, Mart\'{\i}n and Kuszmaul, John and Kuszmaul, William and Liu, Mingmou},
title = {On the optimal time/space tradeoff for hash tables},
year = {2022},
isbn = {9781450392648},
publisher = {Association for Computing Machinery},
address = {New York, NY, USA},
url = {https://doi.org/10.1145/3519935.3519969},
doi = {10.1145/3519935.3519969},
booktitle = {Proceedings of the 54th Annual ACM SIGACT Symposium on Theory of Computing (STOC)},
pages = {1284--1297},
numpages = {14},
location = {Rome, Italy},
}

@InProceedings{blellochWei20,
  author =	{Blelloch, Guy E. and Wei, Yuanhao},
  title =	{{LL/SC and Atomic Copy: Constant Time, Space Efficient Implementations Using Only Pointer-Width CAS}},
  booktitle =	{34th International Symposium on Distributed Computing (DISC 2020)},
  pages =	{5:1--5:17},
  series =	{Leibniz International Proceedings in Informatics (LIPIcs)},
  ISBN =	{978-3-95977-168-9},
  ISSN =	{1868-8969},
  year =	{2020},
  volume =	{179},
  editor =	{Attiya, Hagit},
  publisher =	{Schloss Dagstuhl -- Leibniz-Zentrum f{\"u}r Informatik},
  address =	{Dagstuhl, Germany},
  URL =		{https://drops.dagstuhl.de/entities/document/10.4230/LIPIcs.DISC.2020.5},
  URN =		{urn:nbn:de:0030-drops-130831},
  doi =		{10.4230/LIPIcs.DISC.2020.5}
}

@inproceedings{Moir97,
author = {Moir, Mark},
title = {Practical implementations of non-blocking synchronization primitives},
year = {1997},
isbn = {0897919521},
publisher = {Association for Computing Machinery},
address = {New York, NY, USA},
url = {https://doi.org/10.1145/259380.259442},
doi = {10.1145/259380.259442},
booktitle = {Proceedings of the Sixteenth Annual ACM Symposium on Principles of Distributed Computing},
pages = {219-–228},
numpages = {10},
location = {Santa Barbara, California, USA},
series = {PODC '97}
}

@inproceedings{AndersonMoir95,
author = {Anderson, James H. and Moir, Mark},
title = {Universal constructions for multi-object operations},
year = {1995},
isbn = {0897917103},
publisher = {Association for Computing Machinery},
address = {New York, NY, USA},
url = {https://doi.org/10.1145/224964.224985},
doi = {10.1145/224964.224985},
booktitle = {Proceedings of the Fourteenth Annual ACM Symposium on Principles of Distributed Computing},
pages = {184-–193},
numpages = {10},
location = {Ottowa, Ontario, Canada},
series = {PODC '95}
}

@inproceedings{JayantiPetrovic03,
author = {Jayanti, Prasad and Petrovic, Srdjan},
title = {Efficient and practical constructions of LL/SC variables},
year = {2003},
isbn = {1581137087},
publisher = {Association for Computing Machinery},
address = {New York, NY, USA},
url = {https://doi.org/10.1145/872035.872078},
doi = {10.1145/872035.872078},
booktitle = {Proceedings of the Twenty-Second Annual Symposium on Principles of Distributed Computing},
pages = {285-–294},
numpages = {10},
location = {Boston, Massachusetts},
series = {PODC '03}
}

@inproceedings{IsraeliRappoportPODC94,
author = {Israeli, Amos and Rappoport, Lihu},
title = {Disjoint-Access-Parallel Implementations of Strong Shared Memory Primitives},
year = {1994},
isbn = {0897916549},
publisher = {Association for Computing Machinery},
address = {New York, NY, USA},
url = {https://doi.org/10.1145/197917.198079},
doi = {10.1145/197917.198079},
booktitle = {Proceedings of the Thirteenth Annual ACM Symposium on Principles of Distributed Computing},
pages = {151–-160},
numpages = {10},
location = {Los Angeles, California, USA},
series = {PODC '94}
}

@manual{arm_ddi0487_mb,
  title        = {{Arm Architecture Reference Manual for A-profile architecture}},
  author       = {{Arm Ltd.}},
  organization = {Arm Ltd.},
  note         = {Document ID: DDI 0487, Issue M.b},
  year         = {2026}, 
  url          = {https://developer.arm.com/documentation/ddi0487/}
}

\end{document}